\documentclass[submission,copyright,creativecommons]{eptcs}
\providecommand{\event}{AiML 2026} 

\usepackage{aiml26}

\usepackage{iftex}

\ifpdf
  \usepackage{underscore}         % Only needed if you use pdflatex.
  \usepackage[T1]{fontenc}        % Recommended with pdflatex
\else
  \usepackage{breakurl}           % Not needed if you use pdflatex only.
\fi

\usepackage{xcolor}
\usepackage{latexsym}
\usepackage{amsmath}
\usepackage{amssymb}
\usepackage{mathrsfs} % makes \mathscr work properly
\usepackage{tikz}
\usetikzlibrary{calc}

\newcommand{\dis}{\vee}
\newcommand{\vp}{\varphi}

\newcommand{\lb}{\langle}
\newcommand{\rb}{\rangle}

\newcommand{\dfnotion}[1]{\emph{#1}}
\newcommand{\rmbr}[1]{{\rm{(}\/}#1\/{\rm{)}}}

\newcommand{\type}{t}
\newcommand{\sub}{\mathop{\textit{sub}}}
\newcommand{\sur}[2]{#2^{\mathbf{#1}}}

\newcommand{\rsur}[2]{#2\uparrow^{\mathbf{#1}}}
\newcommand{\adp}{\textit{adp}}
\newcommand{\md}{\textit{md}}
\newcommand{\dist}{\textit{dist}}
\newcommand{\qs}{\pmb{q}}
\newcommand{\QS}{\pmb{Q}}

\newcounter{mthm}

\theoremstyle{plain}
\newtheorem{mtheorem}[mthm]{Theorem}

\newcommand{\Log}{\textup{Log}}
\newcommand{\Logd}{\mathop{\Log_d}}
\newcommand{\Loged}{\mathop{\Log^=_d}}
\newcommand{\Logecd}{\mathop{\Log_{\textit{cd}}^=}}
\newcommand{\Logexd}{\mathop{\Log_{\textit{xd}}^=}}

\newcommand{\SIG}{\Phi}
\newcommand{\FL}{\Phi}
\newcommand{\dom}{\mathop{\mathsf{dom}}}

\title{Fusions of One-Variable First-Order Modal Logics}
\author{Roman Kontchakov
\institute{Birkbeck, University of London, UK}
\email{r.kontchakov@bbk.ac.uk}
\and
Dmitry Shkatov
\institute{University of the Witwatersrand,\\ Johannesburg,
South Africa}
\email{dmitry.shkatov@wits.ac.za}
\and 
Frank Wolter
\institute{University of Liverpool, UK}
\email{wolter@liverpool.ac.uk}
}

\newcommand{\titlerunning}{Fusions of One-Variable First-Order Modal Logics}
\newcommand{\authorrunning}{R. Kontchakov, D. Shkatov \& F. Wolter}

\hypersetup{
  bookmarksnumbered,
  pdftitle    = {\titlerunning},
  pdfauthor   = {\authorrunning},
  pdfsubject  = {Modal Logic},               % Consider adding a more appropriate subject or description
  pdfkeywords = {modal logic, fusions, Kripke completeness, global consequence, decidability, finite model property, Diophantine equations} % Uncomment and enter keywords specific to your paper
}

\begin{document}
\maketitle

\begin{abstract}
We investigate preservation results for the independent fusion of one-variable first-order modal logics. We show that, without equality, Kripke completeness and decidability of global and local consequence relations are preserved, under both expanding and constant domain semantics. By contrast, Kripke completeness and decidability are not preserved for fusions with equality and non-rigid constants (or, equivalently, counting up to one), again for the global and local consequence and under both expanding and constant domain semantics. This result is shown by encoding Diophantine equations. Even without equality, the finite model property is  preserved only in the local case. Finally, we view fusions of one-variable modal logics as fusions of propositional modal logics sharing an \textbf{S5} modality and provide a general sufficient condition for transfer of Kripke completeness and decidability (but not of  finite model property).
\end{abstract}

% !TEX root =  fusions-submission.tex

\section{Introduction}
The study of combined modal logics has been an active area of research for many years~\cite{Gabbay2003Fibred,Kurucz2007Combining,DBLP:journals/jsyml/ZanardoSS01}. Only a few methods of combining modal systems have proved to be robust in the sense that key properties of the component logics are always inherited by the resulting combined logic. A central example of such a robust combination is fusion: a basic method in which the component logics are combined without introducing axioms that mix their modal operators, and, correspondingly, without imposing any semantic constraints that relate the distinct accessibility relations. In fact, properties such as Kripke completeness, decidability, finite model property, and Craig interpolation are all preserved under forming the fusion of two (or more) classical propositional modal logics, for both the local and global consequence relations~\cite{Thomason1980,DBLP:journals/jsyml/KrachtW91,FineSchurz1996,DBLP:conf/aiml/Wolter96,DBLP:journals/jair/BaaderLSW02,DBLP:conf/lpar/GhilardiS03,DBLP:journals/iandc/BaaderGT06,DBLP:conf/calco/DahlqvistP11}. 

The aim of this contribution is to initiate a systematic investigation into the extent to which these preservation results for fusions of propositional modal logics can be lifted to (fragments of)
first-order modal logics and corresponding many-dimensional modal logics. We focus in this paper on one-variable first-order modal logics, either without equality or with equality and non-rigid constants, and on corresponding commutators and products of propositional modal logics. These logics have been studied extensively~\cite{DBLP:journals/igpl/GabbayS98,DBLP:journals/tocl/HampsonK15,DBLP:journals/lmcs/KuruczWZ25,BezhanishviliKhan2026,ShehtmanShkatov2019} as they are often computationally and semantically better behaved than full first-order modal logics. For fusions without equality our main result is as follows:
\begin{mtheorem}[Equality-Free Fusion]\label{thm:1}
	Kripke completeness and decidability are preserved under fusions of Kripke complete one-variable first-order modal logics, for both the local and global consequence and under both expanding and constant domain semantics.
	The finite model property is preserved for the local consequence but not for the global consequence. Moreover, the finite model property fails for the global consequence for all non-trivial fusions under both expanding and constant domain semantics.
\end{mtheorem}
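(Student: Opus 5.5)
The plan is to adapt the standard preservation proof for fusions of propositional modal logics (Kracht--Wolter, Fine--Schurz) to the one-variable first-order setting. I would use the well-known translation of one-variable first-order modal formulas into propositional bimodal formulas in which the quantifier $\exists x$ becomes an \textbf{S5} modality. Under this view, the fusion $L_1\otimes L_2$ corresponds to two logics that share the \textbf{S5} modality; this is the sharpened form of the theorem announced in the abstract. The approach has four stages: first, for each $i$, represent a model of $L_i$ via \emph{quasi-states} or types; second, reduce satisfiability in the fusion to satisfiability problems in the components using a ``surrogate'' technique; third, build models by amalgamation; fourth, handle the finite model property separately.

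\textbf{Reduction step.} Given a formula $\vp$ in the language of the fusion, I would replace every maximal subformula beginning with a $\Box_2$-modality by a fresh unary predicate. Each such predicate is a surrogate, written $\sur{1}{\psi}$, and $\rsur{1}{\psi}$ denotes the reverse substitution; the same is done symmetrically for index $2$. A \emph{world type} is the set of subformulas of $\vp$ true at a pair $(w,d)$. A \emph{quasi-state} $\qs$ is the set of types realised in a world; it must also record which types are realised by how many elements, but since there is no equality, ``at least one'' suffices. The key lemma to prove is the following. $\vp$ is satisfiable in $L_1\otimes L_2$ iff there is a set $\QS$ of quasi-states, closed under the appropriate conditions, such that for each $\qs\in\QS$ and each $i$ the formula
\[
\sur{i}{\chi_\qs}\wedge \Box^{*}_i\bigvee_{\qs'\in\QS}\sur{i}{\chi_{\qs'}}
\]
is $L_i$-satisfiable. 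Here $\chi_\qs$ describes $\qs$ as $\bigwedge_{t\in\qs}\exists x\, t\wedge\forall x\bigvee_{t\in\qs}t$, and $\Box^*_i$ is the universal or global closure. In the global case, the global consequence relation of $L_i$ supplies $\Box^*_i$ directly. In the local case, one instead uses the standard ``modal depth'' bound and iterated boxes. The right-to-left direction is the model construction. The left-to-right direction is the routine reading-off of $\QS$.

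\textbf{Model construction (main obstacle).} I would build a fusion model by a countable zig-zag. Start with a component-$1$ model for the quasi-state of the root. At each world, hook in a component-$2$ model whose root realises the same quasi-state. The next step is where the first-order setting departs from the propositional one: the domains must be matched. The $L_1$-model at a world $w$ and the $L_2$-model rooted at a copy of $w$ have domains whose elements realise the same set of types, but possibly with different multiplicities. Because there is no equality, I would fix this by \emph{duplicating domain elements}, taking $\omega$ copies of each element. Disjoint unions and such duplication preserve truth in Kripke complete logics, since frame validity of one-variable first-order modal logics is preserved under these operations. After duplication, countably infinite sets realise each type, so a type-preserving bijection between the domains exists. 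Under expanding domains, the bijection must also respect the inclusion of domains along accessibility. I expect this to be the hardest part: one must ensure that the new elements added in successor worlds can be matched coherently at every later stage. I would handle it by building the whole construction as an $\omega$-chain with bookkeeping of ``pending'' elements. Kripke completeness and decidability then follow. Completeness holds because the glued structure is based on a frame that is a fusion of $L_i$-frames. Decidability holds because $\QS$ ranges over a finite set and the conditions are decidable by assumption.

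\textbf{Finite model property.} For the local case, I would use the finite model property of the $L_i$ at each gluing step and stop the zig-zag at depth $\md(\vp)$; this produces a finite tree-shaped fusion model. For the global case, the construction cannot terminate, and the theorem claims a failure for every non-trivial fusion. So I would exhibit a single formula set that forces infinity in any non-trivial fusion. The idea is to use the two modalities together with $\exists x$ to enforce a ``successor'' pattern: $\Box_1$ and $\Box_2$, acting globally, force each element to generate a new element that satisfies some predicate not yet realised. This uses the fact that the two relations are independent, so together they simulate a grid-like structure, while non-triviality guarantees that both relations are nonempty somewhere. I would try a formula in which a predicate $P$ is propagated via $\Diamond_1$ and $\exists x$ alternations so that a finite model would yield a contradiction by counting. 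I am least certain of this step and would check it first against the trivial cases the theorem excludes.
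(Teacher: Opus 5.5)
Your treatment of the last clause (failure of the global fmp for \emph{every} non-trivial fusion) has a genuine gap, and its guiding intuition is off. Fusion frames carry no grid structure. A formula that forces growth via $\Diamond_1$ alone, or via $\Diamond_1\Diamond_2$ with no further control, cannot work uniformly. For example, if both relations are equivalence relations, already $\exists x\,(\neg Q(x)\wedge\Diamond_1\Diamond_2 Q(x))\wedge\forall x\,(Q(x)\to\Box_1\Box_2 Q(x))$ is globally unsatisfiable: reflexivity and symmetry let $Q$ flow back to the starting world.

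The missing idea is a marker letter $p$ and the composite operator $\Diamond_p\psi=\Diamond_1(\neg p\wedge\Diamond_2(p\wedge\psi))$, used in $\varphi = p\to[\exists x\,(\neg Q(x)\wedge\Diamond_p Q(x))\wedge\forall x\,(Q(x)\to\Box_p Q(x))]$. If $\varphi$ holds globally and $p$ holds somewhere, there is an infinite $\Diamond_p$-chain along which the extension of $Q$ strictly increases, so the model is infinite.

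Equally important, and absent from your sketch, is the other half. You must show $\varphi\not\vdash^*_{L_1\otimes L_2}\neg p$ for every pair of classes with non-trivial frames, by building a model on a frame in $\mathcal{C}_1\otimes\mathcal{C}_2$. The paper does this by chaining copies of non-trivial $\mathfrak{G}_1,\mathfrak{G}_2$ into a limit cactus forming an infinite alternating path, with constant domain $\mathbb{N}$ and $Q$ gaining one element per step. Note that ``non-trivial'' means an edge between \emph{distinct} worlds, which is what the $\neg p/p$ alternation needs; a non-empty relation is not enough.

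Your domain-matching device is the second problem. In the global case your plan is essentially the paper's criterion (G3) with cactus grafting, and duplication can be made to work. You need a cardinal above all domain sizes rather than $\omega$, since $\omega$ copies presuppose countable domains; this is legitimate because you can fix one component model per quasistate in advance.

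Uniform infinite duplication, however, destroys finiteness. The ``finite tree-shaped fusion model'' you promise for the local fmp transfer therefore has infinite domains. This is essentially the homogeneous-model technique of Section~5, which the paper notes cannot directly yield fmp transfer.

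The paper instead represents elements as runs. When grafting at $v$, it replaces each run $r$ through $v$ by all unions $r\cup r^v$ with runs $r^v$ of the graft having the same type at $v$. This is a fibred product over types: it stays finite for finite components, and it handles expanding domains without any ``pending elements'' bookkeeping, since new elements of a graft are just new runs.

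Finally, your local case is only gestured at. The components are only locally complete and decidable, so the quasistate constraint holds only within bounded depth of each graft root. The truth lemma must therefore be stated for types tapered by the remaining $\Box_i$-depth, and grafting must stop after $\adp(\varphi)$ alternations. The cut-off worlds must still be covered by frames of the other class (the paper's degenerate cacti) so that the resulting frame lies in $\mathcal{C}_1\otimes\mathcal{C}_2$.
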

The proof builds on the cactus model construction~\cite{DBLP:journals/logcom/GorankoP92}, extending it to  one‑variable modal logic via the quasimodel technique~\cite{KuruczWolterZakharyaschevGabbay2003}.  
For one‑variable first‑order modal logics with equality, the presence of equality can only have a substantial impact if one also allows non‑rigid constants or, equivalently, counting quantifiers such as $\exists^{=1}x\, P(x) \equiv \exists x\, \bigl[P(x) \wedge \forall y\, \bigl(P(y) \rightarrow (x=y)\bigr)\bigr]$, which expresses that predicate $P$ is interpreted as a singleton.
In this setting, drawing on known decidability results for the one‑variable first-order extension of the modal logic of inequality (also known as the `elsewhere' logic) and for one‑variable temporal logics~\cite{DBLP:journals/tocl/HampsonK15,DBLP:phd/ethos/Hampson16,DBLP:journals/apal/GabelaiaKWZ06} and by encoding Diophantine equations and Minsky machines to show undecidability, we establish the following non‑transfer result:
\begin{mtheorem}[Fusion with Equality]\label{thm:2}
	Decidability and recursive axiomatisability are not preserved for fusions of one-variable first-order modal logics with equality, for both the local and global consequence and under both expanding and constant domain semantics. 
	Moreover, the global consequence is undecidable for all non-trivial fusions for constant domain semantics.
\end{mtheorem}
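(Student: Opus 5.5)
We need two decidable one-variable logics with equality (with non-rigid constants, or equivalently counting up to one) whose fusion is undecidable and not recursively axiomatisable. We also need to show that global consequence is undecidable for every non-trivial fusion under constant domains. The plan is a reduction from Hilbert's tenth problem for the first claim and from Minsky machines for the second.

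For the components we take logics whose decidability is already known. The first is the one-variable first-order extension of the elsewhere logic $\mathbf{Diff}$, or of $\mathbf{K}$ (decidability by Hampson and Kurucz). The second is a one-variable temporal or modal logic with equality, such as one-variable $\mathbf{K}$ or $\mathbf{S5}$ with counting up to one (decidable via quasimodels, as in Gabelaia et al.). Each component on its own has a single accessibility relation, and counting up to one only pins down singleton predicates within one world.

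In the fusion we use the two independent relations $R_1,R_2$ to build a grid-like structure whose points are worlds. Non-rigid constants, or $\exists^{=1}$, let us \emph{count}: a unary predicate $N$ whose extension must grow by exactly one element along an $R_1$-step can be forced, e.g., by requiring a fresh singleton $\exists^{=1}x\,(N(x)\wedge\neg\Diamond_1^{-}\ldots)$. Since modalities cannot look backwards, we use $\Box_1$ to propagate $N$ forward and express ``exactly one new element'' via a predicate $D$ with $\exists^{=1}x\,D(x)$ and $\Box_1\forall x\,(N(x)\leftrightarrow \ldots)$.

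With such counters we encode the natural numbers as sizes of extensions of predicates. Addition and multiplication are then realised by walking with the other modality: $R_2$ transports one predicate while $R_1$ increments another, the usual two-dimensional encoding of multiplication as a rectangle area. A Diophantine equation $p(\bar x)=q(\bar x)$ thus becomes a formula $\varphi_E$ that is satisfiable in the fusion iff $E$ has a solution in $\mathbb N$. Counting up to one lets us compare two finite extensions by matching them element by element along a chain of worlds. This gives undecidability of satisfiability, for both the local and the global versions.

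Both local and global consequence are covered. Satisfiable formulas are recursively enumerable, since a solution yields a finite witness model that can be checked. If the fusion were recursively axiomatisable, its validities would be r.e. as well, so satisfiability would be decidable, a contradiction. Hence the fusion is not recursively axiomatisable. For constant versus expanding domains, we relativise the encoding to an existence predicate, or note that the encoding only uses elements already present, so the same reduction works in both semantics.

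For the second claim we take arbitrary non-trivial component logics under constant domains. Global consequence gives us universal constraints $\Box$-free axioms holding at every world. We encode a two-counter Minsky machine: $R_1$ serves as the successor in time and $R_2$ as a second direction. Counter values are sizes of predicate extensions, incremented and decremented using $\exists^{=1}$. Constant domains are crucial here because the same individuals persist in every world, so a counter's extension can be transported; only non-triviality (some world with a successor) is needed to realise steps.

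The main obstacle is this second claim. With arbitrary frames, the relations need not be functional, so we must ensure that every successor faithfully carries the configuration. I would try to enforce determinism via global axioms $\Box_i$-uniform, making all $R_i$-successors agree on counter predicates, and reduce to non-halting, i.e.\ the complement of recurrence, to avoid needing infinite chains to be well-founded.
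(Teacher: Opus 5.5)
\textbf{The first claim has a genuine gap.} You never fix two decidable components whose fusion can actually encode Diophantine equations, and the candidates you list cannot work. One-variable polymodal $\mathbf{K}$ and $\mathbf{S5}$ with equality and non-rigid constants are decidable (the paper cites this explicitly), so pairs built only from such logics are ruled out.

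The missing idea is a mechanism that forces the encoded numbers to be \emph{finite}. A predicate whose extension grows by one along each $R_1$-step is finite at a world only if that world lies at finite distance from an empty start. Nothing in your construction forces this over arbitrary $\mathbf{K}$-, $\mathbf{S5}$- or $\mathbf{Diff}$-frames. Without finiteness the reduction is unsound: infinite cardinals satisfy, e.g., $x\cdot x = 2\cdot y\cdot y$, which has no solution in $\mathbb{N}\setminus\{0\}$, so $\varphi_E$ would be satisfiable although $E$ is unsolvable.

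This is why the paper takes $\Logecd\mathcal{D}$ and $\Logecd\mathcal{D}_{\text{fin}}$ as components. The difference frames in $\mathcal{D}$ supply the universal modality $\Box^+$ over a set $W$ of worlds. The finite frames in $\mathcal{D}_{\text{fin}}$ supply finite cardinalities, which are transferred to predicate extensions by $\textsf{card}(I,P)$: a fresh non-rigid constant $c$ takes pairwise distinct values at the worlds of the finite $R_C$-component $W_w$, and $P$ collects exactly these values, so $|P|_w=|W_w|$.

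Your ``rectangle area'' picture of multiplication presupposes a grid. But $R_1$ and $R_2$ are unrelated in a fusion, and your sketch gives no way of enforcing one. The paper instead realises $y=z_1+z_2$ as a disjoint union. It realises $y=z_1\times z_2$ as the union of $|P_{z_2}|$ pairwise disjoint $|P_{z_1}|$-element sets $Q^e$, placed at distinct worlds of $W$ marked by $J^e$ and gathered with $\Diamond^+$.

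\textbf{Smaller points.} Your step ``satisfiable formulas are r.e.'' is not justified: the fusion contains $\Logecd\mathcal{D}$, which lacks the fmp. You only need that $E\mapsto\neg\varphi_E$ reduces the $\Pi^0_1$-complete set of unsolvable systems to the logic, so the logic is not r.e.

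Relativising to an existence predicate simulates \textit{xd} inside \textit{cd}, so it transfers undecidability in one direction only. The paper gets both semantics at once because all relations in $\mathcal{D}$ and $\mathcal{D}_{\text{fin}}$ are symmetric, which makes expanding domains constant on connected components.

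For the second claim, your route (Minsky machines, counters as predicate extensions, along the lines of Degtyarev et al.) is the paper's, with details deferred to the full version. One caveat: in an arbitrary class, $R_1$ alone may have no paths of length $2$ (e.g.\ disjoint unions of a single edge). Time must therefore advance along alternating $R_1R_2$-steps selected by a marker $p$, as in the ``every dog has its day'' cactus used to refute the global $d$-fmp.
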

We stress that it would be a mistake to infer that fusions of one‑variable modal logics with equality are broadly undecidable. In fact, when based on standard modal systems such as polymodal 
\textbf{K} or \textbf{S5}, one‑variable modal logics with equality remain decidable, even in \textsc{coNExpTime}~\cite{DBLP:journals/corr/abs-2509-08165}.

Since the one‑variable fragment of first‑order logic without equality can be identified with propositional modal logic \textbf{S5}, one‑variable first‑order modal logics can often be viewed as propositional multimodal logics equipped with an \textbf{S5} modality. Working within this perspective, we provide a sufficient condition for the transfer of Kripke completeness and decidability in fusions of propositional modal logics that share an \textbf{S5} modality. We say that a propositional modal logic admits $E$-homogeneous models for an equivalence relation $E$ interpreting an \textbf{S5} modality $\Box_{E}$ if, for sufficiently large infinite cardinals~$\kappa$, it is complete for Kripke models in which, for every $E$‑equivalence class $W$ of worlds, each formula is either not satisfied anywhere in $W$ or is satisfied in $\kappa$-many worlds of $W$. 
\begin{mtheorem}[Propositional Fusion with Shared \textbf{S5}]\label{thm:3}
	Kripke completeness and decidability are preserved for fusions of propositional modal logics admitting $E$-homogeneous models for a shared {\bf S5} modality~$\Box_{E}$, for both the local and global consequence.
\end{mtheorem}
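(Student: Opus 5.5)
The strategy is to reduce the combined logic to its components through a cactus-style unravelling. The difference from the purely propositional case is that the shared $\Box_E$ modality is treated as part of the common ``substrate''. We fix logics $L_1$ (in $\Box_E,\Box_1$) and $L_2$ (in $\Box_E,\Box_2$), both admitting $E$-homogeneous models for a cardinal $\kappa$ chosen large enough for both. The basic objects are $E$-clusters: pairs $(W,\tau)$ where $W$ is an $E$-class and $\tau$ assigns to each world a type, i.e.\ a maximal consistent subset of a finite closure set $\FL$ of the input formula. For a homogeneous model, what matters about a cluster is only the \emph{set} $T$ of types realised in it, since each occurs $\kappa$ times.

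First, abstraction and surrogates. For $i=1,2$ we replace the subformulas of the form $\Box_{3-i}\psi$ by fresh propositional variables. We then call a set $T$ of full $\FL$-types \emph{$i$-realisable} if there is a model of $L_i$ with an $E$-class whose realised types, read through the surrogates, are exactly $T$, each realised $\kappa$ times. For local consequence we additionally record the root world. For global consequence we require that every world of the model has a type in the relevant set. Each model of $L_1\otimes L_2$ can be made $E$-homogeneous componentwise, but a combined model need not be. For this reason the completeness direction goes through these abstract conditions instead of transforming a given combined model.

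Second, the main construction, which is the central step. Given a ``quasimodel'' consisting of a family of $i$-realisable type-sets that is closed under the witness requirements, we build a fused model by an infinite alternating cactus construction. At stage $0$ we take an $L_1$-model for the root. Then, for each world $w$ of the current model that so far lies in an $L_1$-part only, we take an $L_2$-model $M$ realising a cluster with the same type-set as $w$'s $E$-class in the current model. We glue $M$ along an \emph{entire} $E$-class, not at a single point. This gluing is where homogeneity is essential. Both clusters realise the same types with multiplicity exactly $\kappa$, so there is a type-preserving bijection between the $E$-class in the current model and the corresponding $E$-class in $M$. The $E$-relations then agree, while the $R_1$ and $R_2$ relations come from different pieces. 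Clusters of $M$ other than the glued one are new and receive $L_1$-models at the next stage, and so on. We close the construction under $\omega$ stages and verify a truth lemma by induction on formulas, using the surrogates. The resulting frame has $R_1$-reducts that are disjoint unions of $L_1$-frames glued along $E$-classes. The hard point is to show that the frame belongs to the class for $L_1\otimes L_2$. Each $(E,R_i)$-reduct must be a disjoint union of frames of $L_i$, which needs a careful tree-shaped gluing so that no new $R_i$-paths cross pieces. Condition~(ii) on $\kappa$ must also survive the construction: the glued $L_{3-i}$-piece must not create new $E$-classes inside the $L_i$-piece. I expect this bookkeeping to be the main obstacle, and I would handle it by alternating strictly and indexing pieces by a tree.

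Third, transfer. Completeness follows because the fusion's axioms are sound on the constructed frames, and satisfiability in the fusion is equivalent to the existence of such a quasimodel. For decidability, the question of whether a given set $T$ (together with a root type, or the global constraint) is $i$-realisable reduces to an $L_i$ consequence question. One conjoins $\Box_E\bigvee_{t\in T}t$ with $\Diamond_E t$ for each $t\in T$, with the surrogates read as ordinary variables. This uses the $E$-homogeneous completeness of $L_i$ to move from plain satisfiability to $\kappa$-homogeneous satisfiability. There are finitely many candidate families of type-sets, so we enumerate them, check closure, and query the decision procedures of $L_1$ and $L_2$. This yields decidability of the fusion for both local and global consequence.
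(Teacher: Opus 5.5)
\paragraph{Global case.} Your construction works here, but it is not the paper's route. With strict alternation, every $E$-class lies in exactly one $L_1$-piece and exactly one $L_2$-piece. So each $(R_i,E)$-reduct is a plain disjoint union of $L_i$-frames, homogeneity supplies the type-preserving bijection at each graft, and the truth lemma goes through via surrogates.

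The paper avoids the $\omega$-stage cactus altogether. It takes an $(E,\kappa')$-homogeneous model $\mathfrak{M}_1$ of $\sur{1}{\varphi}\land\sur{1}{\hat{\QS}}$ refuting $\sur{1}{\psi}$, and one $\mathfrak{M}_2$ of $\sur{2}{\hat{\QS}}$. Using disjoint unions, every $\qs\in\QS$ is realised by equally many $E$-classes in both. It then fixes one bijection $f\colon W_1\to W_2$ preserving types and $E$-classes, and interprets $\Box_2$ on $W_1$ by $f^{-1}[R_2]$. The resulting frame is obviously in the class, so the bookkeeping you flag as the main obstacle never arises. Your version, in exchange, matches type-sets only class by class and never has to count classes.

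In either version you must still say where the family comes from when $\varphi\not\vdash^*_{L_1\otimes_E L_2}\psi$. Take $\QS=\{\qs\mid\varphi\not\vdash^*_{L_1\otimes_E L_2}\neg\hat{\qs}\}$ and argue by contraposition: an $L_i$-derivation over surrogates becomes a fusion derivation once the surrogated subformulas are substituted back. Saying that satisfiability is equivalent to the existence of a quasimodel does not by itself give completeness.

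\paragraph{Local case: the genuine gap.} Your closure condition is that every $E$-class of every glued piece realises a type-set from the family. That is a global constraint on the component models. Checking it is a question of the form $\sur{i}{\hat{\mathcal{F}}}\not\vdash^*_{L_i}\neg\sur{i}{\hat{T}}$, and turning non-derivability into a model needs global Kripke completeness of $L_i$. The local hypotheses give you neither: decidability and completeness of local and global consequence can come apart.

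Recording only the root class does not fix this. The other $E$-classes of a glued $L_2$-piece are then unconstrained, because the surrogates of $\Box_1$-formulas are free variables there. They may realise type-sets that no $L_1$-model realises, and then the next stage cannot be carried out.

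What is missing is the tapering of Section~\ref{sec:equality-free}, which the paper combines with $E$-class merging for the local case:
\begin{itemize}
\item constrain only a bounded neighbourhood of the root, by a local query in the style of \textbf{(L3)} with a bounded prefix of $\Box_i$ and $\Box_E$ in front of $\sur{i}{(\hat{\QS}_i(\varphi))}$;
\item let the relevant formula set shrink with modal depth as you move away from the root;
\item determine the admissible quasistates recursively from formulas of smaller alternation depth;
\item attach degenerate pieces with empty types beyond the horizon, making every graft along a whole $E$-class via homogeneity.
\end{itemize}
Only then does deciding the fusion use nothing beyond local consequence in $L_1$ and $L_2$.
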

The conditions of Theorem~\ref{thm:3} hold, for example, for Kripke complete  (semi)commutators with \textbf{S5} and for (expanding) product modal logics with \textbf{S5}. Hence, it provides
transfer results for corresponding one-variable modal logics. We note that,
in contrast to the proof of Theorem~\ref{thm:1} that uses quasimodels, one cannot directly prove transfer of the finite model property based on homogeneous models. Theorem~\ref{thm:3} addresses a problem posed by Baader, Ghilardi and Tinelli~\cite{DBLP:journals/iandc/BaaderGT06}.

% !TEX root =  fusions-submission.tex

\section{Preliminaries}
\label{sec:prelim}

We consider first-order (multi-)modal logics in languages with a single, fixed,
individual variable.  A~language $\mathcal{L}_M$ contains countably many unary
predicate letters, the individual variable $x$, the Booleans $\neg$ and~$\land$,
the quantifier symbol $\exists$, and a finite set $M$ of unary (necessity-like)
modal operators, denoted by~$\Box$, possibly with subscripts. Thus, in
$\mathcal{L}_M$, atomic formulas have the form $P(x)$, where $P$ is a predicate
letter.  Formulas of $\mathcal{L}_M$, denoted by $\vp, \psi, \ldots$, are built
from atomic ones in the usual way, the only allowed quantifier
being~$\exists x$.  The symbols $\bot$, $\to$, $\dis$, $\forall$, and
$\Diamond$ are standard abbreviations. We also denote~$\Box^{\leq n} \varphi = \vp \land \Box\vp \land \dots \land \Box^n\varphi$. Proposition letters~$p, q,\dots$ are
abbreviations for $\exists x\, P(x)$, $\exists x\, Q(x)$, etc. We also consider
languages with equality: $\mathcal{L}^=_M$ contains, in addition to
the symbols of $\mathcal{L}_M$, the binary predicate symbol $=$ and constant
symbols, denoted by $c_1, c_2,\dots$ Thus, in $\mathcal{L}^=_M$, a term $t$ is
either $x$ or a constant $c_i$, and atomic formulas have the form $P(t)$ or
$t_1 = t_2$.  We usually identify languages with sets of their formulas.  The
set of subformulas of a formula $\vp$ is denoted by $\mathop{\mathit{sub}} \vp$.

By a \dfnotion{one-variable \rmbr{first-order} $M$-logic}, we mean a set of
\mbox{$\mathcal{L}_M$-formulas} that includes the one-variable fragment of the
classical first-order logic $\mathbf{FO}$ and formulas
\mbox{$\Box \bigl(P(x) \to Q(x)\bigr) \to \bigl(\Box P(x) \to \Box Q(x)\bigr)$}, for all
$\Box \in M$, and is closed under Modus Ponens (MP), Generalisation
($\vp / \forall x\, \vp$), Necessitation ($\vp / \Box \vp$, for each
$\Box \in M$), and Substitution (of arbitrary one-variable formulas for formulas
of the form $P(x)$).\footnote{One-variable fragments of first-order modal logics
  are one-variable logics in our sense, but not every one-variable logic is a
  one-variable fragment of a first-order modal logic; see, e.g.,~\cite[Section
  3]{Shehtman2023}.}  If~$L$ is a one-variable $M$-logic and
$\Gamma \subseteq \mathcal{L}_M$, then $L + \Gamma$ denotes the smallest
one-variable $M$-logic including~\mbox{$L \cup \Gamma$}; we write $L + \vp$ for
$L + \{ \vp \}$.  We note that every one-variable $M$-logic contains the
converse Barcan formula
$\textit{CBF}_\Box = \Box \forall x\, P(x) \to \forall x\, \Box P(x)$, for each
$\Box \in M$.

Let $L$ be a one-variable $M$-logic and $\varphi,\psi\in \mathcal{L}_{M}$. We write $\varphi \vdash_L \psi$ and say that $\psi$ is a \dfnotion{local consequence of $\varphi$ in $L$}  if $\psi$ belongs to the smallest
set including $L \cup \{\varphi\}$ and closed under MP and Generalisation. By the deduction theorem, $L$ is decidable iff 
$\vdash_{L}$ is decidable. We write $\varphi \vdash_L^\ast \psi$ and say that $\psi$ is a \dfnotion{global consequence of $\varphi$ in~$L$} if $\psi$ belongs to the smallest set including $L \cup \{\varphi\}$ and closed under MP, Generalisation, and Necessitation
for all $\Box \in M$. We call $L$ \dfnotion{globally decidable} if 
	the global consequence in $L$ is decidable.

The \dfnotion{fusion} of a one-variable $M_1$-logic $L_1$ and a one-variable
$M_2$-logic $L_2$, where $M_1 \cap M_2 = \emptyset$, is the
smallest one-variable $(M_1 \cup M_2)$-logic $L_1 \otimes L_2$ that includes
$L_1 \cup L_2$.

We are mostly concerned with logics defined in terms of Kripke semantics.
A~\dfnotion{Kripke $M$-frame} is a tuple
$\mathfrak{F} = ( W, \{ R_\Box \}_{\Box \in M } )$, where $W$ is a non-empty
set of \dfnotion{worlds} and $R_\Box \subseteq W \times W$, for all $\Box\in M$.  We write $R_\Box(w)$ for $\{ w'\in W\mid wR_\Box w'\}$.
We say that $\mathfrak{F}$
is
\emph{non-trivial} if there are distinct $w, v \in W$ with 
$w R_{\Box} v$, for some $\Box \in M$.
A \dfnotion{Kripke
  $\textit{cd}$-model} based on $\mathfrak{F}$ is a pair
\mbox{$\mathfrak{M} = ( \mathfrak{F}, I )$}, where
\mbox{$I = \{ \lb \Delta, I_w \rb \mid w \in W \}$} is a family of classical
first-order models.  A \dfnotion{Kripke $\mathit{xd}$-model} based on $\mathfrak{F}$
is a pair $\mathfrak{M} = ( \mathfrak{F}, I )$, where
\mbox{$I = \{ \lb \Delta_w, I_w \rb \mid w \in W \}$} is a family of classical
first-order models satisfying the \dfnotion{expanding domains condition}:
$\Delta_w \subseteq \Delta_v$, for all $w, v \in W$ with~$w R_\Box v$ and
all $\Box \in M$. For languages with equality and constants, the classical
models associated with worlds are required to be \emph{normal} (i.e., $=$ is
interpreted as identity); the resultant Kripke models are called \emph{Kripke
  $d$-models with equality}, for $d\in\{\textit{xd}, \textit{cd}\}$; note that rigid interpretation of constants, i.e., $I_w (c_k) = I_v(c_k)$, for all
$w, v \in W$, is \emph{not} assumed. 
%\footnote{For results in this paper, we could, instead of $=$ and
%  non-rigid contstants, have used the quatifier ``there exists exactly one~$x$
%  such that \ldots".}  
We say that a Kripke model is \emph{finite} if its set of
worlds is finite and every classical model associated with its worlds is finite.

Since $\textit{cd}$-models are a special case of $\textit{xd}$-models, we define
truth and validity only for the latter.  Let
$\mathfrak{M} = ( W, \{ R_\Box \}_{\Box \in M}, \lb \Delta_w, I_w \rb_{w \in
  W} )$ be an $\textit{xd}$-model.  An \dfnotion{assignment} in $\mathfrak{M}$
is a map $\alpha$ sending $x$ to an element of the set
$\bigcup_{w \in W} \Delta_w$.  The \dfnotion{truth} of one-variable
$\mathcal{L}_M$- and $\mathcal{L}^=_M$-formulas at world $w$ of $\mathfrak{M}$ under assignment~$\alpha$ is
defined by recursion on the formula construction in the standard way, in particular: 
\begin{itemize}
%\item $\mathfrak{M},w \models^\alpha P(t)$ if $|t|_w^\alpha \in I_w(P)$;
%\item  $\mathfrak{M},w\models^\alpha \neg\vp$ if $\mathfrak{M},w\not\models^\alpha \vp$;
%\item  $\mathfrak{M},w\not\models^\alpha \bot$;
%\item $\mathfrak{M},w\models^\alpha \varphi_1\to\varphi_2$ if
%  $\mathfrak{M},w \not\models^\alpha \varphi_1$ or
%  $\mathfrak{M},w\models^\alpha \varphi_2$;
%\item $\mathfrak{M},w\models^\alpha \varphi_1\land\varphi_2$ if
%  $\mathfrak{M},w \models^\alpha \varphi_1$ and
%  $\mathfrak{M},w\models^\alpha \varphi_2$;
%\item $\mathfrak{M},w\models^\alpha \Box \varphi$ if
%  $\mathfrak{M},w' \models^\alpha \varphi$, for every $w' \in R_\Box(w)$;
%\item $\mathfrak{M},w \models^\alpha \forall x\,\varphi$ if
%  $\mathfrak{M},w\models^{\alpha'}\varphi$, for every $\alpha'$ such that
%  $\alpha'(x)\in \Delta_w$.
\item $\mathfrak{M},w \models^\alpha \exists x\,\varphi$ if
  $\mathfrak{M},w\models^{\alpha'}\varphi$, for some $\alpha'$ such that
  $\alpha'(x)\in \Delta_w$;
\item $\mathfrak{M},w \models^\alpha t_1 = t_2 $ if
  $|t_1|_w^\alpha = |t_2|_w^\alpha$, where $|t|_w^\alpha$ denotes the value of the term $t$ at $w$ under $\alpha$.
\end{itemize}
We say that $\vp$ is \dfnotion{true at} $w \in W$, and write
$\mathfrak{M}, w \models \vp$, if $\mathfrak{M}, w \models^\alpha \vp$ holds for
every $\alpha$ with $ \alpha(x) \in \Delta_w$. We also say that $\vp$ is
\dfnotion{true in $\mathfrak{M}$}, and write $\mathfrak{M} \models \vp$, if
$\mathfrak{M}, w \models \vp$ holds for every $w \in W$.  

Let
$d \in \{ \mathit{xd}, \mathit{cd} \}$. We write $\mathfrak{F}\models_d \varphi$ if $\varphi$ is true 
in every 
$d$-model based on~$\mathfrak{F}$. Then $\mathfrak{F}$ is a \emph{$d$-frame for} a logic $L$, in symbols $\mathfrak{F}\models_d L$, if $\mathfrak{F}\models_d \varphi$ for all $\varphi\in L$.
An $\mathcal{L}_M$-formula $\vp$ is
\dfnotion{$d$-valid} on a class $\mathscr{C}$ of Kripke $M$-frames (written as
$\mathscr{C} \models_{d} \vp$) if 
$\mathfrak{F}\models_d\vp$, for all 
$\mathfrak{F}\in\mathscr{C}$.  Analogous notions are defined for formulas and models with
equality.  We define the
following sets of formulas:
\begin{equation*}
  \Logd \mathcal{C} \  = \  \bigl\{ \vp\in\mathcal{L}_M \mid \mathcal{C} \models_d \vp \bigr\}
  \quad\text{ and }\quad
      \Loged \mathcal{C} \ = \  \bigl\{ \vp\in\mathcal{L}^=_M \mid \mathcal{C} \models_d \vp \bigr\}.
\end{equation*}
Note that, for every $\mathscr{C}$, we have
$\mathscr{C} \models_{\textit{cd}} \forall x\, \Box P(x) \to \Box \forall x\,
P(x)$ (the Barcan formula, $\textit{BF}_\Box$), for all $\Box \in M$.

Let $L$ be a one-variable $M$-logic, $\mathcal{C}$ a class of $M$-frames, and
$d \in \{\textit{xd}, \textit{cd} \}$.  We say $L$ is
\dfnotion{\mbox{$d$-complete} for~$\mathcal{C}$} if
$L = \mathop{\mathit{Log}}_{d} \mathscr{C}$, and that $L$ is \dfnotion{globally
  $d$-complete for $\mathcal{C}$} in case $\varphi\vdash_L^*\psi$ iff
$\mathfrak{M}\models\varphi$ implies $\mathfrak{M}\models\psi$, for every
$d$-model $\mathfrak{M}$ based on a frame from $\mathcal{C}$.  Note that global
$d$-completeness implies $d$-completeness, but the converse is not always true.
We say that $L$ has the \emph{finite $d$-model property \rmbr{$d$-fmp}} if, for every~\mbox{$\vp \notin L$}, there exists a finite $d$-model~$\mathfrak{M}$ such that
$\mathfrak{M} \models L$ and $\mathfrak{M} \not\models \vp$.  We say that $L$
has the \emph{global finite $d$-model property \rmbr{global $d$-fmp}} if, for every $\vp$ and $\psi$ with
$\varphi \not\vdash^*_L \psi$, there exists a finite $d$-model~$\mathfrak{M}$ such that
$\mathfrak{M} \models L \cup \{\varphi\}$ and $\mathfrak{M} \not\models \psi$.

The \dfnotion{fusion} of Kripke frames $\mathfrak{F}_1 = ( W, \{R_\Box\}_{\Box\in M_1} )$ and
$\mathfrak{F}_2 = ( W, \{R_\Box\}_{\Box\in M_2})$, where $M_1 \cap M_2 = \emptyset$, is the frame
$\mathfrak{F}_1 \otimes \mathfrak{F}_2 = ( W, \{ R_\Box\}_{\Box\in M_1 \cup M_2} )$.  
If $\mathscr{C}_1$ and
$\mathscr{C}_2$ are classes of 
Kripke frames, then 
$\mathscr{C}_1 \otimes \mathscr{C}_2$ denotes the class of frames comprising $\mathfrak{F}_1 \otimes \mathfrak{F}_2$, for all
$\mathfrak{F}_1 \in \mathscr{C}_1$ and $\mathfrak{F}_2 \in \mathscr{C}_2$ with the same set of worlds.

We will also consider propositional modal logics that can be viewed as fragments of the one‑variable modal logics introduced above. Our notation and definitions are standard, following the conventions established earlier; see also the full version~\cite{arxiv}.
The \emph{fusion} of a propositional $M_1$-logic $L_1$ and a propositional
$M_2$-logic $L_2$, where $M_1 \cap M_2 = \emptyset$, is the smallest propositional
$(M_1 \cup M_2)$-logic $L_1 \otimes L_2$ that includes~$L_1 \cup L_2$.
The
\emph{semicommutator}\footnote{In general, semicommutators are also required to contain
  confluence axioms for commuting modalities; these are derivable in
  semicommutators with $\mathbf{S5}$.}  and the \emph{commutator}~\cite{KuruczWolterZakharyaschevGabbay2003,Kurucz2007Combining} of a propositional \mbox{$M$-logic} $L$ with the
propositional monomodal logic $\mathbf{S5}$ are the propositional
\mbox{$(M \cup \{ \Box_E \})$-logics}
\begin{equation*}
 [L , \textbf{S5}]^{\textit{xd}} \ \ = \ \ L \otimes \textbf{S5}\  +\  \bigl\{
  (\textit{lcom}_{\Box}) \mid \Box \in M \bigr\} \quad\text{ and }\quad
[L, \textbf{S5}]^{\textit{cd}} \ \ = \ \ [L , \textbf{S5}]^{\textit{xd}} \ + \ \bigl\{
  (\textit{rcom}_{\Box}) \mid \Box \in M \bigr\},
\end{equation*}
where $(\textit{lcom}_{\Box})$ and $(\textit{rcom}_{\Box})$ 
are the following  formulas with their Kripke frame correspondents:
\begin{align}
 \tag{$\textit{lcom}_{\Box}$} & \Box \Box_E\, p \to \Box_E \Box\, p, && E\circ R\subseteq R\circ E,\\[2pt]
  \tag{$\textit{rcom}_{\Box}$} & \Box_E \Box\, p \to \Box \Box_E\, p, && R\circ E\subseteq E\circ R.
\end{align}

There exists a well-known translation of one-variable $M$-formulas into
propositional $(M \cup \{ \Box_E \})$-for\-mulas: $P(x)^\ast = p$;
$(\neg\vp)^\ast = \neg\vp^\ast$; $(\vp_1 \land \vp_2)^\ast = \vp_1^\ast \land \vp_2^\ast$;
$(\Box \vp)^\ast = \Box \vp^\ast$, for all $\Box \in M$;
$(\exists x\, \vp)^\ast = \Diamond_{E} \vp^\ast$.  We note that
$\textit{CBF}_\Box^\ast = (\textit{lcom}_{\Box})$ and
$\textit{BF}_\Box^\ast = (\textit{rcom}_{\Box})$. If $\Gamma \subseteq \mathcal{L}_M$,
then $\Gamma^\ast = \{ \vp^\ast \mid \vp \in \Gamma\}$.

If $L$ is a propositional $M$-logic, then the smallest one-variable $M$-logic
containing all the substitution instances of formulas from $L$ (here, we
substitute arbitrary one-variable formulas for proposition letters) is denoted
by $\mathcal{Q} L$.  
\begin{lemma}
  \label{lem:one-var-vs-prop}
  If $L$ is a propositional $M$-logic, then
  $\mathcal{Q} L^\ast = [L, \mathbf{S5}]^{\mathit{xd}}$ and
  $(\mathcal{Q} L + \mathit{BF})^\ast = [L, \mathbf{S5}]^{\mathit{cd}}$.
\end{lemma}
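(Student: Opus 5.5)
We read the statement as saying that the $\ast$-translation is a bijection between $\mathcal{L}_M$ and the propositional $(M\cup\{\Box_E\})$-language, and that it maps $\mathcal{Q}L$ exactly onto $[L,\mathbf{S5}]^{\mathit{xd}}$. Indeed, $\ast$ has an obvious inverse $\dagger$: $p^\dagger = P(x)$, $(\Diamond_E\psi)^\dagger = \exists x\,\psi^\dagger$, and the Booleans and $\Box\in M$ are left unchanged. So it suffices to show two things: $\mathcal{Q}L^\ast$ is a propositional $(M\cup\{\Box_E\})$-logic containing $[L,\mathbf{S5}]^{\mathit{xd}}$, and the preimage of $[L,\mathbf{S5}]^{\mathit{xd}}$ under $\ast$ is a one-variable $M$-logic containing $\mathcal{Q}L$. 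Minimality of each side then yields the two inclusions.

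For $\mathcal{Q}L^\ast \supseteq [L,\mathbf{S5}]^{\mathit{xd}}$ we check four points.
\begin{itemize}
\item The one-variable fragment of $\mathbf{FO}$ translates into $\mathbf{S5}$. This is the classical fact that monadic one-variable $\mathbf{FO}$ is $\mathbf{S5}$. Concretely, $\exists x$ behaves as an $\mathbf{S5}$ diamond, so the axioms $\mathrm{T}$, $4$ and $5$ for $\Box_E$ arise as translations of $\mathbf{FO}$-validities, and $\mathrm{K}_E$ arises likewise.
\item $L\subseteq \mathcal{Q}L^\ast$. This is immediate: $\vp\in L$ is the $\ast$-image of its instance with $p\mapsto P(x)$.
\item $\mathit{lcom}_\Box = \mathit{CBF}_\Box^\ast$ belongs to $\mathcal{Q}L^\ast$, since the paper notes that every one-variable logic contains $\mathit{CBF}_\Box$.
\item The translation commutes with the rules. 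MP and Necessitation for $\Box\in M$ are preserved verbatim. Necessitation for $\Box_E$ corresponds to Generalisation. Propositional substitution $p\mapsto\chi$ corresponds to substituting $\chi^\dagger$ for $P(x)$, because $\ast$ commutes with substitution.
\end{itemize}

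For the converse, let $\Lambda=\{\vp\in\mathcal{L}_M\mid \vp^\ast\in[L,\mathbf{S5}]^{\mathit{xd}}\}$. We show that $\Lambda$ is a one-variable $M$-logic containing $L$-instances, hence $\Lambda\supseteq\mathcal{Q}L$.
\begin{itemize}
\item Closure under the rules is the same dictionary read backwards.
\item The $\mathrm{K}$-axioms for $\Box\in M$ are fine.
\item Every substitution instance of $L$ translates to a propositional substitution instance of $L$.
\item The one-variable fragment of $\mathbf{FO}$ must translate into $\mathbf{S5}$. This is the semantic step: a one-variable $\mathbf{FO}$-validity $\vp$ has $\vp^\ast$ valid on all $\mathbf{S5}$ frames (universal relations, with domain as worlds), and $\mathbf{S5}$ is complete.
\end{itemize}

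The $\mathit{cd}$ case is then immediate. $\mathit{BF}_\Box^\ast=\mathit{rcom}_\Box$, so adding $\mathit{BF}$ and closing under the rules on the first-order side matches adding $\mathit{rcom}_\Box$ on the propositional side, using the same two-way rule-correspondence argument.

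The main obstacle is the precise correspondence between the one-variable fragment of $\mathbf{FO}$ and $\mathbf{S5}$ in both directions. That correspondence is standard, and we would cite it rather than reprove it. Everything else is bookkeeping that the translation commutes with substitution and the rules.
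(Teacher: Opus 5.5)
\paragraph{Gap in the converse inclusion.} Your inclusion $\mathcal{Q}L^\ast\subseteq[L,\mathbf{S5}]^{\mathit{xd}}$ has a genuine gap. Your check that $\Lambda$ is a one-variable $M$-logic never uses $\mathit{lcom}_\Box$. Each item on your list holds verbatim for $\Lambda_0=\{\vp\mid\vp^\ast\in L\otimes\mathbf{S5}\}$: closure under MP, Generalisation, Necessitation and Substitution, the $\mathrm{K}$-axioms, instances of $L$, and pure $\mathbf{FO}$-validities going to $\mathbf{S5}$-theorems. So if the argument were complete, it would give $\mathcal{Q}L^\ast\subseteq L\otimes\mathbf{S5}$. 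Since you rely on $\mathit{CBF}_\Box\in\mathcal{Q}L$ in the first direction, this would put $\mathit{lcom}_\Box=\mathit{CBF}_\Box^\ast$ in $L\otimes\mathbf{S5}$. That already fails for $L=\mathbf{K}$: take $E$-classes $\{a,b\}$ and $\{c\}$, $R=\{(b,c)\}$, and $p$ false at $c$. Then $\Box\Box_E p$ holds vacuously at $a$, but $\Box_E\Box p$ fails there.

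The missing point is that ``containing the one-variable fragment of $\mathbf{FO}$'' must be read inside the modal language. First-order reasoning treats any formula without free $x$, such as $\Box\forall x\,P(x)$, as a sentence. In particular, $\Lambda$ must be closed under the rule $\psi\to\vp\,/\,\psi\to\forall x\,\vp$ for $x$ not free in $\psi$. This rule is exactly what derives $\mathit{CBF}_\Box$ from $\Box\forall x\,P(x)\to\Box P(x)$. Translating pure $\mathbf{FO}$-validities into $\mathbf{S5}$ does not give you this closure.

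\paragraph{The missing lemma.} The paper proves the following, which is the only non-trivial step. Let $\beta$ be a propositional $(M\cup\{\Box_E\})$-formula all of whose proposition letters lie in the scope of $\Box_E$; these are precisely the translations of formulas without free $x$. Then $\vdash_{L'}\beta\to\Box_E\beta$, where $L'=[L,\mathbf{S5}]^{\mathit{xd}}$. It follows that $\vdash_{L'}\beta\to\gamma$ implies $\vdash_{L'}\beta\to\Box_E\gamma$, which is the translated rule.

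The proof is by induction, after pushing negations down to occurrences of $\Box_E$:
\begin{itemize}
\item the cases $\Box_E\delta$ and $\neg\Box_E\delta$ are the $\mathbf{S5}$ axioms $4$ and $5$;
\item the cases for $\wedge$ and $\vee$ are immediate;
\item the case $\Box\gamma$ uses monotonicity of $\Box$ and $\mathit{lcom}_\Box$, via $\Box\gamma\to\Box\Box_E\gamma\to\Box_E\Box\gamma$;
\item the case $\Diamond\gamma$ uses the confluence formula $\Diamond\Box_E p\to\Box_E\Diamond p$, which is derivable in $L'$.
\end{itemize}
This is where the commutation axiom earns its place. The same lemma covers the $\mathit{cd}$ case, since $[L,\mathbf{S5}]^{\mathit{cd}}\supseteq L'$. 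You flagged the $\mathbf{FO}_1$--$\mathbf{S5}$ correspondence as the main obstacle, but that part is standard and you may cite it. The real work is this interaction between the $M$-modalities and the quantifier.
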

\begin{proof}
Let $L' = [L, \textbf{S5}]^{\textit{xd}}$ and let $\textbf{FO}_1$ be the
one-variable fragment of the classical first-order logic.  It is easy to check
that
$\vdash_{L'} \Diamond \Box_E p \to \Box_E \Diamond p (= \textit{chr}_\Diamond)$,
for each $\Diamond$ corresponding to $\Box \in M$.  We also rely on the
well-known~\cite{Wajsberg33} fact that $\textbf{FO}_1^\ast = \textbf{S5}$.

First, we prove that the translation $\cdot^\ast$ commutes with substitution:
for all one-variable formulas $\vp, \psi_1, \ldots, \psi_n$,
$$
\begin{array}{lcl} (
  \vp)^\ast\,[\psi_1, \ldots, \psi_n / P_1 (x), \ldots, P_n (x)] & = & 
  \vp^\ast\,[\psi_1^\ast, \ldots, \psi_n^\ast / p_1, \ldots, p_n].
\end{array}
\eqno{(\ast)}
$$
The proof is by induction on $\vp$; we only consider the case
$\vp = \exists x\, \theta$.  Let
$\sigma = [\psi_1, \ldots, \psi_n / P_1 (x), \ldots, P_n (x)]$ and
$\sigma^\ast = [\psi_1^\ast, \ldots, \psi_n^\ast / p_1, \ldots, p_n]$.  Since, by IH,
$(\theta\,\sigma)^\ast = \theta^\ast\,\sigma^\ast$, it follows that
$$
\begin{array}{ccccccccccccc}
  ((\exists x\, \theta)\,\sigma)^\ast & = & (\exists x\, (\theta\,\sigma))^\ast & = &
  \Diamond_E (\theta\,\sigma)^\ast & = & \Diamond_E ( \theta^\ast\,\sigma^\ast) & = &   
  (\Diamond_E \theta^\ast)\,\sigma^\ast & = &    (\exists x\, \theta)^\ast\,\sigma^\ast.
\end{array}
$$

We now prove that $L' \subseteq \mathcal{Q} L^\ast$.  It should be clear that
$L \subseteq \mathcal{Q} L^\ast$, and that
$\textbf{S5} = \textbf{FO}_1^\ast \subseteq \mathcal{Q} L^\ast$.  Since
$\textit{CBF}_\Box$ belongs to every one-variable modal logic and
$\textit{CBF}_\Box^\ast = (\textit{lcom}_\Box)$, it follows that
$(\textit{lcom}_\Box) \in \mathcal{Q} L^\ast$.  Now, the claim follows from
($\ast$) and the closure of $\mathcal{Q} L^\ast$ under MP, Necessitation, and
Generalisation.

We next prove that $\mathcal{Q} L^\ast \subseteq L'$.  Here, we use the fact
that $\mathbf{FO}_1$ is obtained by adding to the classical propositional logic
the axiom $\forall x\, P(x) \to P(x)$ and the rule
$\psi \to \vp(x) / \psi \to \forall x \, \vp(x)$, where $x$ is not free in
$\psi$.  In view of ($\ast$), the only non-trivial part is to show that
$\vdash_{L'} (\psi \to \vp)^\ast$ implies
$\vdash_{L'} (\psi \to \forall x \, \vp)^\ast$ provided that $x$ is not free in $\psi$.
To this end, we show that, if $\gamma$ is a propositional
$(M \cup \{ \Box_E \})$-formula and $\beta$ is a propositional
$(M \cup \{ \Box_E \})$-formula such that all occurrences of proposition letters
occur within the scope of $\Box_E$, then $\vdash_{L'} \beta \to \gamma$ implies
 $\vdash_{L'} \beta \to \Box_E \gamma$.  Since $\Box_E$ is monotonic, it is
enough to show that $\vdash_{L'} \beta \to \Box_E \beta$, for all $\beta$ of the
assumed form.

Due to our assumption about its form, we may rewrite $\beta$ with $\neg$,
$\wedge$, $\vee$, $\Box_E$, $\Box$, and $\Diamond$ so that all negations occur
either within the scope of $\Box_E$ or immediately precede $\Box_E$ (to that
end, we push all the negations outside of any $\Box_E$ down until they reach
some $\Box_E$).  Thus, $\beta$ is built from formulas of the form
$\Box_E \delta$ and $\neg \Box_E \delta$ using $\wedge$, $\vee$, $\Box$, and
$\Diamond$.  We proceed by induction on the structure of such $\beta$.
\begin{itemize}
\item If $\beta = \Box_E \gamma$ or $\beta = \neg \Box_E \gamma$, then the claim holds
because $\Box_E$ is an $\textbf{S5}$ modality, and therefore
$\vdash_{L'} \Box_E \gamma \to \Box_E \Box_E \gamma$ and
\mbox{$\vdash_{L'} \neg \Box_E \gamma \to \Box_E \neg \Box_E \gamma$}.  
\item The cases
of $\wedge$ and $\vee$ are straightforward.

\item If $\beta = \Box \gamma$ and $\vdash_{L'} \gamma \to \Box_E \gamma$, then, by
monotonicity of $\Box$ and $(\textit{lcom}_\Box)$,
$\vdash_{L'} \Box \gamma \to \Box_E \Box \gamma$.

\item If $\beta = \Diamond \gamma$ and $\vdash_{L'} \gamma \to \Box_E \gamma$, then, by
monotonicity of $\Diamond$ and $(\textit{chr}_\Box)$,
$\vdash_{L'} \Diamond \gamma \to \Box_E \Diamond \gamma$.
\end{itemize}
To prove the claim of the lemma for $[L, \textbf{S5}]^{\textit{cd}}$, it suffices to recall
that $\textit{BF}_\Box^\ast = (\textit{rcom}_\Box)$.
\end{proof}

%%% Local Variables:
%%% mode: latex
%%% TeX-master: "fusions-main"
%%% End:

% !TEX root =  fusions-submission.tex

\section{Fusions of Equality-Free One-Variable Logics}\label{sec:equality-free}
In this section, we prove a precise version of Theorem~\ref{thm:1}.
\begin{theorem}\label{thm:transfer:equality-free}
Let $\mathcal{C}_1, \mathcal{C}_2$ be non-empty classes of frames closed under disjoint unions, $d \in \{\textit{xd}, \textit{cd} \}$, and  $L_i$
be one-variable modal logics
\textup{(}globally\textup{)} $d$-complete for $\mathcal{C}_i$, for both $i = 1,2$. Then 
\begin{itemize}
\item $L_1\otimes L_2$ is \textup{(}globally\textup{)} $d$-complete for $\mathcal{C}_1\otimes\mathcal{C}_2$\textup{;}
\item $L_1\otimes L_2$ is \textup{(}globally\textup{)} decidable if both $L_1$ and $L_2$ are  \textup{(}globally\textup{)} decidable\textup{;}
\item $L_1\otimes L_2$ has the $d$-fmp if both $L_1$ and $L_2$ have the $d$-fmp. 
\end{itemize}
If $\mathcal{C}_1$ and $\mathcal{C}_2$ contain non-trivial frames, then $L_1 \otimes L_2$ does \emph{not} have the global $d$-fmp.
\end{theorem}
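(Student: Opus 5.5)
The plan is to follow the classical transfer argument for propositional fusions (Kracht--Wolter, Fine--Schurz), replacing worlds by \emph{quasistates} in the sense of the quasimodel technique and building models as cactus-like trees. Fix a formula $\vp$ (and, in the global case, a premise $\psi$). For a finite set $\Sigma$ of subformulas closed under negation, a \emph{type} is a maximal consistent subset of $\Sigma$ in the one-variable sense. A \emph{quasistate} is a set of types with the same $\exists x$-subformulas. In the $\textit{cd}$ case, each type is realised by a fixed number of copies (or $\kappa$ many). Quasistates play the role of states, and \emph{runs} (choices of a type in each world) play the role of domain elements. The first step is a quasimodel characterisation for each $L_i$. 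Using $d$-completeness for $\mathcal{C}_i$ and closure under disjoint unions, $\vp$ is $L_i$-consistent (globally, with $\psi$ true everywhere) iff there is an $M_i$-quasimodel over a frame of $\mathcal{C}_i$ satisfying it. Crucially, satisfiability of a quasistate (or a finite set of quasistates with a required run) is an $L_i$-consistency question, expressible by a formula of $\mathcal{L}_{M_i}$ in which the foreign $M_{3-i}$-subformulas $\Box\chi$ are replaced by fresh predicate letters (surrogates $\sur{i}{\chi}$) applied to $x$.

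Second, completeness. Given an $L_1\otimes L_2$-consistent $\vp$, I take a maximal consistent set and build the model in layers. Start from an $M_1$-model (or $M_2$-model) of the surrogate translation $\sur{1}{\vp}$ over a frame in $\mathcal{C}_1$. Then, for every world $w$ and every $M_2$-requirement there (the quasistate of $w$ restricted to $M_2$-surrogates), attach a fresh $M_2$-model over a frame in $\mathcal{C}_2$ whose root realises exactly that quasistate. Alternate these steps $\omega$ times and take the union. The matching condition is that the root quasistate of the attached model coincides with the quasistate at $w$, including the multiplicities of types. This is where one-variable logics need care. I would make models \emph{homogeneous*}: every type realised at a world is realised by $\kappa$ many elements, for a fixed infinite $\kappa$. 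Such models are obtained by taking $\kappa$ disjoint copies of the domain, which preserves truth since there is no equality. Then domains can be bijected at the glueing points. In the $\textit{xd}$ case, elements new to the attached model are fresh, which respects expanding domains. In the $\textit{cd}$ case, all domains have size $\kappa$ and I extend along bijections. Closure of $\mathcal{C}_i$ under disjoint unions gives that each $M_i$-reduct of the final frame is a disjoint union of $\mathcal{C}_i$-frames, hence in $\mathcal{C}_i$, so the fusion frame lies in $\mathcal{C}_1\otimes\mathcal{C}_2$. For global completeness, the same construction is run with all attached models satisfying $\sur{i}{\psi}$ globally.

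Third, decidability and the fmp. Decidability follows as in the propositional case. $\vp$ is consistent iff there is a set $S$ of quasistates over $\Sigma$ (together with $\kappa$-homogeneity, multiplicities reduce to sets) such that some quasistate in $S$ contains $\vp$ and, for every $q\in S$ and each $i$, the formula saying ``$q$ holds at the root and every reachable world carries a quasistate from $S$'' is $L_i$-consistent. Globally, ``reachable'' becomes ``everywhere'', and local uses $\Box^{\leq n}$-style bounds only in the presence of global decidability. Each check is a finite number of $L_i$-queries, so the procedure terminates. For the $d$-fmp in the local case, the attachment tree has depth bounded by the alternation depth of $\vp$. At each step I use finite $L_i$-models instead, and duplicate elements finitely to match multiplicities. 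This works because only finitely many types occur and types can be cloned, and the end result is a finite model.

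The failure of the global $d$-fmp is the step I expect to be most delicate. The idea is to take non-trivial $\mathfrak{F}_1$ and $\mathfrak{F}_2$ and write a premise forcing infinitely many elements or worlds. For instance, use $\forall x\,(P(x)\to\Diamond_1 \exists x\ldots)$-style formulas interleaving $\Box_1$ and $\Box_2$ so that each world carries a fresh element that must differ from all previous ones. The premise should be satisfiable in the cactus (tree-like) fusion but not in any finite model. In the finite case, a pigeonhole argument produces a cycle that contradicts a global constraint. Concretely, I would mimic the propositional $\Box_1\Box_2$ ``infinite chain'' trick, using the quantifier to count. The main obstacle is designing this formula so that it works for \emph{every} pair of logics with non-trivial frames. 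I would try a premise that uses only the $\textbf{K}$-valid behaviour of a single non-reflexive edge, together with the converse Barcan formula.
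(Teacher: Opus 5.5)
Your global-case argument is sound. It differs from the paper's only in how multiplicities are matched at gluing points. You make component models $\kappa$-homogeneous and use type-preserving bijections of domains. The paper instead grafts quasimodels and creates a new run $r\cup r^v$ for every pair of runs with matching types at the thorn, so multiplicities never need matching.

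The genuine gap is the \emph{local} case: completeness, decidability and the $d$-fmp. Your $\omega$-layer construction grafts, at \emph{every} world of every component model, a model realising that world's full $\Sigma$-quasistate. So every world must carry a realisable quasistate, which is a global requirement. Likewise, your test ``every reachable world carries a quasistate from $S$'' is not a formula. It is a global-consequence query, $\sur{i}{\hat{S}}\not\vdash^{*}_{L_i}\neg\sur{i}{\hat{q}}$, which is why you end up needing global decidability. For the local claims, however, the theorem assumes only local $d$-completeness and local decidability of $L_1,L_2$, and local completeness does not imply global completeness. With local completeness you can enforce only $\Box_1^{\le n}\sur{1}{\hat{S}}$ at the root. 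Worlds at $R_1$-distance greater than $n$ may then carry quasistates whose surrogate part is $L_2$-inconsistent, and nothing can be grafted there.

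The missing idea is \emph{tapering} combined with induction on alternation depth:
\begin{itemize}
\item A world at $R_1$-distance $k$ from the root needs only a type over formulas of $\Box_1$-depth at most $\sur{1}{\md}(\vp)-k$.
\item What is grafted there need only realise its quasistate restricted to $\sur{1}{\Theta}(\vp)$, the formulas under the outermost $\Box_2$'s, which have smaller alternation depth.
\item Hence the set $\QS_1(\vp)$ of fusion-consistent $\sur{1}{\Theta}(\vp)$-quasistates can be computed recursively. Assuming $\adp(\vp)=\sur{1}{\adp}(\vp)$, $\not\vdash_{L_1\otimes L_2}\vp$ reduces to the purely local query $\not\vdash_{L_1}\Box_1^{\le \sur{1}{\md}(\vp)}\sur{1}{(\hat{\QS}_1(\vp))}\to\sur{1}{\vp}$.
\item The countermodel is a tapered cactus with only $\adp(\vp)$ layers. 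Its grafts have roots isolated for the other relation, and beyond the relevant depth it uses degenerate grafts.
\end{itemize}
Your fmp sketch silently relies on exactly this when it claims the attachment tree has depth bounded by the alternation depth. In your full-quasistate construction the grafting never stops.

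The failure of the global $d$-fmp is also left open. You describe what the premise should achieve but do not give it, and the obstacle you name is real. Worlds of a $\mathcal{C}_i$-frame other than those on the chosen non-trivial edge may have no successors, so the premise must be guarded. The paper guards it with a marker $p$. With $\Diamond_p\psi=\Diamond_1(\neg p\wedge\Diamond_2(p\wedge\psi))$ and $\Box_p=\neg\Diamond_p\neg$, it takes
$\vp=p\to[\exists x\,(\neg Q(x)\wedge\Diamond_pQ(x))\wedge\forall x\,(Q(x)\to\Box_pQ(x))]$.
\begin{itemize}
\item In any model of $\vp$ in which $p$ holds somewhere, there is an $R_1R_2$-path of $p$-worlds along which the extension of $Q$ strictly increases. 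So the worlds on it are pairwise distinct and the model is infinite. This strict growth replaces your ``fresh element'' pigeonhole argument and needs no equality.
\item Conversely, $\vp\not\vdash^{*}_{L_1\otimes L_2}\neg p$ is witnessed on a limit cactus. It chains disjoint copies of a non-trivial frame from each class through their non-trivial edges. $p$ is true exactly at the endpoints $w^0,w^1,\dots$ of successive $R_1R_2$-steps, and $Q$ at $w^i$ is interpreted as $\{0,\dots,i-1\}$.
\end{itemize}
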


We begin the proof of these results by defining a general structure of frames for fusions, which follows Goranko and Passy's cactus model construction~\cite{DBLP:journals/logcom/GorankoP92}.
\begin{definition}
An \emph{$i$-cactus frame $(W, R_1, R_2)$ with thorns $V\subseteq W$} is defined inductively.
\begin{description}
\item[basis] A $1$-frame $(W^0, R^0_1)\in\mathcal{C}_1$ gives rise to a $2$-cactus frame $(W^0, R_1^0, \emptyset)$ with thorns $W^0$. A $1$-cactus frame  is defined symmetrically.

\item[induction step] Let $(W, R_1, R_2)$ be a $2$-cactus frame with thorns $V$ and $(W^v, R^v_2)\in\mathcal{C}_2$, for $v \in V$, a collection of $2$-frames such that, for each $v\in V$, the following conditions hold:
\begin{equation}\label{eq:cactus-frame}
W\cap W^v = \{v\} \qquad\text{ and }\qquad W^v\cap W^{v'} = \emptyset, \text{ for all } v'\in V \text{ with } v'\ne v.
\end{equation}
Then $(W', R_1, R_2 \cup \bigcup\limits_{v\in V} R_2^v)$ is a $1$-cactus frame with thorns $W'\setminus W$, where
$W'  \ = \  W \  \cup \  \bigcup\limits_{v\in V} W^v$.
Given a $1$-cactus frame and a collection of $1$-frames, a $2$-cactus frame  is defined symmetrically.
\end{description}
A \emph{limit $i$-cactus frame} is obtained as the limit of the sequence of cactus frames starting from an $i$-frame. 
\end{definition}

The following property of limit $i$-cactus frames is immediate from~\eqref{eq:cactus-frame}:
\begin{proposition}\label{prop:cactus-frame}
A limit $i$-cactus frame is a frame from $\mathcal{C}_1\otimes\mathcal{C}_2$.
\end{proposition}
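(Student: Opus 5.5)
The plan is to show that the $M_1$-reduct $(W,R_1)$ of a limit $i$-cactus frame $(W,R_1,R_2)$ is a disjoint union of frames from $\mathcal{C}_1$. Symmetrically, $(W,R_2)$ will be a disjoint union of frames from $\mathcal{C}_2$. Since both classes are closed under disjoint unions, each reduct lies in its class, and because the two reducts share the world set $W$, the frame is $(W,R_1)\otimes(W,R_2)\in\mathcal{C}_1\otimes\mathcal{C}_2$. We use the paper's convention that a cactus frame may carry an empty relation (the basis $(W^0,R^0_1,\emptyset)$). So in the statement, ``frame'' means the fusion of the two reducts, with the empty relation handled as described next.

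First we track, by induction along the sequence of cactus frames, the family of component frames contributed so far. Each component is some $(W^v,R^v_j)\in\mathcal{C}_j$ (or the root $(W^0,R^0_i)$) glued in at some stage. We keep an invariant for every stage. The sets $W^v$ of the $j$-components are pairwise disjoint. Every world of the current frame that is not a current thorn lies in exactly one $j$-component for each $j$. Current thorns lie in exactly one component, namely of the last-added type, and no component of the other type yet.

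The induction step follows directly from \eqref{eq:cactus-frame}. When the new $j$-frames $W^v$ are attached at the thorns $v\in V$, the condition $W\cap W^v=\{v\}$ gives each thorn $v$ exactly one $j$-component. The disjointness $W^v\cap W^{v'}=\emptyset$ keeps the new $j$-components pairwise disjoint. They are also disjoint from the earlier $j$-components, because those lie inside $W\setminus V$: thorns had no $j$-component. The new worlds $W^v\setminus\{v\}$ become the thorns and carry only a $j$-component. Finally, $R_j$ is exactly the union of the $R^v_j$, and no relation is added to the other index.

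Next we pass to the limit, where $W$ and each $R_j$ are unions over the stages. Every world belongs to some stage and eventually stops being a thorn, so it gets exactly one $j$-component for each $j$. The $j$-components partition $W$, and $R_j$ is the union of their relations, so $(W,R_j)$ is precisely their disjoint union. This shows the reducts are in $\mathcal{C}_1$ and $\mathcal{C}_2$, and hence the limit frame is in $\mathcal{C}_1\otimes\mathcal{C}_2$.

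The only delicate point is at the root of the construction. Before the first opposite-type step, the root frame has no $j$-partner. For a limit frame this is resolved automatically, since every root world is a thorn at stage $0$ and receives a $j$-frame at stage $1$. Beyond this, the argument is pure bookkeeping with \eqref{eq:cactus-frame}, and I expect no real obstacle.
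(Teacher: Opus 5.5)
Your proof is correct and takes the same route as the paper, which calls the proposition immediate from the disjointness conditions \eqref{eq:cactus-frame}. In both, each reduct $(W,R_j)$ of the limit is the disjoint union of the attached $\mathcal{C}_j$-components, so it lies in $\mathcal{C}_j$ by closure under (arbitrary) disjoint unions; your stage-by-stage invariant that thorns carry no component of the type about to be attached is exactly the bookkeeping the paper leaves implicit.
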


With these definition and property at hand, we can establish the general non-transfer of the global $d$-fmp claimed in Theorem~\ref{thm:transfer:equality-free}.
\begin{lemma}
  \label{thr:no-gfmp}
  Let $\mathcal{C}_1$ and $\mathcal{C}_2$ be classes of frames closed under
  disjoint unions containing non-trivial frames,
  $d \in \{\textit{xd}, \textit{cd} \}$, and
  $L_i = \mathop{\mathit{Log}_d} \mathcal{C}_i$, for $i = 1,2$.  Then
  $L_1\otimes L_2$ does not enjoy the global $d$-fmp.
\end{lemma}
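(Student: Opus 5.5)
The plan is to exhibit formulas $\varphi,\psi$ such that $\varphi \not\vdash^\ast_{L_1\otimes L_2}\psi$, witnessed by an infinite model built on a limit cactus frame, while every \emph{finite} $d$-model of $\varphi$ validates $\psi$. The idea is to use one unary predicate $U$ (``used'') that persists along both accessibility relations. Global axioms will force an infinite alternating $R_1R_2$-path along which the extension of $U$ strictly grows infinitely often. This is impossible when the union of all domains is finite.

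\textbf{Step 1: the formulas.} Write $\Box_1$ for a modality of $M_1$ and $\Box_2$ for one of $M_2$. Let $\varphi$ be the conjunction of the following formulas:
\[
\forall x\,\bigl(U(x)\to\Box_1 U(x)\bigr),\qquad \forall x\,\bigl(U(x)\to\Box_2 U(x)\bigr),
\]
\[
p\to\Diamond_1\bigl(q\wedge\exists x\,(\neg U(x)\wedge\Box_2 U(x))\bigr),\qquad
q\to\Diamond_2\bigl(p\wedge\exists x\,(\neg U(x)\wedge\Box_1 U(x))\bigr).
\]
Let $\psi=\neg p$.

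\textbf{Step 2: every finite model of $\varphi$ validates $\psi$.} Let $\mathfrak{M}\models\varphi$ be finite, and suppose $p$ holds at some world $w_0$. The third and fourth conjuncts, applied globally, give a path $w_0R_1w_1R_2w_2R_1w_3\cdots$ with $q$ at odd positions and $p$ at even positions. For each $k\ge1$ they also give an element $e_k\in\Delta_{w_k}$ with $e_k\notin U$ at $w_k$ and $e_k\in U$ at $w_{k+1}$. Here $U$-membership refers to $I_{w}(U)$. Persistence of $U$, together with expanding domains, gives $I_{w_k}(U)\subseteq I_{w_{k+1}}(U)$. Hence the sets $I_{w_k}(U)$ form an infinite strictly increasing chain of subsets of the finite set $\bigcup_w\Delta_w$, which is a contradiction. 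So $\mathfrak{M}\models\neg p$.

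\textbf{Step 3: $\varphi\not\vdash^\ast\psi$.} This step needs the non-triviality assumption and the cactus construction. Since $\mathcal{C}_1$ contains a non-trivial frame, it contains one with a world $u$ and a distinct $u'$ such that $uR_1u'$. Using closure under disjoint unions, and taking isomorphic copies, we can glue such frames along thorns. Starting from a $1$-frame, at each stage we attach to the relevant thorn a copy of a non-trivial frame of the other class whose edge source is that thorn. The result is a limit cactus frame containing an infinite path $w_0R_1w_1R_2w_2\cdots$ of pairwise distinct worlds, and by Proposition~\ref{prop:cactus-frame} this frame lies in $\mathcal{C}_1\otimes\mathcal{C}_2$.

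On this frame we define a model as follows. The domains are $\mathbb{N}$ for the \textit{cd} case; in the \textit{xd} case we may also take them constant. We put $p$ true exactly at the $w_{2k}$ and $q$ true exactly at the $w_{2k+1}$. For $U$, we set $I_w(U)=\{0,\dots,n-1\}$ for every world $w$ of the part of the cactus whose path-ancestor is $w_n$, that is, the worlds reachable in the tree-like structure before reaching $w_{n+1}$. Because the cactus is tree-like, $U$ is monotone along all edges. The conjunct for $w_{k}$ is witnessed by the element $k$ at $w_{k+1}$. Then $\varphi$ is true everywhere; $p$ holds only at path worlds, whose obligations are met by the path itself. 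Finally, $\neg p$ fails at $w_0$.

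\textbf{Step 4: the model validates the logic.} Since $L_i=\mathop{\mathit{Log}_d}\mathcal{C}_i$, every $d$-model on a frame from $\mathcal{C}_1\otimes\mathcal{C}_2$ validates $L_1\cup L_2$. Closure of validity under the rules then gives $L_1\otimes L_2$. Hence $\varphi\not\vdash^\ast_{L_1\otimes L_2}\neg p$ by soundness of global consequence, and the global $d$-fmp fails.

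The main obstacle is Step 3, specifically checking that the $U$-assignment stays monotone along \emph{all} $R_1$- and $R_2$-edges of the cactus, not only along the path. If indexing by path-ancestor is awkward, I would first try a simpler choice: make every non-path world inherit the $U$-set of the thorn it hangs from. Edges inside an attached frame may go back to the thorn, so I would make $U$ constant on each attached frame except on the designated successor and its descendants. This needs care when the attached frame's edge structure is arbitrary.
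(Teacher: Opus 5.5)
There is a genuine gap, and it is in the choice of $\varphi$ itself, not only in the bookkeeping of Step~3. Your persistence conjuncts $\forall x\,(U(x)\to\Box_i U(x))$ hold globally, along every $R_1$- and $R_2$-edge. Your growth conjuncts demand $\exists x\,(\neg U(x)\wedge\Box_i U(x))$ at worlds that must have $R_i$-successors. These two requirements clash whenever the component frames have loops or back edges.

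Take $\mathcal{C}_1$ to be the class of all reflexive frames, which is closed under disjoint unions and has non-trivial members. Then $\Box_1 P(x)\to P(x)\in L_1$, so $\exists x\,(\neg U(x)\wedge\Box_1U(x))$ is refutable. The fourth conjunct then yields $\neg q$, Necessitation gives $\Box_1\neg q$, and the third conjunct yields $\neg p$. Thus $\varphi\vdash^*_{L_1\otimes L_2}\neg p$, and your pair witnesses nothing. Symmetric classes, such as all equivalence relations, fail the same way. At a $p$-world $w$, an element with $\neg U$ and $\Box_1U$ gets $U$ at an $R_1$-successor $w'$, and persistence at $w'$ along $w'R_1w$ pushes $U$ back to $w$. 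So no cleverer assignment of $U$ on the cactus can succeed; the worry you raise at the end is fatal, not a technicality.

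What is missing is relativisation. Both the growth and the persistence requirements should constrain only a marked alternating path, so that loops and back edges of the component frames are invisible to them. The paper does this as follows:
\begin{itemize}
\item $\Diamond_p\chi=\Diamond_1(\neg p\wedge\Diamond_2(p\wedge\chi))$ and $\Box_p\chi=\neg\Diamond_p\neg\chi$;
\item $\varphi=p\to[\exists x\,(\neg Q(x)\wedge\Diamond_pQ(x))\wedge\forall x\,(Q(x)\to\Box_pQ(x))]$ and $\psi=\neg p$.
\end{itemize}
Your Step~2 carries over essentially verbatim, since it only ever used persistence along the edges of the chosen path.

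For the infinite model, use the chained cactus built from copies of the non-trivial frames. Put $p$ exactly at the path worlds $w^i$, the copies of the source of the non-trivial $1$-edge. Set $I_{w^i}(Q)=\{0,\dots,i-1\}$ and let $Q$ be empty elsewhere. The intermediate world must satisfy $\neg p$, and the cactus is tree-like, so the only $p$-world reachable from $w^i$ by a $\Box_p$-step is $w^{i+1}$. Reflexive loops and back edges therefore never impose a constraint.
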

\begin{proof}
 We define the modal operators
  \begin{equation*}
    \Diamond_p \psi  \ \ = \ \  
     \Diamond_1 ( \neg p \ \wedge \ \Diamond_2 ( p \wedge \psi)) 
                    \qquad \text{ and } \qquad
   \Box_p \psi \ \  = \ \  \neg \Diamond_p \neg \psi,                                                                       
  \end{equation*}
  where $p$ is a proposition letter not in $\psi$, and consider the formula
  (`every dog has its day')
  \begin{equation*}
    \vp \ \  =  \ \ p \ \to  \ \big[ \exists x\, \bigl( \neg Q(x) \wedge \Diamond_p Q(x)\bigr)  
  \ \wedge \ \forall x\, \bigl( Q(x) \to \Box_p Q(x)\bigr) \bigr].
  \end{equation*}
  We show that $\vp \not\vdash_{L_1\otimes L_2}^\ast \neg p$ and that this cannot
  be witnessed by a finite model.

  First, we note that, if $\mathfrak{M}$ is a model based on a frame $\mathfrak{F} = (W, R_1, R_2)$ for
  $L_1\otimes L_2$ and $w^0\in W$ such that
  $\mathfrak{M}, w^0 \models p$ and $\mathfrak{M} \models \vp$, then
  $\mathfrak{M}$ is infinite: such a model $\mathfrak{M}$ contains an infinite
  \mbox{$R_1 R_2$-path} of worlds, starting at $w^0$, with the
  ever-increasing, non-universal extension of $Q$, all of which make $p$ true.

  Next, to show that $\vp \not\vdash_{L_1\otimes L_2}^\ast \neg p$, we define an $L_1\otimes L_2$-model 
  $\mathfrak{M}$ such that $\mathfrak{M} \models \vp$, but
  $\mathfrak{M}, w^0 \not\models p$, for some world $w^0$.  Let
  $\mathfrak{G}_1 = (U_1, S_1)$ and $\mathfrak{G}_2 = (U_2, S_2)$ be non-trivial
  frames from, respectively, $\mathscr{C}_1$ and $\mathscr{C}_2$. Thus,
  $\mathfrak{G}_1$ contains worlds $w$ and $v$ such that $w \ne v$ and
  $w S_1 v$; similarly, $\mathfrak{G}_2$ contains worlds $s$ and $t$ such
  $s \ne t$ and $s S_2 t$; see Fig.~\ref{fig:dog:cactus} for a simple example. Let
  $\mathfrak{F} = (W, R_1, R_2)$ be the limit 1-cactus frame constructed from
  the sequence $\mathfrak{G}_1^i = (U_1^i, S_1^i)$, $i \in \mathbb{N}$, of
  disjoint copies of $\mathfrak{G}_1$ and the sequence
  $\mathfrak{G}_2^i = (U_2^i, S_2^i)$, $i \in \mathbb{N}$, of disjoint copies of
  $\mathfrak{G}_2$, starting with $\mathfrak{G}^0_1$; we identify $v^i$ with
  $s^i$ and $t^i$ with $w^{i+1}$, for all $i \in \mathbb{N}$, where $u^i$
  denotes the copy of $u$ form $U^i_k$, for~$i \in \mathbb{N}$.
  By Proposition~\ref{prop:cactus-frame},
  $\mathfrak{F} \in \mathscr{C}_1 \otimes \mathscr{C}_2$, and so $\mathfrak{F}$
  is a frame for $L_1\otimes L_2$.  Let $W^0 = \{w^0\}$ and, for all
  $i \in \mathbb{N}$, let $V^i = R_2(W^i)$ and $W^{i+1} = R_1(V^i)$.
\begin{figure}%
\centering%
\begin{tikzpicture}[>=latex,
nd/.style={circle,draw,inner sep=0pt,minimum size=2mm,thick,fill=white}]
\begin{scope}[rounded corners=2mm]
\draw[fill=gray!70] (-0.2,0.3) -- ++(0.4,0) -- ++(0,1.4) -- ++(-0.4,0) -- cycle;
\draw[fill=gray!70] (1.8,-0.2) -- ++(0.4,0) -- ++(0,1.4) -- ++(-0.4,0) -- cycle;
\draw[fill=gray!70] (3.8,1.3) -- ++(0.4,0) -- ++(0,1.4) -- ++(-0.4,0) -- cycle;
\draw[fill=gray!70] (-8.2,-0.2) -- ++(0.4,0) -- ++(0,1.4) -- ++(-0.4,0) -- cycle;
\draw[fill=gray!30,fill opacity=0.5] (0.2, 1.3) -- ++(0,0.4) [rounded corners=1mm]-- ++(-2,0) [rounded corners=2mm]-- ++(0,0.5) -- ++(-0.4,0) -- ++(0,-0.9) -- cycle;
\draw[fill=gray!30,fill opacity=0.5] (2.2, -0.2) -- ++(0,0.4) [rounded corners=1mm]-- ++(-2,0) [rounded corners=2mm]-- ++(0,0.5) -- ++(-0.4,0) -- ++(0,-0.9) -- cycle;
\draw[fill=gray!30,fill opacity=0.5] (1.8, 0.8) -- ++(0,0.4) [rounded corners=1mm]-- ++(2,0) [rounded corners=2mm]-- ++(0,0.5) -- ++(0.4,0) -- ++(0,-0.9) -- cycle;
\draw[fill=gray!30,fill opacity=0.5] (3.8, 2.3) -- ++(0,0.4) [rounded corners=1mm]-- ++(2,0) [rounded corners=2mm]-- ++(0,0.5) -- ++(0.4,0) -- ++(0,-0.9) -- cycle;
\draw[fill=gray!30,fill opacity=0.5] (-6.2, 0.8) -- ++(0,0.4) [rounded corners=1mm]-- ++(2,0) [rounded corners=2mm]-- ++(0,0.5) -- ++(0.4,0) -- ++(0,-0.9) -- cycle;
\end{scope}
\node[nd] (u1) at (-2,2) {};
\node[nd] (u0) at (0,1.5) {};
\node[nd] (w0) at (0,0.5) {};
\node[nd,label=right:{$w^0$}] (v0) at (2,0) {};
\node[nd,label=left:{$v^0$ \footnotesize $(= s^0)$}] (w1) at (2,1) {};
\node[nd,label=right:{$w^1$  \footnotesize$(=t^0)$}] (v1) at (4,1.5) {};
\node[nd,label=left:{$v^1$  \footnotesize $(= s^1)$}] (w2) at (4,2.5) {};
\node[nd,label=right:{$w^2$  \footnotesize$(= t^1)$}] (v2) at (6,3) {};
\node[nd,label=left:{$s$}] (wf1) at (-6, 1) {}; 
\node[nd,label=right:{$t$}] (vf1) at (-4, 1.5) {};
\node[nd,label=left:{$v$}] (wf2) at (-8, 1) {}; 
\node[nd,label=right:{$w$}] (vf2) at (-8, 0) {}; 
\begin{scope}[thick,->,rounded corners=2mm]\scriptsize
\draw[densely dotted] (wf1) -|  (vf1);  % node[sloped,below] {$R_1$}
\draw (vf2) --  (wf2); % node[left] {$R_2$}
\draw[densely dotted] (u0) -|  (u1);  % node[sloped,below] {$R_1$}
\draw (w0) --  (u0); % node[left] {$R_2$}
\draw[densely dotted] (v0) -|  (w0);  % node[sloped,below] {$R_1$}
\draw (v0) --  (w1); % node[left] {$R_2$}
\draw[densely dotted] (w1) -|   (v1); %  node[sloped,below] {$R_1$}
\draw (v1) -- (w2); %  node[left] {$R_2$}
\draw[densely dotted] (w2) -| (v2); %  node[sloped,below] {$R_1$}
\draw (v2) -- +(0,0.5); % node[left] {$R_2$} 
\draw (u1) -- +(0,0.5); % node[left] {$R_2$} 
\end{scope}
\node at (6,3.6) {$\dots$};
\node at (-2,2.6) {$\dots$};
\node at (-5,0.4) {$\mathfrak{G}_2 = (U_2, S_2)$};
\node at (-8,1.6) {$\mathfrak{G}_1 = (U_1, S_1)$};
\node at (0.5,3) {$\mathfrak{F} = (W, R_1, R_2)$};
\end{tikzpicture}
\caption{Sample non-trivial frames $\mathfrak{G}_1$ and $\mathfrak{G}_2$ and resulting limit $1$-cactus frame $\mathfrak{F}$: solid arrows with dark background for $R_1$/$S_1$-components and dotted arrows with light background for $R_2$/$S_2$-components.}\label{fig:dog:cactus}
\end{figure}
Now, define a model
$\mathfrak{M} = ( \mathfrak{F}, \langle \mathbb{N}, I_w \rangle_{w \in W})$ by
putting $I_{w} (Q) = \{ j \in \mathbb{N} \mid j < i \}$ if $w \in W^i$, and
$I_{w} (Q) = \emptyset$ otherwise, and by making $p$ true at $w \in W$ iff
$w \in \bigcup_{i\in \mathbb{N}} W^i$.  Then $\mathfrak{M}\models \vp$ but
$\mathfrak{M}, w^0 \not\models \neg p$.
\end{proof}

The proofs of the remaining transfer results claimed in Theorem~\ref{thm:transfer:equality-free} lift the cactus model construction to the one-variable modal logic using the quasimodel technique; see, e.g.,~\cite{KuruczWolterZakharyaschevGabbay2003}.
We require two additional main ingredients: (\emph{a}) types and quasistates for a finite representation of any number of domain elements and (\emph{b}) surrogates for separating the two components of the fusion. 

In the sequel, we denote the language of $L_1 \otimes L_2$ by $\mathcal{L}$ and the language of $L_i$ by $\mathcal{L}_i$, for $i = 1, 2$. 
 For simplicity,
 we assume that $L_1$ and $L_2$ are monomodal logics with modalities $\Box_1$ and~$\Box_2$, respectively. 
Let~$\FL$ be a finite set of $\mathcal{L}$-formulas closed under subformulas. 
A \emph{$\FL$-type} $\type$ is a maximal Boolean-consistent subset of $\{ \psi, \neg\psi \mid \psi \in \FL \}$, that is, 
$\neg\psi \in \type$ iff $\psi\notin\type$, for each $\psi\in\FL$, and
$\psi_1\land\psi_2 \in \type$ iff $\psi_1,\psi_2\in\type$,  for each $\psi_1\land\psi_2\in\FL$.
The number of $\FL$-types is bounded by $2^{|\FL|}$ (for $\neg \psi\in\FL$, a type contains either $\neg \psi$ or both $\neg\neg \psi$ and $\psi$). A \emph{$\FL$-quasistate} is a non-empty set $\qs$  of $\FL$-types such that
\begin{equation*}%\label{eq:exists:saturated}
%\tag{\textbf{qs}}
\exists x\,\psi\in\type \quad\text{ iff }\quad \text{there is } \type'\in\qs \text{ with } \psi\in\type',\qquad\qquad \text{ for each }\exists x\,\psi\in\FL \text{ and } \type\in\qs.
\end{equation*}
With a $\FL$-quasistate $\qs$, we associate the following \emph{realisability $\mathcal{L}$-sentence}:
\begin{equation*}
\hat{\qs} \ \  = \ \   \forall x\,\bigvee\nolimits_{\type\in\qs} \type(x) \ \ \land \ \ \bigwedge\nolimits_{\type\in\qs} \exists x\,t(x),
\end{equation*} 
where $\type(x)$ denotes the conjunction of formulas in $\type$.
If $\QS$ is a set of $\FL$-quasistates, then $\hat{\QS}=\bigvee \{ \hat{\qs} \mid \qs\in\QS \}$.

Let $\vp$ be an $\mathcal{L}$-formula. For each $\Box_i\psi\in\sub\vp$, we introduce a fresh unary predicate letter $P_{\Box_i\psi}$ and define 
the \emph{surrogate} of $\Box_i\psi$ by taking $P_{\Box_i\psi}(x)$ if $\psi$ has a free variable and $\forall x\,P_{\Box_i\psi}(x)$ if  $\psi$ is closed.
Denote by~$\sur{1}{\vp}$
the $\mathcal{L}_1$-formula obtained from~$\vp$ by replacing $\Box_2\psi$ not in the scope of another $\Box_2$ with its surrogate; the $\mathcal{L}_2$-formula  $\sur{2}{\vp}$ is defined symmetrically.

\bigskip

As has already been observed in the propositional setting~\cite{DBLP:journals/jsyml/KrachtW91,DBLP:journals/iandc/BaaderGT06}, the analysis of local consequence is substantially more intricate than that of global consequence. We begin our proof with the latter case
and show the following lemma which states transfer of \textit{xd}-completeness of the global consequence and from which the transfer of decidability of global consequence follows by a simple enumeration of all possible $\FL$-quasistates.
\begin{lemma}\label{lemma:global}
Let $\mathcal{C}_1, \mathcal{C}_2$ be non-empty classes of frames closed under disjoint unions and  $L_i$
be one-variable modal logics globally $\textit{xd}$-complete for $\mathcal{C}_i$, for both $i = 1,2$. Let $\varphi$ and $\psi$ be $\mathcal{L}$-formulas. Denote $\FL = \sub\varphi\cup\sub\psi$. Then the following are equivalent\textup{:}
\begin{description}
\item[(G1)] $\varphi\not\vdash_{L_1\otimes L_2}^*\psi$\textup{;}
\item[(G2)] there is an $\textit{xd}$-model $\mathfrak{M}$ based on a frame from $\mathcal{C}_1\otimes\mathcal{C}_2$ such that $\mathfrak{M}\models\varphi$ but $\mathfrak{M}\not\models\psi$\textup{;}
\item[(G3)] there is a set $\QS$  of $\FL$-quasistates such that
\begin{itemize}
\item[\bf{(G3.1)}] $\sur{1}{\varphi}\land \sur{1}{\smash{\hat{\QS}}} \not\vdash_{L_1}^* \sur{1}{\psi}$,
\item[\bf{(G3.2)}] $\sur{1}{\varphi}\land \sur{1}{\smash{\hat{\QS}}} \not\vdash_{L_1}^* \neg\sur{1}{\hat{\qs}_i}$, for each $\qs_i\in\QS$,
\item[\bf{(G3.3)}] $\sur{2}{\smash{\hat{\QS}}} \not\vdash_{L_2}^* \neg\sur{2}{\hat{\qs}_i}$, for each $\qs_i\in\QS$.
\end{itemize}
\end{description}
\end{lemma}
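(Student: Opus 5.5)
I would prove the cycle (G1)$\Rightarrow$(G3)$\Rightarrow$(G2)$\Rightarrow$(G1).

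\emph{(G2)$\Rightarrow$(G1).} This is soundness. Every frame in $\mathcal{C}_1\otimes\mathcal{C}_2$ validates $L_1$ and $L_2$ in the \textit{xd}-sense. The set of formulas true in a fixed model is closed under MP, Generalisation and all Necessitations. So $\mathfrak{M}\models\varphi$ forces $\mathfrak{M}\models\chi$ for every $\chi$ with $\varphi\vdash^*_{L_1\otimes L_2}\chi$. Since $\mathfrak{M}\not\models\psi$, we get $\varphi\not\vdash^*\psi$.

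\emph{(G1)$\Rightarrow$(G3).} Take the canonical-style model witnessing (G1): a Kripke model $\mathfrak{N}$, possibly non-standard, for the fusion with $\mathfrak{N}\models\varphi$ and $\mathfrak{N}\not\models\psi$. If Kripke completeness is not yet available, I would instead work with the Lindenbaum-style set of all maximal consistent sets closed under the global theory of $\varphi$.
\begin{itemize}
\item Let $\QS$ be the set of $\FL$-quasistates realised at worlds, i.e.\ $\qs(w)=\{\type_\FL(w,a)\mid a\in\Delta_w\}$. Then $\hat{\QS}$ is globally true.
\item Expand the model by interpreting each $P_{\Box_i\chi}$ as the extension of $\Box_i\chi$. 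Since surrogates are exact there, the translated formulas $\sur{i}{\cdot}$ have the same truth values as the originals.
\item Forgetting $R_2$, respectively $R_1$, gives an $L_1$-model, respectively an $L_2$-model; this uses only that each $L_i$-fragment is sound in the fusion model.
\end{itemize}
Each quasistate in $\QS$ is realised somewhere, which gives (G3.2) and (G3.3); falsity of $\psi$ at some world gives (G3.1). If genuine Kripke models are not available, I would argue syntactically: e.g.\ if $\sur{1}{\varphi}\land\sur{1}{\hat{\QS}}\vdash^*_{L_1}\neg\sur{1}{\hat{\qs}_i}$, un-substitute the surrogates (Substitution is a rule of $L_1\otimes L_2$) to get $\varphi\land\hat{\QS}\vdash^*\neg\hat{\qs}_i$. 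Choosing $\QS$ as the quasistates consistent with the global theory of $\varphi$ makes this a contradiction.

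\emph{(G3)$\Rightarrow$(G2).} This is the main step: a cactus construction combined with quasistate matching.
\begin{itemize}
\item By global completeness of $L_1$ and (G3.1), there is an \textit{xd}-model on a $\mathcal{C}_1$-frame where $\sur{1}{\varphi}\land\sur{1}{\hat{\QS}}$ holds globally and $\sur{1}{\psi}$ fails somewhere.
\item By (G3.2), for each $\qs_i$ there is a model where the premises hold globally and $\sur{1}{\hat{\qs}_i}$ holds at some world. Similarly, (G3.3) gives $L_2$-models realising each $\qs_i$.
\item Closure under disjoint unions lets us bundle these into one $\mathcal{C}_1$-model $\mathfrak{M}_1$ and one $\mathcal{C}_2$-model $\mathfrak{M}_2$ in which every world realises some $\qs\in\QS$, read via full $\FL$-types with surrogates interpreted as the $\Box$-formulas.
\item Build a limit cactus frame (Proposition~\ref{prop:cactus-frame}). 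Start with $\mathfrak{M}_1$. At each thorn $v$, realising quasistate $\qs$, attach a copy of the component of $\mathfrak{M}_2$ rooted at a world realising $\qs$, and alternate.
\end{itemize}

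The main obstacle is gluing the domains. At the identified world $v$ the two models realise the same quasistate but possibly with differently sized domains and different element-to-type maps. I would fix this by making every model \emph{homogeneous}. First take $\kappa$-fold disjoint copies of domain elements (sums of models) so that each type in $\qs(w)$ is realised by exactly $\kappa$ elements. Then at each thorn choose a bijection between the domains that preserves $\FL$-types. Expanding domains must be maintained along the attached component. I would handle this by transporting the bijection forward: elements already present are mapped into successor worlds by their trace. New elements in each attached model are added with fresh names, using the fact that a union of increasing domains remains consistent.

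Finally, prove by induction on $\chi\in\FL$ that truth in the glued model agrees with the types. The $\Box_1$ case uses $\mathfrak{M}_1$-components, and the $\Box_2$ case uses $\mathfrak{M}_2$-components via the surrogate correctness. Then $\varphi$ holds everywhere and $\psi$ fails at the root, which gives (G2).
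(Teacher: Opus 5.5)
Your argument is correct in outline. The soundness step and the syntactic form of (G1)$\Rightarrow$(G3) are exactly the paper's: take $\QS$ to be the $\FL$-quasistates $\qs$ with $\varphi\not\vdash^*_{L_1\otimes L_2}\neg\hat{\qs}$, and undo the surrogates by Substitution. You also need $\varphi\vdash^*_{L_1\otimes L_2}\hat{\QS}$, which you left implicit. It holds because the disjunction of $\hat{\qs}$ over all $\FL$-quasistates is a theorem of $L_1\otimes L_2$, and it is what turns $\varphi\land\hat{\QS}\vdash^*\chi$ into $\varphi\vdash^*\chi$. Drop your model-based variant: no Kripke model of the fusion is available before completeness is proved.

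The genuine difference lies in (G3)$\Rightarrow$(G2).

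\emph{The paper's route.} It never glues actual domains. It turns each component model into a quasimodel whose individuals are runs (maps from worlds to $\FL$-types). At a thorn $v$ it forms \emph{all} pairs of runs whose types agree at $v$. Only at the end does it build a model, with runs as individuals. Since quasistates record only which types occur, the condition $\mathfrak{R}(v)=\mathfrak{R}^v(v)$ suffices and no cardinalities need to match. The same combinatorics adapts to tapered runs for local consequence and handles constant domains by requiring total runs. It also preserves finiteness, which underlies the transfer of the fmp in the local case.

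\emph{Your route.} You keep genuine models and inflate every domain to $\Delta_w\times\kappa$, which is truth-preserving precisely because there is no equality. Each realised type is then realised exactly $\kappa$ times, and you identify individuals at each thorn by a type-preserving bijection. Because the cactus is tree-like, every glued individual contains at most one element of each component copy. So the identifications are consistent, expanding domains survive, and the truth lemma goes through. This is the homogeneous-model technique of Lemma~\ref{lem:eexpand} transplanted into the cactus. It is more semantic and avoids runs, but it forces infinite domains and so says nothing about finite models.

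Two small repairs are needed.
\begin{itemize}
\item Before inflating, fix one infinite $\kappa$ at least the size of every domain of your bundled models $\mathfrak{M}_1,\mathfrak{M}_2$, so that the bijections exist at every thorn.
\item Attach a whole copy of the $\mathcal{C}_2$-frame, identifying the chosen world with the thorn so that all its other worlds become new thorns. Do not attach only the component rooted at that world: $\mathcal{C}_1,\mathcal{C}_2$ are only assumed closed under disjoint unions, not under generated subframes, so the resulting frame might fall outside $\mathcal{C}_1\otimes\mathcal{C}_2$.
\end{itemize}
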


\begin{proof}
Implication $\textbf{(G2)}\Rightarrow\textbf{(G1)}$ follows from soundness of $\vdash^*_{L_1\otimes L_2}$.

\smallskip

$\textbf{(G1)}\Rightarrow\textbf{(G3)}$ Suppose $\varphi\not\vdash_{L_1\otimes L_2}^*\psi$ and consider the following set~$\QS$ of  $\FL$-quasistates:
\begin{equation*}
\QS = \bigl\{ \qs \text{ a $\FL$-quasistate } \mid 
\varphi\not\vdash^*_{L_1\otimes L_2}\neg\hat{\qs} 
\bigr\}.
\end{equation*}
Clearly, $\bigvee_{\qs} \hat{\qs}$, for all $\FL$-quasistates $\qs$, is an $L_1\otimes L_2$-tautology, and so $\varphi\vdash^*_{L_1\otimes L_2} \bigvee_{\qs} \hat{\qs}$. Observe that $\hat{\QS}$ consists of the disjuncts left after removing from $\bigvee_{\qs} \hat{\qs}$ all $\qs$ with $\varphi\vdash^*_{L_1\otimes L_2}\neg \hat{\qs}$. Thus, we obtain \mbox{$\varphi \vdash^*_{L_1\otimes L_2} \hat{\QS}$}. 
Next, $\varphi\not\vdash^*_{L_1\otimes L_2} \psi$ implies  $\varphi\land\hat{\QS}\not\vdash^*_{L_1\otimes L_2} \psi$. Moreover, by definition, we also have \mbox{$\hat{\QS}\not\vdash^*_{L_1\otimes L_2} \neg \hat{\qs}_i$} and therefore $\varphi\land\hat{\QS}\not\vdash^*_{L_1\otimes L_2} \neg \hat{\qs}_i$,  for every $\qs_i\in \QS$.
Then~\textbf{(G3.1)}--\textbf{(G3.3)} follow by contraposition: for example,  
if we assume $\varphi\land\hat{\QS}\not\vdash^*_{L_1\otimes L_2} \psi$ but $\sur{1}{\varphi}\land \sur{1}{\smash{\hat{\QS}}} \vdash_{L_1}^* \sur{1}{\psi}$, then the latter also provides an inference in $L_1\otimes L_2$, whence, as a substitution instance, we obtain $\varphi\land\hat{\QS}\vdash^*_{L_1\otimes L_2} \psi$ contrary to our assumption.

\medskip

$\textbf{(G3)}\Rightarrow\textbf{(G2)}$ Suppose there is a set $\QS$  of $\FL$-quasistates such that \textbf{(G3.1)}--\textbf{(G3.3)} hold. 
To describe the construction of the witnessing $\textit{xd}$-model $\mathfrak{M}$, we require the following set of definitions.

\begin{definition}
A \emph{$\FL$-run} $r$ through a frame $(W, R_i)$ is a function from an $R_i$-closed subset of $W$, denoted by~$\dom r$, to~$\FL$-types. For $w\in \dom r$, a run $r$ is called
\begin{itemize}
\item \emph{$i$-coherent at $w$}  if $\Box_i \psi \in r(w)$ implies $\psi\in r(w')$,  for all $w'\in R_i(w)$;
\item \emph{$i$-saturated at $w$} if $\Box_i \psi \in r(w)$ whenever $\Box_i\psi\in\FL$ and $\psi\in r(w')$, for all $w'\in R_i(w)$. 
\end{itemize}
We say that $r$ is \emph{$i$-coherent on $W' \subseteq W$} if $r$ is $i$-coherent at each $w\in \dom r\cap W'$ and \emph{$i$-coherent} if $r$ is $i$-coherent on $W$; similarly for $r$ being $i$-saturated.

An \emph{$i$-quasimodel $\mathfrak{Q}$ for $\FL$} is a tuple of the form $((W, R_i),  w_0, \mathfrak{R})$, where  $(W, R_i)\in\mathcal{C}_i$, $w_0\in W$ 
and $\mathfrak{R}$ is a set of $\FL$-runs through $(W, R_i)$ such that each $r\in\mathfrak{R}$ is $i$-coherent and $i$-saturated, and each $\mathfrak{R}(w)$ is a $\FL$-quasistate, for $w\in W$, where we denote $\mathfrak{R}_w = \{ r\in\mathfrak{R} \mid w\in\dom r \}$ and $\mathfrak{R}(w) = \{ r(w)\mid r\in\mathfrak{R}_w \}$.
\end{definition}

\begin{definition}
An \emph{$i$-cactus $((W, R_1, R_2), w_0, \mathfrak{R})$ for $\FL$ with thorns $V\subseteq W$} is defined inductively.
\begin{description}
\item[basis] A $1$-quasimodel $((W^0, R^0_1), w_0, 
\mathfrak{R}^0)$ for $\FL$ with a $1$-frame $(W^0, R^0_1)\in\mathcal{C}_1$ gives rise to a $2$-cactus $((W^0, R_1^0, \emptyset), w_0, %\qs^0, 
\mathfrak{R}^0)$ for $\FL$ with thorns $W^0$. 
Note that each $r\in \mathfrak{R}^0$ is vacuously $2$-coherent, but not necessarily $2$-saturated. A $1$-cactus is defined symmetrically.

\item[induction step] Let $((W, R_1, R_2), w_0, 
\mathfrak{R})$ be a $2$-cactus for $\FL$ with thorns $V$ and $\mathfrak{Q}^v$, for $v \in V$, a collection of $2$-quasimodels  $\mathfrak{Q}^v = ((W^v, R^v_2), v, 
\mathfrak{R}^v)$ for $\FL$ such that, for each $v\in V$, we have~\eqref{eq:cactus-frame} and 
\begin{equation*}
\mathfrak{R}(v) = \mathfrak{R}^v(v). 
\end{equation*}
Then $((W', R_1, R_2 \cup \bigcup\limits_{v\in V} R_2^v), w_0, 
\mathfrak{R}')$ is a $1$-cactus for $\FL$ with thorns $W'\setminus W$, where $W'  =  W  \cup \bigcup\limits_{v\in V} W^v$,
\begin{equation*}
\mathfrak{R}' \ \ = \ \ \bigcup_{r\in\mathfrak{R}} \bigl\{  r \cup \hspace*{-1.2em}\bigcup_{v\in \dom r\cap V}\hspace*{-1.2em} r^v \mid    r^v\in(\mathfrak{R}^v)_v \text{ with } r(v) = r^v(v), \text{ for } v\in \dom r\cap V \bigr\} \ \ \cup \ \ \bigcup_{v\in V} (\mathfrak{R}^v \setminus (\mathfrak{R}^v)_v),
\end{equation*}
and $f\cup \bigcup_{v\in V'} f^v$ denotes the partial function that coincides with $f$ on $\dom f$ and with $f^v$ on $\dom f^v$, for each~$v\in V'$; clearly, the $\Phi$-runs in $\mathfrak{R}'$ are well-defined. We say that the $\mathfrak{Q}^v$ are \emph{grafted} onto the 2-cactus.
It can be seen that each $r'\in\mathfrak{R}'$ is $1$- and $2$-coherent as well as $2$-saturated on $W'$, and is $1$-saturated on $W$ but not necessarily on $W'\setminus W$.
Given a $1$-cactus for $\FL$ and a collection of $1$-qusimodels, a 2-cactus for $\FL$ is defined symmetrically.
\end{description}
A \emph{limit $i$-cactus for $\FL$} is obtained as the limit of the sequence of cacti starting from a $i$-quasimodel. 
\end{definition}

We construct a limit 1-cactus for $\FL$ by using the following procedure.  We first observe that, by~\textbf{(G3.2)} and global Kripke completeness of $L_1$, we have $\sur{1}{\varphi}\in\sur{1}{\type}$ and so $\varphi\in\type$ and thus $\sur{2}{\varphi}\in\sur{2}{\type}$, for every $\type\in\qs$ and $\qs\in\QS$. 
We then begin, by~\textbf{(G3.1)} and global Kripke completeness of $L_1$, with an $\textit{xd}$-model~$\mathfrak{M}^0$ based on $(W^0, R^0_1)\in\mathcal{C}_1$ such that $\mathfrak{M}^0,w\models \sur{1}{\varphi}$ and $\mathfrak{M}^0,w\models  \sur{1}{\smash{\hat{\QS}}}$, for all $w\in W^0$, but $\mathfrak{M}^0,w_0\not\models\sur{1}{\psi}$, for some $w_0\in W^0$. 
With each  $w\in W^0$ and $e\in\Delta^0_w$, we associate a uniquely determined $\FL$-type $\type_{w,e}$ such that
\begin{equation*}
\mathfrak{M}^0,w\models \sur{1}{\type_{w,e}}[e].
\end{equation*}
Let $\mathfrak{R}^0$  be the set of functions $r_e$, for $e\in\bigcup_{w\in W^0}\Delta^0_w$, that map each $w\in W^0$ with $e\in\Delta^0_w$ to the type~$\type_{w,e}$. It follows that  each $r_e$ is a  $1$-coherent and $1$-saturated $\Phi$-run. 
The $1$-quasimodel $((W^0, R^0_1), w_0, 
\mathfrak{R}^0)$  for~$\FL$ gives rise to our initial $2$-cactus for~$\FL$ with thorns~$W^0$. Observe that, for every $\chi\in\FL$  and $i =1,2$, we have
\begin{equation*}
\mathfrak{M}^0,w\models \sur{i}{\chi}[e] \qquad\text{ iff }\qquad \sur{i}{\chi}\in \sur{i}{\type_{w,e}}, \qquad\text{ for all }w\in W^0 \text{ and } e\in\Delta^0_w.
\end{equation*}

Let  $\mathfrak{C} = ((W, R_1, R_2), w_0,
\mathfrak{R})$ be a $2$-cactus for~$\FL$ with thorns $V\subseteq W$. Take any  $v\in V$ and consider the $\FL$-quasistate $\mathfrak{R}(v) = \qs^v \in \QS$ selected for~$v$. 
By~\textbf{(G3.3)} and global Kripke completeness of~$L_2$, there is an $\textit{xd}$-model $\mathfrak{M}^v$ based on $(W^v, R_2^v)\in\mathcal{C}_2$ such that  $\mathfrak{M}^v,v\models \sur{2}{(\hat{\qs}^v)}$ and $\mathfrak{M}^v, w\models  \sur{2}{\smash{\hat{\QS}}}$, for all~$w\in W^v$. 
Similarly to the previous step, with each $w\in W^v$ and $e\in\Delta_w^v$, we associate a uniquely determined $\FL$-type~$\type^v_{w,e}$ such that 
\begin{equation*}
\mathfrak{M}^v,w\models \sur{2}{(\type^v_{w,e})}[e].
\end{equation*}
Let $\mathfrak{R}^v$  be the set of functions $r^v_e$, for $e\in\bigcup_{w\in W^v}\Delta_w^v$, that map each $w\in W^v$ with $e\in \Delta^v_w$ to the type~$\type^v_{w,e}$. We have $\mathfrak{R}^v(v) = \qs^v = \mathfrak{R}(v)$. As before, $\mathfrak{Q}^v = ((W^{v}, R_2^v), v,
\mathfrak{R}^v)$, for each thorn $v\in V$, is a $2$-quasimodel for $\FL$.  We can assume that the $W^v$ satisfy~\eqref{eq:cactus-frame}, and so we extend the 2-cactus $\mathfrak{C}$ to a $1$-cactus $\mathfrak{C}'$ for $\FL$ by grafting 2-quasimodels $\mathfrak{Q}^v$ onto~$\mathfrak{C}$.

By repeating this procedure, we obtain a limit 1-cactus~$\mathfrak{C} = ((W, R_1, R_2), w_0, %\qs, 
\mathfrak{R})$ for $\FL$. Since the $\mathfrak{M}^v$ realise the respective quasimodels, for every $\chi\in\FL$,  $i = 1, 2$ and  $v\in W$, we have
\begin{equation*}
\mathfrak{M}^{v},w\models \sur{i}{\chi}[e] \
\qquad\text{ iff }\qquad \sur{i}{\chi}\in \sur{i}{(r_e(w))}, \qquad\text{ for all }w\in W^v \text{ and } e\in\Delta^{v}_w.
\end{equation*}
We now construct a Kripke model $\mathfrak{M}$ from the limit 1-cactus $\mathfrak{C}$ for $\FL$. The frame is taken from~$\mathfrak{C}$ (by assumption, both $\mathcal{C}_1$ and $\mathcal{C}_2$ are closed under disjoint unions), the domain $\Delta_w=\mathfrak{R}_w$,  for each $w\in W$, and $\mathfrak{M},w \models P[r]$ iff $P(x)\in r(w)$, for every unary predicate letter $P$, $w\in W$ and   $r\in\mathfrak{R}_w$. 
By induction on the structure of $\chi\in\FL$, we show that, for each $w\in W$ and $r\in \mathfrak{R}_w$,
\begin{equation*}
\mathfrak{M},w\models \chi[r] \qquad\text{ iff }\qquad \sur{1}{\chi}\in \sur{1}{(r(w))}\qquad\text{ iff }\qquad \sur{2}{\chi}\in \sur{2}{(r(w))}.
\end{equation*}
The basis of induction is by definition. The cases of the Boolean connectives and quantifiers follow from the definition of  quasistates. The cases of the modalities follow from the fact that the $\Phi$-runs in $\mathfrak{R}$ are $i$-coherent and $i$-saturated.
Thus, $\mathfrak{M},w\models \varphi$, for all $w\in W$, but $\mathfrak{M},w_0\not\models \psi$.  
\end{proof}

The claims in Theorem~\ref{thm:transfer:equality-free} for \textit{cd}-models are obtained by a straightforward modification of the construction in Lemma~\ref{lemma:global} (the runs are required to be total functions on the sets of worlds).

% !TEX root =  fusions-submission.tex

\bigskip

We now turn to the case of local consequence, which is technically more elaborate. The construction is also inspired by the cactus (quasi)models, but instead of starting from quasimodels and grafting more quasimodels to the cactus to obtain the required (quasi)model in the limit, the proof for the local consequence proceeds in the opposite direction: it starts from the `leaves' and grafts smaller cacti onto a quasimodel. In fact, this proof requires only a finite number of steps, which depends on the so-called alternation depth (which is formally defined below, but, intuitively, reflects the number of switches between the modalities in the formula), while the `relevant'  part of types in quasimodels gets smaller and smaller as we move away from the world where the given formula is to be satisfied. 

For an $\mathcal{L}$-formula $\varphi$, consider the following set of subformulas:
\begin{equation*}
\sur{1}{\Theta}(\varphi) = \{ P(x) \mid P(x) \in \sub\varphi \} \cup \{ \chi \in \sub \Box_2\psi \mid \Box_2\psi \in \sub\varphi\}.
\end{equation*}
It can be thought of as %all `atomic' subformulas that have all the outermost $\Box_1$-operators removed. 
removing the outermost `layer' of Booleans, quantifiers and $\Box_1$-operators.
The set $\sur{2}{\Theta}(\varphi)$ is defined symmetrically. For a set $\Phi$ of $\mathcal{L}$-formulas, let $\sur{i}{\Theta}(\Phi) = \bigcup_{\varphi\in\Phi} \sur{i}{\Theta}(\varphi)$.

We now define the notion of \emph{alternation depth} of a formula using a converse operation to $\sur{i}{\Theta}$. More precisely, for an $\mathcal{L}$-formula $\psi$ with surrogates for $\Box_2$-subformulas, we denote
\begin{equation*}
\rsur{1}{\psi} = \psi[ \sur{1}{(\Box_2\chi)}/\sur{2}{(\Box_2\chi)} \mid  \sur{1}{(\Box_2\chi)} \in \sub\psi].
\end{equation*}
%
%{\color{red} should be $\rsur{1}{\sur{1}{\varphi}}$?}
In other words, operation $\rsur{1}{\cdot}$ restores the `meaning' of  the surrogates by re-introducing their $\Box_2$-modal\-ities, but the replacement subformulas are built from surrogates for their own $\Box_1$-subformulas instead (and so contain no $\Box_1$-modalities themselves); $\rsur{2}{\psi}$ is defined symmetrically.  It should be clear that, if these operations are applied repeatedly, alternating between $\rsur{1}{\cdot}$ and $\rsur{2}{\cdot}$, then we arrive at the original formula without surrogates: more precisely, for any $\mathcal{L}$-formula $\varphi$, we have $\varphi = \rsur{i}{\rsur{2}{\rsur{1}{(\sur{1}{\varphi})}}\cdots}$, for some
%a sequence of alternating $\rsur{1}{\cdot}$ and $\rsur{2}{\cdot}$ operations and 
$i\in\{1,2\}$. Denote by $\sur{1}{\adp}(\varphi)$ the minimum length of such a sequence; symmetrically,  $\sur{2}{\adp}(\varphi)$ is the smallest number of $\rsur{2}{\cdot}$ and $\rsur{1}{\cdot}$ operations such that $\varphi = \rsur{i}{\rsur{1}{\rsur{2}{(\sur{2}{\varphi})}}\cdots}$. For example, $\sur{1}{\adp}{(\Box_1\Box_2 p)} = 1$ and $\sur{1}{\adp}{(\Box_2\Box_2\Box_1r)} = 2$; thus, $\sur{1}{\adp}(\Box_1\Box_2 p \land \Box_2\Box_2\Box_1 r) = 2$.  
It follows from the definitions that
\begin{equation*}
|\sur{1}{\adp}{(\varphi)} - \sur{2}{\adp}{(\varphi)}| \leq 1, 
\end{equation*}
depending on the leads of the longest alternation sequences in $\varphi$. Denote $\sur{i}{\adp}(\Phi) = \sur{i}{\adp}(\bigwedge\nolimits_{\varphi\in\Phi}\varphi)$.
Let $\adp(\varphi) = \min(\sur{1}{\adp}{(\varphi)}, \sur{2}{\adp}{(\varphi)})$ and $\adp(\Phi) = \adp(\bigwedge\nolimits_{\vp\in\Phi}\vp)$.  

Observe that,  for any set $\Phi$ of $\mathcal{L}$-formulas and $i= 1,2$ with $\adp(\Phi) = \sur{i}{\adp}(\Phi)$, we have
\begin{equation}
\tag{$\dagger$}
\label{eq:adp}
%\begin{array}{l}
\adp(\sur{i}{\Theta}(\Phi)) = \max\{0, \adp(\Phi) - 1\} \quad\text{ and }\quad \adp(\sur{i}{\Theta}(\Phi)) = \sur{3-i}{\adp}(\sur{i}{\Theta}(\Phi)).
\end{equation}

In addition to $\adp$, we
define the \emph{modal $1$-depth $\sur{1}{\md}(\varphi)$ of $\varphi$} as the maximum number of nested $\Box_1$ (ignoring $\Box_2)$; the \emph{modal $2$-depth $\sur{2}{\md}(\varphi)$} is defined symmetrically. We also set $\sur{i}{\md}(\Phi)=\sur{i}{\md}( \bigwedge\nolimits_{\varphi\in\Phi}\varphi)$. The modal $i$-depth will be used to describe the decreasing number of `relevant' formulas as we move away from the world witnessing the formula, while the alternation depth will be used to count the number of requires $R_1$- and $R_2$-components along the paths.

Finally, for an $\mathcal{L}$-formula  $\varphi$, consider
\begin{equation*}
\QS_i(\varphi) = \bigl\{ \qs \text{ a $\sur{i}{\Theta}(\varphi)$-quasistate such that }  \not\vdash_{L_1\otimes L_2}\neg\hat{\qs}\bigr\},
\end{equation*}
and, as before, let $\hat{\QS}_i(\varphi)$ denote the disjunction of all $\hat{\qs}$, for $\qs\in \QS_i(\varphi)$. 

With these definitions at hand, we can formulate our criterion for deciding local consequence, as claimed in Theorem~\ref{thm:transfer:equality-free}.

\begin{lemma}\label{lemma:local}
Let $\mathcal{C}_1, \mathcal{C}_2$ be non-empty classes of frames closed under disjoint unions and  
%$L_i$ be one-variable modal logics $\textit{xd}$-complete for $\mathcal{C}_i$, 
$L_i = \mathop{\mathit{Log}_{\textit{xd}}} \mathcal{C}_i$,
for both $i = 1,2$. For all $\mathcal{L}$-formulas $\varphi$ and $i = 1,2$ with $\adp(\varphi) = \sur{i}{\adp}{(\varphi)}$, the following are equivalent\textup{:}
\begin{description}
\item[(L1)] $\not\vdash_{L_1\otimes L_2} \varphi$\textup{;} 
\item[(L2)] there is an $\textit{xd}$-model $\mathfrak{M}$ based on a $\mathcal{C}_1\otimes\mathcal{C}_2$-frame such that $\mathfrak{M}\not\models\varphi$\textup{;}
\item[(L3)] $\not\vdash_{L_i} \smash{\Box_i^{\leq \sur{i}{\md}(\varphi)}} \sur{i}{(\hat{\QS}_i(\varphi))} \to \sur{i}{\varphi}$.
\end{description}
\end{lemma}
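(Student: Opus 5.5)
We prove the cycle \textbf{(L2)}$\Rightarrow$\textbf{(L1)}$\Rightarrow$\textbf{(L3)}$\Rightarrow$\textbf{(L2)}, by induction on $\adp(\varphi)$. The first implication is soundness, as in Lemma~\ref{lemma:global}: every $\mathcal{C}_1\otimes\mathcal{C}_2$-frame validates $L_1\otimes L_2$.

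For \textbf{(L1)}$\Rightarrow$\textbf{(L3)} we argue by contraposition. Suppose $\vdash_{L_i}\Box_i^{\leq m}\sur{i}{(\hat{\QS}_i(\varphi))}\to\sur{i}{\varphi}$, where $m=\sur{i}{\md}(\varphi)$. Substituting the original $\Box_{3-i}$-subformulas for their surrogates gives $\vdash_{L_1\otimes L_2}\Box_i^{\leq m}\hat{\QS}_i(\varphi)\to\varphi$. The formula $\bigvee_{\qs}\hat{\qs}$, taken over all $\sur{i}{\Theta}(\varphi)$-quasistates, is an $\mathbf{FO}$-tautology. Each $\qs\notin\QS_i(\varphi)$ satisfies $\vdash\neg\hat{\qs}$, so dropping these disjuncts gives $\vdash_{L_1\otimes L_2}\hat{\QS}_i(\varphi)$. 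By Necessitation, $\vdash_{L_1\otimes L_2}\Box_i^{\leq m}\hat{\QS}_i(\varphi)$, and hence $\vdash\varphi$. One point needs care: quasistates are built over $\sur{i}{\Theta}(\varphi)$, which contains the surrogated formulas and the atoms, so that $\hat{\qs}$ fixes the value of each surrogate pointwise. This is exactly what makes the substitution sound.

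The main work is \textbf{(L3)}$\Rightarrow$\textbf{(L2)}, again by induction on $\adp(\varphi)$.
\begin{itemize}
\item \emph{Base case.} If $\adp(\varphi)=0$, then $\varphi$ contains at most one kind of modality, and $\mathcal{C}_i$-completeness of $L_i$ gives the model directly.
\item \emph{First component.} By \textbf{(L3)} and $L_i=\Log_{\textit{xd}}\mathcal{C}_i$, there is an \textit{xd}-model $\mathfrak{M}^0$ on a $\mathcal{C}_i$-frame with a world $w_0$ such that $\mathfrak{M}^0,w_0\not\models\sur{i}{\varphi}$ and $\sur{i}{(\hat{\QS}_i(\varphi))}$ holds at every world reachable from $w_0$ in at most $m$ $R_i$-steps. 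Worlds farther away are irrelevant to $\varphi$, since its $\Box_i$-depth is at most $m$. As in Lemma~\ref{lemma:global}, we read off runs, so that each relevant world $v$ carries a quasistate $\qs^v\in\QS_i(\varphi)$.
\item \emph{Grafting.} Since $\not\vdash_{L_1\otimes L_2}\neg\hat{\qs}^v$, and since by~\eqref{eq:adp} the formula $\hat{\qs}^v$ has smaller alternation depth led by the other modality, the induction hypothesis, applied to $\neg\hat{\qs}^v$ with the index $3-i$, yields a $\mathcal{C}_1\otimes\mathcal{C}_2$-model $\mathfrak{N}^v$ with a world $u_v$ satisfying $\hat{\qs}^v$. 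We take these models pairwise disjoint and identify $u_v$ with $v$; this uses closure of $\mathcal{C}_1,\mathcal{C}_2$ under disjoint unions. Merging in the style of the cactus construction gives a $\mathcal{C}_1\otimes\mathcal{C}_2$-frame, as in Proposition~\ref{prop:cactus-frame}.
\end{itemize}

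The step I expect to be the main obstacle is merging domains at the identified worlds. Both $\mathfrak{M}^0$ at $v$ and $\mathfrak{N}^v$ at $u_v$ realise exactly the types of $\qs^v$, but possibly with different multiplicities, and runs must be glued element by element. I would first make all multiplicities equal, for example infinite, by taking $\kappa$-fold disjoint copies of each domain element. Copying elements preserves truth of one-variable formulas, and the expanding-domain condition survives. Then a bijection on each type class matches the two domains.

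Finally, truth of every subformula is checked by induction, as in Lemma~\ref{lemma:global}. $\Box_i$-subformulas are evaluated inside $\mathfrak{M}^0$. $\Box_{3-i}$-subformulas are decided by the surrogates, which $\hat{\qs}^v$ transfers correctly into $\mathfrak{N}^v$. The bounded depth $m$ ensures that only worlds carrying a quasistate from $\QS_i(\varphi)$ matter.
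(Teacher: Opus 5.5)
Your directions \textbf{(L2)}$\Rightarrow$\textbf{(L1)}$\Rightarrow$\textbf{(L3)} are fine and match the paper. The gap is in the grafting step of \textbf{(L3)}$\Rightarrow$\textbf{(L2)}.

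\emph{The glued frame leaves the class.} The induction hypothesis in the form \textbf{(L1)}$\Rightarrow$\textbf{(L2)} gives you an arbitrary $\mathcal{C}_1\otimes\mathcal{C}_2$-model $\mathfrak{N}^v$. Identifying $u_v$ with $v$ then gives $v$ $R_i$-successors from both $\mathfrak{M}^0$ and $\mathfrak{N}^v$. The $R_i$-reduct of the result is two $\mathcal{C}_i$-frames amalgamated at a point, not a disjoint union, so closure under disjoint unions does not put it in $\mathcal{C}_i$. For example, take $\mathcal{C}_i$ to be all frames whose relation is an equivalence relation; gluing two classes at a point breaks transitivity. Proposition~\ref{prop:cactus-frame} does not help here: in cactus frames, the piece grafted at a thorn carries only the other relation.

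\emph{Truth is not preserved either,} even when $\mathcal{C}_i$ is the class of all frames. The formula $\hat{\qs}^v$ only constrains $\sur{i}{\Theta}(\varphi)$, which omits $\Box_i$-subformulas that do not lie under a $\Box_{3-i}$. Take $i=1$ and $\varphi=\Box_1\neg(\Box_1 p\land\Box_2 q)$, so that $\adp(\varphi)=\sur{1}{\adp}(\varphi)=1$. The refuting $R_1$-successor $v$ of $w_0$ needs $\Box_1 p$, but $\Box_1 p\notin\sur{1}{\Theta}(\varphi)$. So nothing prevents $u_v$ from having an $R_1$-successor in $\mathfrak{N}^v$ where $p$ fails, and after gluing the countermodel is lost.

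\emph{Ungrafted worlds.} Worlds of $\mathfrak{M}^0$ beyond distance $m$ from $w_0$, or unreachable from it, receive no graft. They are left $R_{3-i}$-isolated, which can again violate $\mathcal{C}_{3-i}$.

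The missing idea is to strengthen what the induction produces. The paper proves the equivalences via \textbf{(L4)} and uses the induction hypothesis in that form. It yields a tapered $(3-i)$-cactus whose root lies outside the domain and range of $R_i$. The $R_i$-part of that cactus is a $\mathcal{C}_i$-frame on the remaining worlds, and runs need not be $i$-saturated at the root: the $\Box_i$-information there is supplied by the host quasimodel. Grafting such cacti makes the $R_i$-reduct a genuine disjoint union, and $\Box_i$-formulas at $v$ are evaluated only over the host's $R_i$-successors.

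The paper also grafts degenerate cacti for $\emptyset$ onto the remaining worlds, so that $R_{3-i}$ covers every world. It then grafts one more cactus onto $w_0$ to obtain a fully grafted cactus. Your $\kappa$-fold copying to match domains is an acceptable substitute for the paper's combination of all compatible runs, but it does not repair the frame-level problem.
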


Completeness of $L_1\otimes L_2$ for $\mathcal{C}_1\otimes \mathcal{C}_2$ claimed in Theorem~\ref{thm:transfer:equality-free}  then follows immediately. For decidability, observe that, while the definition of $\QS_i(\varphi)$ refers to $\vdash_{L_1\otimes L_2}$ and thus may appear circular, by property~\eqref{eq:adp}, the $\sur{i}{\Theta}(\varphi)$-formulas have smaller alternation depth, and so we can apply the criterion in Lemma~\ref{lemma:local} recursively to the simpler formulas to decide whether the respective $\qs$ should be included in the set $\QS_i(\varphi)$. This recursion terminates as we eventually reach formulas of alternation depth 0, which, by definition, belong to either $\mathcal{L}_1$ or $\mathcal{L}_2$, where, by assumption, the consequence relation is decidable.

%We prove Lemma~\ref{lemma:local} for expanding domains (the case of constant domains is similar). %We require some additional definitions.

\begin{definition}
Let $\SIG$ be a set of $\mathcal{L}$-formulas closed under subformulas, For $i = 1,2$ and  $h \leq \sur{i}{\md}(\SIG)$, denote $\sur{i}{\Gamma_h}(\SIG) = \{ \psi \in \SIG \mid \sur{i}{\md}(\psi) \leq h\}$. Since $\sur{i}{\Gamma}_h(\SIG) = \emptyset$ for $h < 0$, there is only one such $\sur{i}{\Gamma}_h(\SIG)$-type, $\emptyset$. For a type $\type$, let $\type|_\SIG$ be the \emph{restriction} of $\type$ to $\SIG$, that is, $\type\cap \{ \psi, \neg\psi \mid \psi \in \SIG\}$, 
and denote $\qs|_\SIG = \{\type|_\SIG \mid \type\in\qs\}$. 

Let $(W, R_i)\in\mathcal{C}_i$. For $w, w' \in W$, let $\dist_{R_i}(w,w')$ be the length of the shortest $R_i$-path from $w$ to~$w'$. A \emph{tapered $\SIG$-run through $((W, R_i), w_0)$} is a function $r$ from an $R_i$-closed subset of $W$, denoted by~$\dom r$,  such that each $r(w)$ is a $\sur{i}{\Gamma_{h(w)}}(\SIG)$-type, where $h(w) = \sur{i}{\md}(\SIG) - \dist_{\smash{R_i}}(w_0, w)$. For $w\in \dom r$, run $r$ is
\begin{itemize}
\item \emph{$i$-coherent at $w$}  if $\Box_i \psi \in r(w)$ implies $\psi\in r(w')$,  for all $w'\in R_i(w)$;
\item \emph{$i$-saturated at $w$} if $\neg\Box_i \psi \notin r(w)$ whenever $\psi\in r(w')$, for all $w'\in R_i(w)$.\footnote{We use $\neg\Box_i \psi \notin r(w)$ as a convenient replacement for  either $\Box_i\psi\notin\sur{i}{\Gamma_{h(w)}}(\SIG)$ or $\Box_i \psi \in r(w)$.} 
\end{itemize}
We say that $r$ is \emph{$i$-coherent on $W' \subseteq W$} if $r$ is $i$-coherent at each $w\in \dom r\cap W'$ and \emph{$i$-coherent} if $r$ is $i$-coherent on $W$; similarly for $r$ being $i$-saturated.

A \emph{tapered $i$-quasimodel $\mathfrak{Q}$ for $\SIG$} is a tuple of the form $((W, R_i),  w_0, \mathfrak{R})$, where  $(W, R_i)\in\mathcal{C}_i$, $w_0\in W$, 
and $\mathfrak{R}$ is a set of tapered $\SIG$-runs through $((W, R_i), w_0)$ such that each $r\in\mathfrak{R}$ is $i$-coherent and $i$-saturated and each $\mathfrak{R}(w)$ is a $\SIG$-quasistate, for  $w\in W$, with $\mathfrak{R}_w = \{ r\in\mathfrak{R} \mid w\in\dom r \}$ and $\mathfrak{R}(w) = \{ r(w)\mid r\in\mathfrak{R}_w \}$.
\end{definition}

\begin{definition}
We inductively define a  \emph{tapered $i$-cactus $\mathfrak{C}$ for $\SIG$} as a tuple of the form $((W, R_1, R_2), w_0,\mathfrak{R})$, where  $(W, R_i)\in\mathcal{C}_i$ and %$R_2\subseteq (W\setminus \{ w_0 \})^2$, 
$(W\setminus \{w_0\}, R_{3-i})\in\mathcal{C}_{3-i}$ with a distinguished world $w_0\in W$, which is not in the domain or range of $R_{3-i}$, and $\mathfrak{R}$ is a set of tapered $\FL$-runs that are $i$- and $(3-i)$-coherent as well as $i$-saturated on~$W$ but $(3-i)$-saturated only on $W\setminus \{w_0\}$; %A tapered $2$-cactus is defined symmetrically.
see Fig.~\ref{fig:cactus}, which shows the frame structure.%
\begin{figure}%
\centering%
\begin{tikzpicture}[>=latex,
nd/.style={circle,draw,inner sep=0pt,minimum size=2mm,thick,fill=white}]
\begin{scope}[rounded corners=2mm,fill=gray!70]
\filldraw (-0.25,-0.25) rectangle ++(0.5,1.1);
\filldraw (-2.25,0.95) rectangle ++(0.5,1.1);
\filldraw (-4.25,0.95) rectangle ++(0.5,1.1);
\filldraw (3.75,0.95) -- ++(0.75,0) -- ++(0.25,0.25) -- ++(0,0.75) -- ++(-0.5,0) [rounded corners=1mm]-- ++(0,-0.5) [rounded corners=2mm]-- ++(-0.5,0) -- cycle;
\filldraw (2.25,0.95) -- ++(-0.75,0) -- ++(-0.25,0.25) -- ++(0,0.75) -- ++(0.5,0) [rounded corners=1mm]-- ++(0,-0.5)  [rounded corners=2mm]-- ++(0.5,0) -- cycle;
\end{scope}
\begin{scope}[rounded corners=2mm,fill=gray!30,fill opacity=0.5]
\filldraw (0.25,-0.25) -- ++(0,0.5) -- ++(-2,0) -- ++(0,1.2) -- ++(-2.5,0) -- ++(0,-0.5) -- ++(2,0) -- ++(0, -0.95) -- ++(0.25,-0.25) -- cycle;
\filldraw (-0.25,0.85) -- ++(2,0) -- ++(0,0.6) -- ++(2.5,0) -- ++(0,-0.5) -- ++(-2,0) -- ++(0,-0.35) -- ++(-0.25,-0.25) -- ++(-2.25,0) -- cycle;
\filldraw (-2.25,1.55) -- ++(0.75,0) -- ++(0.25,0.25) -- ++(0,0.75) -- ++(-0.5,0) [rounded corners=1mm]-- ++(0,-0.5)  [rounded corners=2mm]-- ++(-0.5,0) -- cycle;
\filldraw (-3.75,1.55) -- ++(-0.75,0) -- ++(-0.25,0.25) -- ++(0,0.75) -- ++(0.5,0) [rounded corners=1mm]-- ++(0,-0.5)  [rounded corners=2mm]-- ++(0.5,0) -- cycle;
\end{scope}
\node[nd,label=below:{\rule{0pt}{12pt}$w_0\colon \neg\Box_1\Box_2 \neg p \land \neg \Box_2\Box_2\Box_1 \neg r$}] (w0) at (0,0) {};
\node[nd,label=above:{\rule[-4pt]{0pt}{12pt}$w_1$}] (w1) at (0,0.6) {};
\draw[->,thick] (w0) -- (w1);
\node[nd,label=right:{\hspace*{4pt}$w_2$}] (w01) at (-2,1.2) {};
\node[nd,label=below:{\rule{0pt}{8pt}$w_3$}] (w02) at (-4,1.2) {};
\node[nd,label=above:{\large\rule[-4pt]{0pt}{10pt}$p$}] (w11) at (2,1.2) {};
\node[nd] (w12) at (4,1.2) {};
\begin{scope}[densely dotted,thick]
\draw[->,rounded corners=2mm] (w0) -| (w01);
\draw[->,rounded corners=2mm] (w1) -| (w11);
\draw[->] (w01) -- (w02);
\draw[->] (w11) -- (w12);
\end{scope}
\node[nd] (w011) at (-2,1.8) {};
\node[nd,label=right:{\hspace*{1mm}\large$r$}] (w021) at (-4,1.8) {};
%\node[nd] (w111) at (2,1.8) {};
%\node[nd] (w121) at (4,1.8) {};
%
\begin{scope}[thick]
\draw[->] (w01) -- (w011);
\draw[->] (w02) -- (w021);
%\draw[->] (w11) -- (w111);
%\draw[->] (w12) -- (w121);
\end{scope}
\node at (-4.5, 2.3) {$\vdots$};
\node at (-1.5, 2.3) {$\vdots$};
\node at (1.5, 1.7) {$\vdots$};
\node at (4.5, 1.7) {$\vdots$};
\node (adp2) at (5.6, -0.35) {\small$\textit{adp}(\varphi) = 2$, \footnotesize $\sur{1}{\md}(\vp) = 1$};
\node[below of=adp2,yshift=6mm] {\footnotesize $\sur{1}{\Theta}(\sub \varphi) = \{ \Box_2 \neg p,\ \Box_2\Box_2\Box_1 \neg r,\ \dots \}$};
\draw[->,thin,dashed] (adp2.west) to (0.25,0.2);
\node (adp1) at (-5.6, 0) {\small $\textit{adp}(\vp^{w_0}) = 1$, \footnotesize $\sur{2}{\md}(\vp^{w_0}) = 2$};
\node[below of=adp1,yshift=5.7mm] (adp1c) {\footnotesize $\FL^{w_0} = \sur{1}{\Theta}(\sub \varphi)$};
\node[below of=adp1c,yshift=6mm]  {\footnotesize $\sur{2}{\Theta}(\FL^{w_0}) = \{ p, \Box_1 \neg r, \dots \}$};
%\draw[->,thin,dashed,rounded corners=5mm] (adp1) to (0,1.1) to (1,0.85);
\draw[->,thin,dashed] (adp1.east) to (-2.25,0.5);
\node (adp12) at (6, 0.7) {\small $\textit{adp}(\varphi^{w_1}) = 1$, \footnotesize $\sur{2}{\md}(\vp^{w_1}) = 0$};
\node[below of=adp12,yshift=5.5mm] (adp12c) {\footnotesize $\FL^{w_1} =\ ${\scriptsize $\sur{1}{\Theta}(\sub \varphi) \cap \sur{1}{\Gamma}_{0}(\sub\varphi) =\ $}$ \{p, r \}$};
%\node[below of=adp12c,yshift=6mm]  {\footnotesize $\sur{2}{\Theta}(\FL^{w_1}) = \{ p, r, \dots \}$};
\draw[->,thin,dashed] (adp12) to (2.15,0.5);
\node (adp0) at (0.3,2.1)  {\footnotesize\tabcolsep=0pt\begin{tabular}{c}degenerate\\[-3pt]2-cacti\end{tabular}};
%\draw[->,thin,dashed] ($(adp0.south)+(0.3,0)$) to (1.75,1.4);
\draw[->,thin,dashed] (adp0) to (-1.25,2);
%\draw[->,thin,dashed,rounded corners=5mm] (adp0) to (1.7,2.8) to (3.75,1.5);
\draw[->,thin,dashed,rounded corners=5mm] (adp0) to (-1.5,2.7) to (-4.25,2.4);
\node (adp02) at (7,1.8) {\footnotesize\tabcolsep=0pt\begin{tabular}{c}degenerate\\[-3pt]1-cacti\end{tabular}};
%\node (dc) at (7,2) {\small\parbox{27mm}{\centering degenerate 1-cactus\\for $\SIG=\emptyset$}};
\draw[->,thin,dashed] (adp02) -- (4.8,1.6);
\draw[->,thin,dashed,rounded corners=5mm] (adp02) to (4.3,2.2) to (1.7, 1.85);
\node (adp3) at (-6.2, 2.3) {\small $\adp(\vp^{w_3}) = 1$};
\node[below of=adp3,yshift=0mm]  {\footnotesize\tabcolsep=0pt\begin{tabular}{c}$\sur{1}{\md}(\vp^{w_3}) = 1$\\$\FL^{w_3} =\ $\\[-2pt]\tiny $\sur{2}{\Theta}(\FL^{w_0}) \cap \sur{2}{\Gamma}_{0}(\FL^{w_0}) = \ $\\[0pt]
$\{p, \Box_1 \neg r, \dots \}$\end{tabular}};
\draw[->,thin,dashed] ($(adp3.east)+(-0.2,-1)$) -- (-4.3,1.3);
\end{tikzpicture}%
\caption{A fully grafted tapered $1$-cactus satisfying $\varphi = \Diamond_1\Diamond_2 p \land \Diamond_2\Diamond_2\Diamond_1 r$: solid arrows with dark background for $R_1$-components and dotted arrows with light background for $R_2$-components.}\label{fig:cactus}
\end{figure}
\begin{description}
\item[basis: $\SIG = \emptyset$.]  A tapered `degenerate' $1$-cactus $((W, R_1, R_2), w_0,\mathfrak{R})$ for $\emptyset$ is constructed by first taking sufficiently large disjoint unions $(W, R_1)$ and $(W', R_2')$ of copies of $\mathfrak{F}_1\in \mathcal{C}_1$ and $\mathfrak{F}_2\in \mathcal{C}_2$, respectively,  such that $|W|- 1 = |W'|$. We then take any $w_0\in W$ and obtain the required $(W, R_1, R_2)$ by applying to~$R_2'$ the bijection between $W\setminus \{w_0\}$ and $W'$.
Finally, $\mathfrak{R}$ consists of a single tapered $\emptyset$-run~$r$ with $r(w) = \emptyset$, for all~$w\in W$, which is trivially $1$- and $2$-coherent and $1$- and $2$-saturated.  

\item[induction step: $\SIG\ne\emptyset$.] Let $\mathfrak{Q} = ((W^0,R_1^0), w_0,\mathfrak{R}^0)$  be a tapered $1$-quasimodel  for $\SIG$. 
For each $v\in W^0$, denote
\begin{equation*}
\SIG^v = \sur{1}{\Theta}(\SIG) \cap \sur{1}{\Gamma}_{\smash{h(v)}}(\SIG),\quad \text{ where } h(v) = \sur{1}{\md}(\SIG) - \dist_{\smash{R_1^0}}(w_0,v).
\end{equation*}
For each $v\in W^0\setminus \{ w_0 \}$, take  a tapered $2$-cactus $\mathfrak{C}^v = ((W^v, R_1^v, R_2^v), v, \mathfrak{R}^v)$ for $\SIG^v$ that is \emph{compatible} with $\mathfrak{Q}$:
\begin{equation*}
W^0 \cap W^v = \{v\}\quad 
\text{  and }\quad \mathfrak{R}^0(v) |_{\sur{1}{\Theta}(\SIG)}  = \mathfrak{R}^v(v). 
\end{equation*}
We also assume that the $W^v$ are disjoint.  Then $((W, R_1, R_2), w_0,\mathfrak{R})$  is a \emph{tapered $1$-cactus $\mathfrak{C}$ for~$\SIG$} if it is obtained by grafting the $\mathfrak{C}^v$, for $v\in W^0\setminus\{w_0\}$, onto $\mathfrak{Q}$.
More precisely, we say that $((W^0, R_1^0, \emptyset), w_0, \mathfrak{R}^0)$ is an initial \emph{stock} and $W^0$ its thorns. Then, given a stock $((W, R_1, R_2), w_0, \mathfrak{R})$ and a thorn $v \in W$,  the \emph{grafting of $\mathfrak{C}^v$ onto $v$} is a stock $((W \cup W^v, R_1 \cup R_1^v, R_2 \cup R_2^v), w_0, \mathfrak{R}')$, where
\begin{equation*}
%W' \ \ = \ \ W \cup W^v,\qquad ,\qquad R_2' \ \ = \ \ ,\\
%\text{and }\quad 
\mathfrak{R}'  \ \ = \ \  \mathfrak{R}\setminus \mathfrak{R}_v \ \ \cup \ \ \bigl\{ r\cup r^v \mid r\in\mathfrak{R}_v \text{ and } r^v\in(\mathfrak{R}^v)_v \text{ with }  r(v)|_{\smash{\sur{1}{\Theta}(\SIG)}} =r^v(v) \bigr\} \ \ \cup \ \ \mathfrak{R}^v\setminus (\mathfrak{R}^v)_v.
\end{equation*}
%
%\begin{multline*}
%W \ \ = \ \ \{w_0 \} \cup \bigcup\nolimits_{v\in W^0\setminus\{w_0\}} W^v,\qquad R_1 \ \ = \ \ R_1^0 \ \cup \ \bigcup\nolimits_{v\in W^0\setminus\{w_0\}} R_1^v,\qquad R_2 \ \ = \ \ \bigcup\nolimits_{v\in W^0\setminus \{w_0\}} R_2^v,\\
%\text{and }\quad \mathfrak{R}'  \ \ = \ \  \bigl\{ r\cup \bigcup\nolimits_{v\in W^0\setminus\{w_0\}} r^v \mid r\in\mathfrak{R}^0 \text{ and } r^v\in\mathfrak{R}^v \text{ with } r(v) \cap \sur{1}{\Theta}(\varphi) =r^v(v), \text{ for } v\in W^0 \setminus\{w_0\}\bigr\}.
%\end{multline*}
%
Observe that, as $v$ in $\mathfrak{C}^v$ is not $R_1$-connected, the $R_1$-components in $\mathfrak{Q}$ and $\mathfrak{C}^v$ are disjoint, and so the frame in the resulting stock belongs to $\mathcal{C}_1\otimes\mathcal{C}_2$. The resulting runs are $1$- and $2$-coherent and $1$-saturated on $W\cup W^v$. The runs are $2$-saturated on $W'\cup W^v$ if they are $2$-saturated on $W'\subseteq W$. % in the given stock.
\end{description}
%
%Note that since we do not graft anything onto the roots $v$ of the quasimodel candidate in the grafts themselves, the $R_1'$ and $R_1$ neighbours of $v$ coincide (and the $R_1$ relation in the initial quasimodel part of the tapered cactus is disconnected from the $R_1$ relation in the the grafts). 
%It then follows for both $i = 1,2$ that if runs in the $\mathfrak{C}^v$ are $i$-coherent, then they are $i$-coherent in $\mathfrak{C}$; if runs in the $\mathfrak{C}^v$ are $i$-saturated (except the root), then they are $i$-saturated everywhere in $\mathfrak{C}$ (except the root - to be clarified).
%
%We also assume that the types $r(v)$ and $r^v(v)$ \textcolor{red}{match}, for\dots

A \emph{fully grafted tapered $i$-cactus $((W,R_1, R_2), \mathfrak{R})$ for $\SIG$} results from grafting a compatible tapered $(3-i)$-cactus $\mathfrak{C}^{w_0}$ for $\SIG^{w_0} = \sur{i}{\Theta}(\SIG)$ onto the world $w_0$ of a tapered $i$-cactus $((W^0,R_1^0, R_2^0), w_0, \mathfrak{R}^0)$ for $\SIG$. Runs in a fully grafted tapered cactus are $i$-coherent and $i$-saturated for both $i = 1,2$ on the whole $W$.
\end{definition}

%we show the following claim, which implies Lemma~\ref{lemma:local}.
%
\begin{claim}\label{claim:tapered-cactus}
For all $\mathcal{L}$-formulas $\varphi$ and $i = 1,2$ with $\adp(\varphi) = \sur{i}{\adp}{(\varphi)}$,  each of \textup{\textbf{(L1)}}--\textup{\textbf{(L3)}} holds iff
\begin{description}
\item[(L4)] there is a \textup{(}fully grafted\textup{)} tapered $i$-cactus $((W,R_1, R_2), \mathfrak{R})$ for $\sub \varphi$ with $\varphi\notin r(w_0)$, for some $r\in\mathfrak{R}$. %\textup{;}
\end{description}
\end{claim}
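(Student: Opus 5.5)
We would prove the claim by induction on $\adp(\varphi)$, establishing the cycle $\textbf{(L1)}\Rightarrow\textbf{(L3)}\Rightarrow\textbf{(L4)}\Rightarrow\textbf{(L2)}\Rightarrow\textbf{(L1)}$. The step $\textbf{(L2)}\Rightarrow\textbf{(L1)}$ is soundness, since frames in $\mathcal{C}_1\otimes\mathcal{C}_2$ validate $L_1\otimes L_2$. For $\textbf{(L1)}\Rightarrow\textbf{(L3)}$ we use contraposition, as in Lemma~\ref{lemma:global}. First, $\vdash_{L_1\otimes L_2}\hat{\QS}_i(\varphi)$: the disjunction of $\hat{\qs}$ over all $\sur{i}{\Theta}(\varphi)$-quasistates is a tautology, and the remaining disjuncts are refutable. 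Applying necessitation gives $\vdash_{L_1\otimes L_2}\Box_i^{\leq \sur{i}{\md}(\varphi)}\hat{\QS}_i(\varphi)$. Now suppose $\vdash_{L_i}\Box_i^{\leq \sur{i}{\md}(\varphi)} \sur{i}{(\hat{\QS}_i(\varphi))} \to \sur{i}{\varphi}$. Substituting back the $\Box_{3-i}$-subformulas for their surrogates yields a theorem of $L_1\otimes L_2$, and hence $\vdash_{L_1\otimes L_2}\varphi$.

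For $\textbf{(L4)}\Rightarrow\textbf{(L2)}$ we read off a model from a fully grafted tapered cactus. Take $\Delta_w=\mathfrak{R}_w$ and interpret $P$ at $w$ on $r$ iff $P(x)\in r(w)$. We then show by induction on $\chi$ that $\mathfrak{M},w\models\chi[r]$ iff $\chi\in r(w)$, for every $\chi$ in the domain of the (tapered) type $r(w)$. The Boolean and quantifier cases use the quasistate condition, and the modal cases use $i$-coherence and $i$-saturation on all of $W$. Tapering causes no trouble: if $\Box_j\psi$ lies in the domain of $r(w)$, then $\psi$ lies in the domain of $r(w')$ for every $R_j$-successor $w'$. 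This holds for $j=i$ because the modal depth budget decreases by at most one per step. For $j=3-i$, the grafted cactus at $w$ is built for $\SIG^w=\sur{i}{\Theta}(\SIG)\cap\sur{i}{\Gamma}_{h(w)}(\SIG)$, which contains $\sub\Box_{3-i}\psi$. Closure of the frame under disjoint unions and the grafting remarks give a frame in $\mathcal{C}_1\otimes\mathcal{C}_2$. The expanding-domain condition holds because runs have $R$-closed domains.

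The main step is $\textbf{(L3)}\Rightarrow\textbf{(L4)}$. Since $L_i=\Log_{\textit{xd}}\mathcal{C}_i$, \textbf{(L3)} gives an $\textit{xd}$-model on a frame in $\mathcal{C}_i$ with a world $w_0$ at which $\sur{i}{\varphi}$ fails and at which $\sur{i}{(\hat{\QS}_i(\varphi))}$ holds at every world within distance $\sur{i}{\md}(\varphi)$. Restricting to worlds reachable from $w_0$ and assigning to each element the tapered $\sub\varphi$-type it realises (through surrogates) gives a tapered $i$-quasimodel $\mathfrak{Q}$ with $\varphi\notin r(w_0)$ for some run $r$. Beyond the taper bound the types are empty, so nothing needs to be checked there.

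At each world $v$ of $\mathfrak{Q}$, the restricted quasistate $\mathfrak{R}(v)|_{\sur{i}{\Theta}(\SIG)}$ belongs to $\QS_i(\varphi)$, so $\not\vdash_{L_1\otimes L_2}\neg\hat{\qs}$; more precisely, its restriction to $\SIG^v$ is consistent. By~\eqref{eq:adp}, $\SIG^v$ has smaller alternation depth, and its depth is attained on the $(3-i)$ side. The induction hypothesis, applied to the formula $\hat{\qs}$ (or rather $\neg\hat{\qs}$), therefore yields a fully grafted tapered $(3-i)$-cactus for $\SIG^v$ whose root realises exactly this quasistate. A realisability sentence forces the root quasistate to match, not merely to contain a single type, and this gives compatibility. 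If needed, we make the runs' root types surject onto those of $\mathfrak{Q}$ by duplicating runs, since the compatibility condition in grafting matches runs by type.

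Grafting these cacti at every $v\ne w_0$, and one more at $w_0$, yields the fully grafted cactus. The base case $\SIG=\emptyset$ uses the degenerate cacti.

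The main obstacle is making this induction well-founded and matched. The induction hypothesis must be formulated for sets $\SIG$, not single formulas, and the quasistate at $v$ must be realised exactly at the root, including multiplicities of runs. We would first try handling multiplicities by taking disjoint unions of copies of models, which is allowed since the $\mathcal{C}_i$ are closed under disjoint unions, so that every type is realised by sufficiently many elements.
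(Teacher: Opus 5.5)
Your route is the paper's: induction on $\adp$, the cycle \textbf{(L1)}$\Rightarrow$\textbf{(L3)}$\Rightarrow$\textbf{(L4)}$\Rightarrow$\textbf{(L2)}$\Rightarrow$\textbf{(L1)}, a tapered $i$-quasimodel read off a countermodel for \textbf{(L3)}, the induction hypothesis applied at each $v$ to the realisability sentence of the restricted quasistate, grafting, and a truth lemma for the read-off model. Your \textbf{(L1)}$\Rightarrow$\textbf{(L3)} via $\vdash_{L_1\otimes L_2}\Box_i^{\leq \sur{i}{\md}(\varphi)}\hat{\QS}_i(\varphi)$, and applying the hypothesis to $\neg\hat{\qs}$, are more explicit than the paper.

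The grafting step, however, fails as you state it. You graft onto each $v$ the \emph{fully grafted} tapered $(3-i)$-cactus produced by the induction hypothesis. In such a cactus the root $v$ already carries an $R_i$-component, namely the $i$-cactus for $\sur{3-i}{\Theta}(\SIG^v)$ grafted at it. After grafting, $v$ therefore sits in two $R_i$-components glued at $v$. Two things then break. First, $(W,R_i)$ need no longer belong to $\mathcal{C}_i$, which is only closed under disjoint unions. Second, $i$-coherence at $v$ is lost, because a formula $\Box_i\psi\in r(v)$ coming from $\sur{i}{\Gamma}_{h(v)}(\SIG)$ is enforced only at the $R_i$-successors of $v$ in $\mathfrak{Q}$. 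For instance, take $\varphi=\Box_1\Box_1 p\land\Box_2 q$, $i=1$ and $w_0R_1v$, so that $\Box_1 p\in r(v)$. The inner graft at $v$ is a $1$-cactus for $\{P(x),Q(x)\}$, whose $R_1$-successors of $v$ carry empty types. So $p$ is false there, and $\Box_1 p$ fails at $v$ in the read-off model.

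What must be grafted at every $v$, including $w_0$, is a plain tapered $(3-i)$-cactus. Its root lies outside the domain and range of $R_i$, so $\mathfrak{Q}$ alone governs $R_i$ at $v$. This is what the paper does, and the graft at $w_0$ is what makes the final $i$-cactus fully grafted. The hypothesis does supply the plain cactus: a fully grafted cactus is such a cactus plus one graft at its root, and dropping that graft leaves the root types, hence the realised quasistate, unchanged.

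There are three smaller points.
\begin{itemize}
\item Do not restrict $\mathfrak{M}^0$ to the worlds reachable from $w_0$. $\mathcal{C}_i$ is not assumed closed under generated subframes. Unreachable worlds simply get $h(v)<0$, the empty type, and a degenerate graft.
\item Multiplicities of runs are not an obstacle. Grafting forms $r\cup r^v$ for \emph{all} pairs with matching root types, so equality of $\mathfrak{R}^0(v)|_{\sur{i}{\Theta}(\SIG)}$ and $\mathfrak{R}^v(v)$ as sets of types suffices. Likewise, no induction over sets $\SIG$ is needed: apply the hypothesis to the single sentence $\neg\hat{\qs}$ and restrict the resulting cactus to $\SIG^v$, which works because the modal depths agree.
\item The recursion bottoms out at $\adp=0$, where $\SIG^v$ consists of atoms rather than being empty. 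The base case therefore needs a direct (easy) cactus realising an atomic quasistate at its root, not only the degenerate cacti.
\end{itemize}
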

\begin{proof}
We proceed by induction on alternation depth.
The basis, $\adp(\varphi) = 0$, is trivial as the formula belongs to $\mathcal{L}_1$ or $\mathcal{L}_2$ (or both, if it has no modalities). For the inductive step, consider an $\mathcal{L}$-formula~$\varphi$ and suppose that $\adp(\varphi) = k > 0$.
Assume that $\sur{1}{\adp}(\varphi) =  k$ (the other case is symmetrical) and that the claim has been shown for all formulas~$\psi$ with $\sur{2}{\adp}(\psi) < k$, which, by~\eqref{eq:adp}, includes all $\psi\in\sur{1}{\Theta}(\varphi)$. 

\smallskip

While $\textbf{(L2)}\Rightarrow\textbf{(L1)}$ follows from soundness of $\vdash_{L_1\otimes L_2}$, $\textbf{(L1)}\Rightarrow\textbf{(L3)}$ follows from closure under substitutions of $\vdash_{L_i}$ (cf.~$\textbf{(G1)}\Rightarrow\textbf{(G3)}$ in the proof of Lemma~\ref{lemma:global}).

\smallskip

$\textbf{(L3)}\Rightarrow\textbf{(L4)}$ By Kripke completeness of $L_1$, there is an $\textit{xd}$-model $\mathfrak{M}^0$ based on $(W^0,R^0_1)\in \mathcal{C}_1$ such that $\mathfrak{M}^0,w_0\models \Box_1^{\smash{\leq \sur{1}{\md}(\varphi)}} \sur{1}{(\hat{\QS}_1(\varphi))}$ and $\mathfrak{M}^0,w_0\not\models\sur{1}{\varphi}$ for some $w_0\in W^0$. 
Let $\SIG = \sub\varphi$.
For each $v\in W^0$, let $h(v) =\sur{1}{\md}(\SIG) - \dist_{\smash{R^0_1}}(w_0, v)$. 
Construct a set $\mathfrak{R}^0$  of tapered $\Phi$-runs through $((W^0, R^0_1),w_0)$: for each $e\in\bigcup_{v\in W^0}\Delta^0_v$, consider $r_e$ that maps each $v\in W^0$ with $e\in\Delta^0_v$ to the $\sur{1}{\Gamma}_{\smash{h(v)}}(\SIG)$-type $\type_{v,e}$ %of~$e$ at~$v$ 
with 
$\mathfrak{M},v\models \smash{\sur{1}{\type_{v,e}}[e]}$. Thus, $\mathfrak{Q} = ((W^0, R^0_1), w_0, \mathfrak{R}^0)$ is a tapered 1-quasimodel for $\SIG$ with $\varphi\notin r_{e}(w_0)$, for some $e\in\Delta_{w_0}^0$.

For each  $v\in W^0$ with $h(v)\geq 0$, take the $\Theta^1(\SIG)$-quasistate $\qs^v\in \QS_1(\varphi)$ with \mbox{$\mathfrak{M}^0,v\models \sur{1}{(\hat{\qs}^v)}$}.  
Denote $\SIG^v = \sur{1}{\Theta}(\SIG) \cap \sur{1}{\Gamma}_{\smash{h(v)}}(\SIG)$ and consider $\qs^v|_{\smash{\sur{1}{\Gamma}_{\smash{h(v)}}(\SIG)}}$, which is a $\SIG^v$-quasistate with $\qs^v|_{\smash{\sur{1}{\Gamma}_{h(v)}(\SIG)}} = \mathfrak{R}^0(v)|_{\smash{\sur{1}{\Theta}(\SIG)}}$. Let $\varphi^v$ be its realisability sentence. 
As $\sur{2}{\adp}(\varphi^v) < k$, 
we apply IH to 
$\varphi^v$ and obtain a tapered 2-cactus $\mathfrak{C}^v = ((W^v, R_1^v, R_2^v), v, \mathfrak{R}^v)$ for~$\SIG^v$
such that  $\qs^v|_{\smash{\sur{1}{\Gamma}_{h(v)}(\SIG)}} = \mathfrak{R}^v(v)$. 
For each $v\in W^0$ with $h(v)< 0$, take a tapered 2-cactus~$\mathfrak{C}^v$ for~$\emptyset$. 
So, for each $v\in W^0\setminus\{w_0\}$, we graft 
$\mathfrak{C}^v$ onto $v$ in 
$\mathfrak{Q}$ to obtain a tapered 1-cactus~$\mathfrak{C}$. A fully grafted tapered 1-cactus can be obtained by grafting,  in addition, $\mathfrak{C}^{w_0}$ onto $w_0$ in $\mathfrak{C}$.

\smallskip

$\textbf{(L4)}\Rightarrow\textbf{(L2)}$  Let $\mathfrak{C} = ((W,R_1, R_2), \mathfrak{R})$ be a fully grafted tapered $i$-cactus for $\sub\varphi$ with $\varphi\notin r(w_0)$, for some $r\in\mathfrak{R}$. We construct $\mathfrak{M}$ based on a $\mathcal{C}_1\otimes\mathcal{C}_2$-frame $(W, R_1, R_2)$ and domains~$\Delta_w = \mathfrak{R}_w$, for $w\in W$:  we set $\mathfrak{M},w \models P[r]$ iff $P(x)\in r(w)$, for each unary predicate letter~$P$, $w\in W$ and $r\in\mathfrak{R}_w$. 
By induction on the structure of formulas, we show that, for each $w\in W$, $r\in \mathfrak{R}_w$, $\chi\in \{ \psi, \neg\psi \mid \psi \in r(w)\}$, and $i = 1, 2$,
\begin{equation*}
\mathfrak{M},w\models \chi[r] \qquad\text{ iff }\qquad \sur{i}{\chi}\in \sur{i}{(r(w))} .
\end{equation*}
The basis of induction is by definition. The cases of the Boolean connectives and quantifiers are immediate from the definition of types and quasistates. The cases of the modalities follow from the fact that the runs in $\mathfrak{C}$ are $i$-coherent and $i$-saturated.
Thus, $\mathfrak{M},w_0\not\models \varphi$. 
\end{proof}

The claims on local consequence in Theorem~\ref{thm:transfer:equality-free} for \textit{cd}-models are obtained by a straightforward modification of the construction in Lemma~\ref{lemma:local} (the tapered runs are required to be total functions on the sets of worlds).
Finally, the transfer of the $d$-fmp follows from the proof of Claim~\ref{claim:tapered-cactus} (the fully grafted cactus is finite).

% !TEX root =  fusions-submission.tex

\section{Fusions of One-Variable Logics with Equality}
\label{sec:equality}
We show that once equality is allowed in one‑variable modal logics, the transfer results established in Section~\ref{sec:equality-free} no longer hold. Remarkably, the counterexamples we introduce next arise quite naturally. The propositional modal logic $\mathbf{Diff}$ of the `elsewhere'  operator was introduced by Von~Wright~\cite{vonWright1979Place} and is determined by the class $\mathcal{D}$ of all disjoint unions of frames $(W,R)$ with $R=\{(w,v) \in W\times W\mid w\not=v\}$; see also~\cite{DBLP:journals/jsyml/Rijke92}. We are interested here in the one-variable modal logics $\Logecd \mathcal{D}$ and~$\Logecd \mathcal{D}_{\text{fin}}$ 
with $\mathcal{D}_{\text{fin}}$ the class of disjoint unions of \emph{finite} frames in $\mathcal{D}$. Note that the propositional logic \textbf{Diff} has the fmp, and so is complete for both $\mathcal{D}$ and $\mathcal{D}_{\text{fin}}$. On the other hand, while one can easily prove the fmp for $\Logecd \mathcal{D}_{\text{fin}}$, the logic $\Logecd \mathcal{D}$ does not enjoy the fmp, and so  $\Logecd \mathcal{D}\subsetneq \Logecd \mathcal{D}_{\text{fin}}$~\cite{DBLP:conf/aiml/Hampson18}.	
It has been observed~\cite{DBLP:phd/ethos/Hampson16,DBLP:conf/aiml/Hampson18} that both logics can be reduced in polytime to the two-variable fragment of first-order logic with counting on arbitrary and finite domains, respectively, and so they are decidable in \textsc{coNExpTime}. 
\begin{theorem}\label{thm:nontransfer1}
	Both $\Logecd \mathcal{D}$ and $\Logecd \mathcal{D}_{\textup{fin}}$ are decidable but $\Logecd (\mathcal{D} \otimes \mathcal{D}_{\textup{fin}})$ is undecidable.
\end{theorem}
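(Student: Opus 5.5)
Decidability of the two component logics is not new. As recalled just before the statement, both $\Logecd\mathcal{D}$ and $\Logecd\mathcal{D}_{\text{fin}}$ reduce in polynomial time to the two-variable fragment of first-order logic with counting, over arbitrary and over finite domains respectively. Both of those fragments are decidable, so both logics are decidable in \textsc{coNExpTime}. The real work is the undecidability of $\Logecd(\mathcal{D}\otimes\mathcal{D}_{\text{fin}})$. I would obtain it by reducing Hilbert's tenth problem: from a Diophantine equation $p(\bar y)=q(\bar y)$ I compute a formula $\vp_{p,q}$ that is $\textit{cd}$-satisfiable in a model based on a frame from $\mathcal{D}\otimes\mathcal{D}_{\text{fin}}$ iff the equation has a solution in $\mathbb{N}$. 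Since $\Logecd(\mathcal{D}\otimes\mathcal{D}_{\text{fin}})$ is by definition the set of validities on that class, undecidability follows.

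The encoding exploits three features. The elsewhere operator gives a universal modality over a component: $\vp\land\Box_i\vp$ says that $\vp$ holds at every world of the current $R_i$-cluster. Equality with non-rigid constants gives counting up to one. The second component is a finite cluster, while the first may be infinite. In the constant domain $\Delta$, a natural number $n$ is represented as the size of a finite set, namely the number of worlds of an $R_2$-cluster, or the number of elements satisfying a predicate. The steps are as follows.
\begin{itemize}
\item Impose a grid-like structure. Each element $e\in\Delta$ should correspond to exactly one world of the current finite $R_2$-cluster where $P(e)$ holds. Use the non-rigid constant $c$: at every world of the cluster, $P(c)\land\forall x(P(x)\to x=c)$ holds, which forces $P$ to be a singleton at each world. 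Require also $\forall x\,(\neg P(x)\to\ldots)$ with $\Diamond_2$ so that every element gets a $P$-world and distinct worlds get distinct elements. This yields a bijection between $\Delta$ and a finite cluster, so $\Delta$ is finite of size $N$. The $R_1$-clusters, which need not be finite, are used to generate the arithmetic structure.
\item Encode each variable $y_j$ as a subset $Y_j\subseteq\Delta$ given by a rigidly interpreted predicate. Rigidity of a predicate is expressible with the universal modalities: $\forall x\,(Y_j(x)\to\Box_i Y_j(x))$ together with the contrapositive.
\item Encode addition as disjoint union, and multiplication $|Z|=|X|\cdot|Y|$ through a product structure. Use the worlds of an $R_1$-cluster indexed by $X$ and, within each, a copy of $Y$ realised via singleton constants. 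An injective assignment of a fresh pair-predicate per world is then checked in the $R_2$ direction to cover $Z$ exactly once. Polynomial identities are normalised into finitely many equations of the forms $u=v+w$, $u=vw$, $u=1$, each introduced by fresh predicates.
\item Prove correctness in both directions. Any solution gives a model built from a finite product-like grid, so $R_2$-clusters are finite and lie in $\mathcal{D}_{\text{fin}}$. Conversely, any model forces the cardinalities of the predicates to satisfy the equations.
\end{itemize}

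The main obstacle is the multiplication gadget. One must enforce a genuine bijection between $X\times Y$ and $Z$ using only one variable, singleton constraints and two elsewhere modalities. I would first try letting the $R_1$-cluster worlds carry a non-rigid constant $a$ ranging bijectively over $X$, and the $R_2$-cluster worlds a constant $b$ ranging over $Y$. The fusion frame then has to be forced to behave like a product. Finiteness of the $R_2$-clusters is what rules out the infinite non-standard models that would otherwise satisfy the constraints, which is why the mixed class $\mathcal{D}\otimes\mathcal{D}_{\text{fin}}$, rather than $\mathcal{D}\otimes\mathcal{D}$, is used.
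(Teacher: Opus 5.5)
\textbf{There is a genuine gap.} Your decidability argument matches the paper, and so does the overall strategy: reduce Hilbert's tenth problem, represent numbers as finite cardinalities, and use $\mathcal{D}_{\text{fin}}$ only to supply finiteness. But the undecidability argument never actually encodes the arithmetic.

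You concede that multiplication is the main obstacle. The route you propose is to force the fusion frame to behave like a product, so that a constant $a$ on an $R_1$-cluster and a constant $b$ on $R_2$-clusters span a grid $X\times Y$. Nothing supports this. Frames in $\mathcal{D}\otimes\mathcal{D}_{\text{fin}}$ impose no interaction between $R_1$ and $R_2$, and nothing in your sketch ties the $R_2$-clusters of different $R_1$-worlds together.

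The same gap already affects addition. A disjoint union computes $z_1+z_2$ only if you can produce disjoint sets of the \emph{same size} as $P_{z_1}$ and $P_{z_2}$. So you need a primitive forcing two predicates to have equal finite cardinality, and the proposal does not provide one. Your third bullet also collects $Z$ in the $R_2$ direction, whereas the union has to be taken over marked worlds along $R_1$.

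\textbf{The missing idea is already latent in your first bullet.} Do not use an injective non-rigid constant to make all of $\Delta$ finite; use it as a ruler for arbitrary predicates. The paper's formula $\textsf{card}(I,P)$ works as follows, writing $\Diamond_i^+\chi=\chi\lor\Diamond_i\chi$. At every world $w$ of the current $R_1$-cluster where a closed marker $I$ holds, a fresh constant $c$ takes pairwise distinct values on the finite $R_2$-cluster $W_w$ of $w$, and $P(x)\leftrightarrow\Diamond_2^+(x=c)$ holds at $w$. Hence $|P|_w=|W_w|$.

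Two instances at the same $w$ with different fresh constants give $|P|_w=|Q|_w$; this is the equal-cardinality primitive. A single instance at one marked world makes $P_y$ finite, and $P_y$ is then made rigid along the $R_1$-cluster.

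With this, multiplication needs no product structure:
\begin{itemize}
\item Mark exactly $|P_{z_2}|$ worlds of the $R_1$-cluster. Use $J(x)=C(x)\land(c^e=x)$, require $\forall x\,(P_{z_2}(x)\leftrightarrow\Diamond_1^+J(x))$, and require $J$ to hold on each element at most once.
\item At each marked world $w$, measure both $P_{z_1}$ and a fresh predicate $Q$ against $W_w$, so that $|Q|_w=|P_{z_1}|$.
\item Make $Q$ pairwise disjoint across the $R_1$-cluster and empty at unmarked worlds.
\item Put $\forall x\,(P_y(x)\leftrightarrow\Diamond_1^+Q(x))$.
\end{itemize}
Addition is analogous, using two measured, disjoint predicates. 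The only modal alternation required is $\Box_1\Box_2$.

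Finally, clusters are non-empty, so measured cardinalities are at least $1$. You should therefore reduce from solvability in $\mathbb{N}\setminus\{0\}$, which is still undecidable, rather than in $\mathbb{N}$.
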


\begin{proof}
The proof is by reduction of the problem of solving Diophantine equations. For arbitrary polynomials $g,h$ with coefficients from $\mathbb{N}\setminus\{0\}$, it is undecidable whether $g=h$ has a solution in $\mathbb{N}\setminus\{0\}$~\cite{Davis1977}. 
Each polynomial equation can be rewritten into a set of elementary equations
$y=n$, $y=z_1+z_2$, and $y=z_1\times z_2$, with variables $y,z_1,z_2$ and $n\in \mathbb{N}\setminus\{0\}$. So we assume that a set $E=\{e_{1},\ldots,e_{n}\}$ 
of such equations is given.
	We construct a formula $\varphi_{E}$ which is satisfiable iff $E$ has a solution in~\mbox{$\mathbb{N}\setminus\{0\}$}.
	Let the modal operators $\Box$ and $\Box_{C}$ be interpreted over frames with accessibility relations $R$ in $\mathcal{D}$ and $R_{C}$~in~$\mathcal{D}_{\text{fin}}$, respectively. We set $\Box^{+}\varphi=\varphi \wedge \Box\varphi$, and likewise for $\Box_{C}$. Formula $\varphi_{E}$ will hardly need any alternation between modalities, the maximal alternation being of the form $\Box\Box_{C}\chi$. In fact, the only contribution of~$\mathcal{D}_{\text{fin}}$ is to 	supply the required \emph{finite} cardinality of the 
	connected components of its frames. 
	
	When discussing the construction of $\varphi_{E}$, we assume a world $w$ in the set $W$ of worlds reachable along~$R$. To $W$ attached are the models for $\Box_{C}$ with the $R_{C}$-connected component of~$w$ denoted by $W_{w}$. 
For a unary predicate letter $P$ and $w\in W$, we denote by $|P|_{w}$ the cardinality of the extension of~$P$ at $w$. In the sequel, for a closed formula $I$ and a unary predicate $P$, we use formula $\textsf{card}(I, P)$ defined as follows:
\begin{equation*}
	\Box^{+}\bigl[I \ \ \to \ \ \Box_{C}^{+}\forall x\,\bigl((x=c) \rightarrow \Box_{C}(x\ne c)\bigr) \ \ \land \ \ 
	\forall x\, \bigl(P(x) \leftrightarrow \Diamond_{C}^+(x=c)\bigr)\bigr],
\end{equation*}
where $c$ is a fresh constant. Formula $\textsf{card}(I, P)$ ensures that, at any world $w$ where $I$ holds, the number of worlds in the $R_C$-connected component of $w$ coincides with the number of domain elements in the extension of $P$ at~$w$, that is, $|W_{w}|=|P|_w$; see Fig.~\ref{fig:diophantus}~a), where the nodes representing the same domain element in different worlds are shown connected by the $R$- and $R_C$-relations.

For each variable $y$ used on the left- or right-hand side of equations  in $E$, we take unary predicate letters $P_{y}$ and $C_y$. Intuitively,  $|P_{y}|_{w}$ encodes the value of $y$, while $C_y$  is used to ensure that the value of $y$ is finite. To this end, $\varphi_{E}$ contains the conjunct 
\begin{equation}\label{eq:diophant:v1}
\Diamond^{+}(I_y \wedge \Box\neg I_y), \qquad\text{where } I_y = \exists x\, C_{y}(x),
\end{equation}
to mark a single world $w\in W$ with $I_y$.
We add the conjunct 
\begin{equation}
\textsf{card}(I_y, P_y)
\end{equation}
to $\varphi_{E}$ to ensure that $|P_y|_w = |W_w|$.
Note that $|P_{y}|_{w}$ is finite. The following conjuncts guarantee that each $P_y$ is non-empty  (only solutions in $\mathbb{N}\setminus\{0\}$ are admitted) and constant on~$W$:
\begin{equation}\label{eq:diophant:v3}
\exists x\,P_y(x) \quad\text{ and }\quad \Box^+\forall x\,\bigl(P_y(x) \rightarrow \Box P_y(x)\bigr).
\end{equation}
\begin{itemize}	
\item If $e\in E$ is of the form $y=n$, then we take $n$ constants $c_{1}^{e},\ldots,c_{n}^{e}$ interpreted differently
		and make the set of their interpretations the extension of $P_y$: 	
\begin{equation}\label{eq:diophant:e1-1}
		\forall x\, \bigl(P_{y}(x) \leftrightarrow  (x=c_1^{e})\lor \dots \lor (x=c_n^{e}) \bigr) \ \ \land \ \ \bigwedge\nolimits_{i \ne j}\hspace*{-0em}(c_{i}^{e}\ne c_{j}^{e}).
\end{equation}

\item If $e\in E$ is of the form $y=z_1+z_2$, we introduce unary predicate letterss $Q^e_1$ and $Q^e_2$ and add formulas 
\begin{equation}\label{eq:diophant:e2-1}
\textsf{card}(I_{z_i}, Q^e_i)\quad \text{ for } i = 1, 2,
\end{equation}
which ensure $|P_{z_i}|_{w_i} = |Q^{e}_i|_{w_i}$ at the world $w_i$ marked with~$I_{z_i}$. Next, we state that $Q^e_1$ and $Q^e_2$ are disjoint at all worlds $w\in W$ and empty at worlds $w\in W$ where $I_{z_i}$ does not hold:
\begin{equation}
\Box^+\forall x\,\bigl(Q^e_1(x) \to \Box^+\neg Q^e_2(x)\bigr),\qquad
\Box^+ \bigl(\exists x\,Q^e_1(x) \to I_{z_1} \bigr)\quad\text{ and }\quad
\Box^+ \bigl(\exists x\,Q^e_2(x) \to I_{z_2} \bigr).
\end{equation}
Finally, we add  the conjunct that ensures that $P_y$ is the union of $Q^e_1$ and $Q^e_2$: % at the world $w$ with $I^{e}$:
		\begin{equation}\label{eq:diophant:e2-3}
		\forall x\, \bigl(P_{y}(x) \leftrightarrow \Diamond^+ Q^e_1(x) \vee \Diamond^+ Q^e_2(x)\bigr),
		\end{equation}	
which implies that equation $e$ holds in the sense that $|P_{y}|_{w}=|P_{z_1}|_{w} + |P_{z_2}|_{w}$, for all worlds $w\in W$.	
\begin{figure}%
\centering%
\begin{tikzpicture}[>=latex,xscale=0.7, yscale=0.7,
	nd/.style={draw,circle,inner sep=0pt,minimum size=1.4mm,thick}]
\begin{scope}[xshift=0mm]
\begin{scope}[dashed,gray]
\draw(0,0) -- ++(0,-2);
\draw(1,0) -- ++(0,-2);
\draw(2,0) -- ++(0,-2);
\draw(3,0) -- ++(0,-2);
\end{scope}
\draw[rectangle,rounded corners=2mm,fill=gray!30,fill opacity=0.9] (-3.7,0) -- ++(-0.2,0.2) -- ++ (1.4,0) -- ++(0.2,-0.2);
\draw[dashed, rounded corners=2mm] (-3.7,0.4) rectangle +(1.4,-2.8); 
\draw[rectangle,rounded corners=2mm,fill=gray!30,fill opacity=0.9] (-3.7,0) -- ++(1.6,-1.6) -- ++ (1.4,0) -- ++(-1.6,1.6);
\node at (-3, -2.8) {\small $W$};
\node at (-1.4, -0.45) {\small $W_w$};
\draw[rectangle,rounded corners=2mm,fill=gray!60,fill opacity=0.9] (0.6,0) -- ++(-0.2,0.2) -- ++ (2.8,0) -- ++(0.2,-0.2);
\draw[rectangle,rounded corners=2mm,fill=gray!20,fill opacity=0.9,thick] (0.6,0.3) rectangle +(2.8,-0.6);
\draw[rectangle,rounded corners=2mm,fill=gray!60,fill opacity=0.9] (0.6,0) -- ++(1.6,-1.6) -- ++ (2.8,0) -- ++(-1.6,1.6);
\node at (-0.25,0.5) {$I$};
\begin{scope}[ultra thin,black]
\draw (-3,0) -- ++(1.6,0); \node at (4.5, 0) {$\Delta_{w}$};
\draw[rounded corners=1.5mm] (-1.4,-0.2) -- ++(0,0.4) -- ++(5.4,0) -- ++(0,-0.4) -- cycle;
\draw (-2.3,-0.7) -- ++(1.6,0);  \node at (5.2, -0.7) {$\Delta_{w'}$};
\draw[rounded corners=1.5mm] (-0.7,-0.9) -- ++(0,0.4) -- ++(5.4,0) -- ++(0,-0.4) -- cycle;
\draw (-1.6,-1.4) -- ++(1.6,0); \node at (5.9, -1.4) {$\Delta_{w''}$};
\draw[rounded corners=1.5mm] (0,-1.6) -- ++(0,0.4) -- ++(5.4,0) -- ++(0,-0.4) -- cycle;
\draw (-3,-2) -- ++(1.6,0); \node at (4.5, -2) {$\Delta_v$};
\draw[rounded corners=1.5mm] (-1.4,-2.2) -- ++(0,0.4) -- ++(5.4,0) -- ++(0,-0.4) -- cycle;
\end{scope}
\begin{scope}[draw=black]
\draw (0,0) -- ++(1.4, -1.4);
\draw (1,0) -- ++(1.4, -1.4);
\draw (2,0) -- ++(1.4, -1.4);
\draw (3,0) -- ++(1.4, -1.4);
\end{scope}
\node[nd,draw=black,fill=gray,label=left:{\footnotesize $w$\!}] (w0) at (-3,0) {};
\node[nd,draw=black,fill=gray,label={[label distance=-3pt]left:{\footnotesize\ $w'$}}] (w01) at (-2.3,-0.7) {};
\node[nd,draw=black,fill=gray,label={[label distance=-3pt]left:{\footnotesize\ $w''$}}] (w02) at (-1.6,-1.4) {};
%\node[nd,draw=black,fill=gray] (w1) at (6.5,-1) {};
%\node[nd,draw=black,fill=gray] (w11) at (5.5,-1.3) {};
\node[nd,draw=black,fill=gray,label=left:{\footnotesize $v$\!}] (w2) at (-3,-2) {};
\draw[thick,dashed] (w0) to node[left] {\scriptsize $R$} (w2);
\draw[thick] (w0) to node[above,pos=0.6,sloped,yshift=-0.9mm] {\scriptsize $R_C$} (w01);
\draw[thick] (w01) to node[above,pos=0.6,sloped,yshift=-0.9mm] {\scriptsize $R_C$} (w02);
%\draw[thick] (w1) to node[right] {\scriptsize $R$} (w2);
%\draw[thick] (w1) to node[below] {\scriptsize $R_C$} (w11);
%
\foreach \y in {0,-2} {
\foreach \x in {0,...,3}
\node[fill=black,circle,inner sep=0pt,minimum size=0.8mm] at (\x,\y) {};
}
\foreach \y in {-0.7,-1.4} {
\foreach \x in {0,...,3}
\node[fill=black,circle,inner sep=0pt,minimum size=0.8mm] at (\x-\y,\y) {};
}
\node[nd,fill=white,ultra thick] (c1) at (1,0) {};
\node[nd,fill=white,ultra thick] (c2) at (2.7,-0.7) {};
\node[nd,fill=white,ultra thick] (c3) at (4.4,-1.4) {};
\node (c) at (2.3,-2.8) {$c$};
\draw[thick,->] (c) to (c1);
\draw[thick,->] (c) to (c2);
\draw[thick,->] (c) to (c3);
\node (p) at (3.5,0.8) {$P$};
\draw[thick,->] (p) to (2,0.3);
\node at (0.5,-3.5) {a)};
\end{scope}
\begin{scope}[xshift=136mm]
\draw[dashed, rounded corners=2mm] (-6.7,0.4) rectangle +(1.5,-2.8); 
\node at (-5.9, -2.8) {\small $W$};
\node at (-4.5*0.8,-2.9) {$P_{z_2}$};
\node at (-2*0.8,-2.9) {$P_{z_1}$};
\node at (2.5*0.8,-2.9) {$P_{y}$};
\draw[rectangle,rounded corners=2mm,fill=gray!20] (-4*0.8+0.35,0.4) rectangle +(-1*0.8-0.7,-2.8);
\draw[rectangle,rounded corners=2mm,fill=gray!20] (-1*0.8+0.35,0.4) rectangle +(-2*0.8-0.7,-2.8);
\draw[rectangle,rounded corners=2mm,fill=gray!20] (5*0.8+0.35,0.4) rectangle +(-5*0.8-0.7,-2.8);
\draw[rectangle,rounded corners=1mm,fill=gray!50] (-0.2,-0.15) rectangle +(2*0.8+0.4,0.3);
\draw[rectangle,rounded corners=1mm,fill=gray!50] (3*0.8-0.2,-1.15) rectangle +(2*0.8+0.4,0.3);
\foreach \y in {0,...,-2} {
\foreach \x in {-5,...,5}
\node[fill=black,circle,inner sep=0pt,minimum size=0.8mm] at (\x*0.8,\y) {};
}
\begin{scope}[ultra thin,black]
\draw (-5.8,0) -- ++(1,0); \node at (5.2,0) {$\Delta_{w_1}$};
\draw[rounded corners=1.5mm] (-4.8,-0.2) -- ++(0,0.4) -- ++(9.4,0) -- ++(0,-0.4) -- cycle;
\draw (-5.8,-1) -- ++(1,0); \node at (5.2,-1) {$\Delta_{w_2}$};
\draw[rounded corners=1.5mm] (-4.8,-1.2) -- ++(0,0.4) -- ++(9.4,0) -- ++(0,-0.4) -- cycle;
\draw (-5.8,-2) -- ++(1,0); \node at (5.2,-2) {$\Delta_{w_3}$};
\draw[rounded corners=1.5mm] (-4.8,-2.2) -- ++(0,0.4) -- ++(9.4,0) -- ++(0,-0.4) -- cycle;
\end{scope}
\node[nd,fill=black,rectangle] (j1) at (-5*0.8,0) {};
\node[nd,fill=black,rectangle] (j2) at (-4*0.8,-1) {};
\node (j) at (-3.5, 0.9) {$J^e$};
\draw[thick,->] (j) to (j1);
\draw[thick,->] (j) to (j2);
\node (qp) at (1.5, 0.9) {$Q^e$};
\draw[->,thick] (qp) -- (0.5,0.15);
\draw[->,thick] (qp) -- (3.5,-0.85);
\begin{scope}\footnotesize
\node[nd,draw=black,fill=gray,label=left:{$w_1$}] (w0) at (-5.8,0) {};
\node[nd,draw=black,fill=gray,label=left:{$w_2$}] (w1) at (-5.8,-1) {};
\node[nd,draw=black,fill=gray,label=left:{$w_3$}] (w2) at (-5.8,-2) {};
\end{scope}
\draw[thick,dashed] (w0) to node[right] {\scriptsize $R$} (w1);
\draw[thick,dashed] (w1) to node[right] {\scriptsize $R$} (w2);
\node at (0.7,-3.5) {b)};
\end{scope}
\end{tikzpicture}%
\caption{a) Formula $\textsf{card}(I,P)$. %each horizontal line represents the domain at a world. 
b)~Equation $e$ of the form $y = z_1 \times z_2$ with $z_1 = 3$ and $z_2 = 2$.}\label{fig:diophantus}
\end{figure}	
		
\item	If $e\in E$ is of the form $y=z_1\times z_2$, then we introduce a unary predicate letter $C^{e}$ and a constant~$c^{e}$ and let $J^{e}(x) = C^{e}(x)\land (c^{e}=x)$.
	First, we ensure $|P_{z_2}|_{w}$ is equal to the number of worlds in $W$ satisfying $\exists x\, J^{e}(x)$: 
\begin{equation}\label{eq:diophant:e3-1}
	\forall x\, \bigl(P_{z_2}(x) \leftrightarrow \Diamond^{+}J^{e}(x)\bigr) \quad \text{ and } \quad \Box^{+}\forall x\, \bigl(J^e(x) \rightarrow \Box\neg J^e(x)\bigr),
\end{equation}
where, due to the second conjunct, $J^{e}(x)$ 
	cannot hold on the same domain element more than once in~$W$.
	Next, we introduce a unary predicate letter $Q^{e}$ and add formulas 
\begin{equation}
\textsf{card}(\exists x\, J^{e}(x), P_{z_1})\qquad \text{ and }\qquad \textsf{card}(
	\exists x\, J^{e}(x), Q^{e})
\end{equation}	
	 to
	ensure that $|P_{z_1}|_{w}=|Q^{e}|_{w}$ for all worlds $w\in W$ with $\exists x\, J^e(x)$. Then we state that the $Q^{e}$ are mutually disjoint at different worlds $w\in W$ and empty at worlds $w\in W$ where $\exists x\,J^e(x)$ does not hold:
	\begin{equation}
	\Box^{+}\forall x\,\bigl(Q^{e}(x) \rightarrow \Box \neg Q^{e}(x)\bigr)
	\quad\text{ and }
	\quad
	\Box^{+}\bigl(\exists x\, Q^{e}(x) \rightarrow \exists x\,J^e(x)\bigr).
	\end{equation}
	We have $|P_{z_1}|_{w}$ (encoded as $|Q^{e}|_{w}$) in exactly $|P_{z_2}|_{w}$-many worlds and take the sum by adding the conjunct
\begin{equation}\label{eq:diophant:e3-3}
	\forall x\, \bigl(P_{y}(x) \leftrightarrow \Diamond^{+}Q^{e}(x)\bigr),
\end{equation} 
	which implies that equation $e$ holds in the sense that $|P_{y}|_{w}=|P_{z_1}|_{w} \times |P_{z_2}|_{w}$, for all $w\in W$; see Fig.~\ref{fig:diophantus}~b), where only worlds in $W$ and their domains are shown, and $|W_{w_1}| = |W_{w_2}| = 2$.
\end{itemize}	

So, $\varphi_E$  contains~\eqref{eq:diophant:v1}--\eqref{eq:diophant:v3}, for each variable $y$ in equations in $E$, and one of~\eqref{eq:diophant:e1-1}, ~\eqref{eq:diophant:e2-1}--\eqref{eq:diophant:e2-3} or~\eqref{eq:diophant:e3-1}--\eqref{eq:diophant:e3-3},  for each equation $e\in E$. It should be clear that $\varphi_E$ is as required.
\end{proof}

Theorem~\ref{thm:nontransfer1} does not depend on the constant domain semantics since
$\mathcal{D}$ and $\mathcal{D}_{\text{fin}}$ do not feel the difference between expanding and constant domains. Note also that recursive axiomatisability is not preserved under fusions since we have shown that $\Logecd (\mathcal{D}\otimes \mathcal{D}_{\text{fin}})$ is not recursively enumerable.

For the global consequence, one can prove an even stronger non-transfer result.
\begin{theorem}
  Let $\mathscr{C}_1$ and $\mathscr{C}_2$ be classes of Kripke frames that are
  closed under disjoint unions and contain non-trivial frames. %and let
 % $L_i = \mathop{\mathit{Log}}_{\mathit{cd}}^{=} \mathscr{C}_i$, for
%  $i \in \{1,2\}$.  
Then the global consequence for $\Logecd (\mathcal{C}_1 \otimes \mathcal{C}_2)$ is
  undecidable.
\end{theorem}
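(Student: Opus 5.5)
We will reduce the halting problem for two-counter Minsky machines (the introduction announces that Minsky machines are used), or equivalently an undecidable tiling/Diophantine problem, to the \emph{complement} of global consequence. For a machine $\mathcal{M}$ we construct closed $\mathcal{L}^=$-formulas $\varphi_{\mathcal{M}}$ and $\psi_{\mathcal{M}}$ such that $\varphi_{\mathcal{M}} \not\vdash^\ast \psi_{\mathcal{M}}$ in $\Logecd(\mathcal{C}_1\otimes\mathcal{C}_2)$ iff $\mathcal{M}$ does not halt (or halts; either direction gives undecidability).

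\emph{Grid skeleton.} As in the proof of Lemma~\ref{thr:no-gfmp}, we use the composite operators $\Diamond_p\chi = \Diamond_1(\neg p\wedge \Diamond_2(p\wedge\chi))$ and $\Box_p$. Under the global assumption, $\Diamond_p$ can be forced to have successors: a global formula $p\to\Diamond_p\top$ yields an infinite $R_1R_2$-path of $p$-worlds starting at a world where $p$ holds. Non-triviality of both classes and closure under disjoint unions guarantee, via the limit $1$-cactus frame of Lemma~\ref{thr:no-gfmp} (Proposition~\ref{prop:cactus-frame}), that such a path exists in some frame of $\mathcal{C}_1\otimes\mathcal{C}_2$, so satisfiability of the intended encoding is witnessed. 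The $p$-worlds along the path represent successive configurations of $\mathcal{M}$. Since constant domains are used, every domain element persists along the path, so the domain can carry the counters.

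\emph{Counters with equality.} Counter $k$ at a configuration world is the cardinality of a predicate $C_k$. Increment, decrement, and no-change are expressed with non-rigid constants. Increment: $\forall x\,(C_k(x)\to\Box_p C_k(x))$, $\neg C_k(c)$, $\Box_p C_k(c)$, and $\forall x\,(\neg C_k(x)\wedge x\neq c\to \Box_p\neg C_k(x))$. Here the last conjunct must be phrased so that the $\Box_p$ takes the right scope for the non-rigid $c$; we can transport $c$'s value by a fresh predicate $D(x)\leftrightarrow x=c$ and state $\forall x\,(D(x)\to \Box_p C_k(x))$ with everything else frozen. Decrement is symmetric. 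The zero test is $\neg\exists x\, C_k(x)$. The state is encoded by proposition letters $s_q$, and all transition constraints are written globally as $p\wedge s_q\to\dots$. Then $\psi_{\mathcal{M}} = \neg p \vee \neg(\text{initial state and empty counters})$, or we add the conjunct $p\to\neg s_{\text{halt}}$ to $\varphi_{\mathcal{M}}$. Consequently, $\varphi_{\mathcal{M}}\not\vdash^\ast\psi_{\mathcal{M}}$ iff there is a model with an infinite run, i.e.\ iff $\mathcal{M}$ does not halt. Correctness in the direction from model to run uses that $\Box_p$ transfers to \emph{all} $p$-successors, so every chosen path yields a faithful run. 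In the converse direction we build the model on the cactus frame above with domain $\mathbb{N}$ and interpret $C_k$ by initial segments, as in Lemma~\ref{thr:no-gfmp}.

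\emph{Main obstacle.} The main difficulty is determinism and frame-independence. The frames are arbitrary (only non-trivial), so a world may have many $R_1$- and $R_2$-successors, including reflexive loops; $w R_1 v$ with $v=w$ must not destroy the encoding. The alternation $\neg p$ then $p$ in $\Diamond_p$ prevents reflexive steps from counting as configuration steps. We also must ensure that constraints imposed on all $\Box_p$-successors are jointly consistent. This holds because each successor simply receives the successor configuration, which is uniquely determined. A further subtlety is that non-rigid constants make ``the same element $c$'' across worlds expressible only through predicates; we handle this by the auxiliary predicate $D$ and by checking that the scope of $c$ is always a single world. Verifying that no frame of $\mathcal{C}_1\otimes\mathcal{C}_2$ admits a spurious model (e.g.\ counters changing off-path) is what we would check most carefully.
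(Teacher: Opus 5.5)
Your proposal is correct and follows essentially the same route as the paper. The paper defers its proof to the full version, describing it as a Minsky-machine reduction in the style of Degtyarev, Fisher and Lisitsa; your version of that uses the $\Diamond_p$-chains of Lemma~\ref{thr:no-gfmp} as discrete time, and non-rigid constants with equality to make counters, stored as predicate extensions, change by exactly one element.

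Two small refinements. First, the increment can be written directly as $\forall x\,\bigl((x=c)\to\Box_p C_k(x)\bigr)$, with no auxiliary $D$. Second, what constant domains buy you is not persistence of elements (which holds under expanding domains too) but that $\forall x$ at $w$ ranges over the whole domain of every $\Box_p$-successor. This is what lets the freezing conjuncts pin down $C_k$ there exactly; under expanding domains, fresh elements could enter $C_k$ unchecked, which is why this argument is confined to \textit{cd}.
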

The proof is given in the full version of this paper available at~\cite{arxiv}.
It is by reduction of Minsky machines and very similar to 
the known undecidability proof for temporal first-order logic with equality~\cite{Degtayrev2002}.

We conclude this section with two remarks. First, note that neither $\mathcal{D}$ nor $\mathcal{D}_{\textup{fin}}$ are propositionally modally definable classes of frames. In fact, as $\mathbf{Diff}$ is axiomatised by $p\rightarrow \Box\Diamond p$ and \mbox{$\Diamond p \rightarrow \Box(p \vee \Diamond p)$}, any frame 
$(W,E)$ with an equivalence relation $E$  validates $\textbf{Diff}$ (in other words, $\mathbf{Diff}\subseteq \mathbf{S5}$). The question arises whether our non‑preservation result continues to hold when one restricts the components to propositionally modally definable frame classes. We conjecture that it does, but at present we only have a proof for expanding domain semantics and under the assumption that polymodal logics are permitted as components. To demonstrate this,
let $\mathcal{C}_{\text{diff}}$ be the class of \emph{all} frames validating \textbf{Diff} and $\mathcal{C}_{\text{ftime}}$ the class of all frames validating the bimodal logic $\textbf{LTL}_{f}$ of all frames $(W,S,R)$ with $R$ a finite strict linear order and $S$ its successor relation. $\textbf{LTL}_{f}$ is axiomatised by taking the fusion of \textbf{GL.3} and $\textbf{Alt}_{1}$ and adding the standard induction axioms~\cite{goldblatt1987logics}. 
\begin{theorem}\label{thm:modaldef}
	$\Logexd \mathcal{C}_{\textup{diff}}$ and $\Logexd \mathcal{C}_{\textup{ftime}}$ are decidable but	 
	$\Logexd (\mathcal{C}_{\textup{diff}} \otimes \mathcal{C}_{\textup{diff}} \otimes \mathcal{C}_{\textup{ftime}})$
	is undecidable.
\end{theorem}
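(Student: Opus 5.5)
The proof has two parts: decidability of the components, and undecidability of the triple fusion by adapting the Diophantine encoding from the proof of Theorem~\ref{thm:nontransfer1}.

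\emph{Decidability of the components.} For $\Logexd \mathcal{C}_{\textup{diff}}$ I would use the standard embedding into the two-variable fragment with counting. Frames validating $\mathbf{Diff}$ are exactly those in which $R$ is symmetric and satisfies $wRv \wedge vRu \wedge w\neq u \Rightarrow wRu$. So every such frame is a disjoint union of clusters in which each point is either reflexive or irreflexive. The first-order condition on frames is then expressible with two variables, and so is the expanding-domain semantics: there is one sort for worlds and one for (world, element) pairs, with the constants treated as counting-one predicates at each world. This yields a polytime reduction to $C^2$, following Hampson's reduction for $\mathcal{D}$.

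For $\Logexd \mathcal{C}_{\textup{ftime}}$, the frames are finite strict linear orders with successor. Domains expand along time, and equality with non-rigid constants amounts to counting up to one. Here I would invoke the known decidability of the one-variable (monodic) fragment of first-order temporal logic with equality over finite linear time~\cite{DBLP:journals/apal/GabelaiaKWZ06}, via quasimodels in which each element carries a run and the constants are realised by ``singleton'' types. Over finite flows of time this is decidable, because the quasimodel conditions can be checked by an automaton or by a finite search.

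\emph{Undecidability of the triple fusion.} I would rerun the encoding from the proof of Theorem~\ref{thm:nontransfer1} with the following substitutions:
\begin{itemize}
\item the first $\mathbf{Diff}$ modality plays the role of $\Box$ (the set $W$ of ``variable worlds'');
\item the $\textbf{LTL}_f$ component replaces $\Box_C$, since its finite linear order provides the finite counting components;
\item the second $\mathbf{Diff}$ modality is used to simulate the genuine ``elsewhere'' behaviour.
\end{itemize}
The difficulty is that $\mathcal{C}_{\textup{diff}}$ also contains frames with reflexive points and equivalence clusters, where $\Box$ is not an elsewhere operator. To force genuine inequality I would use fresh constants and uniqueness markers. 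For example, require $\Box^+\forall x\,(J(x)\to\Box\neg J(x))$ and mark worlds by singleton predicates; then a world that is reflexive cannot contain a $J$-element. This forces the relevant worlds to be irreflexive, so $\Box$ behaves like ``elsewhere'' on them.

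With expanding domains, the identification of elements across worlds used in \eqref{eq:diophant:e2-3} and \eqref{eq:diophant:e3-3} has to be safeguarded. I would relativise all quantification to $P_y$-elements, and make those elements persist by the conjunct $\forall x\,(P_y(x)\to\Box P_y(x))$ together with the symmetry of $R$. The formula $\textsf{card}$ is rewritten with the temporal component: the constant $c$ marks distinct elements at each time point until the end, which is expressed with $\Box_{\textup{GL}}$ and next, and $P$ collects the values of $c$ at all time points. Finiteness is automatic in $\textbf{LTL}_f$. Under expanding domains the future may contain new elements, but the constraint ``$x=c$ implies never again $x=c$'' still yields a bijection between time points and $P$-elements, provided $P$ at the initial world is defined through $\Diamond^+(x=c)$.

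The main obstacle is the multiplication gadget, which needs exactly $|P_{z_2}|$ distinct worlds each carrying a disjoint copy $Q^e$. In the Diff component, nonreflexive behaviour must be enforced for these worlds. This is where I would use the second $\mathbf{Diff}$ modality: one copy enumerates the worlds that carry $J^e$ and the other verifies disjointness, so that a reflexive cluster cannot collapse two intended worlds into one. I would first try to show that any model of $\varphi_E$ can be pruned to a subframe in which the marked worlds are pairwise distinct and mutually $R$-related. The cardinality equations then follow exactly as before, and conversely a solution of $E$ yields a model directly on frames from $\mathcal{D}$ and finite linear orders.
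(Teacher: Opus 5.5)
\paragraph{The gap: your temporal version of $\textsf{card}$.} Under $\textit{xd}$ semantics the domain may grow along $<$. So the value of the non-rigid constant $c$ at a later time point can be an element born after $w$. Such a value lies outside $\Delta_w$ and is not counted by $\forall x\,(P(x)\leftrightarrow\Diamond_L^+(x=c))$ at $w$.

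The constraint that $x=c$ never recurs only makes the map sending a $P$-element to the time point where it equals $c$ injective. You therefore get $|P|_w\le$ the number of time points $\geq w$, not a bijection. For instance, let $w$ have a single later point $u$, with $\Delta_w=\{a,a'\}$ and $\Delta_u=\{a,a',b\}$. Let $c_1$ denote $a$ at $w$ and $b$ at $u$, and let $c_2$ denote $a$ at $w$ and $a'$ at $u$. Both instances of your gadget hold at $w$, yet $|P_1|_w=1\neq 2=|P_2|_w$.

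The addition and multiplication gadgets use two $\textsf{card}$ instances with different fresh constants precisely to transfer equalities such as $|P_{z_1}|_w=|Q^e|_w$. So the reduction breaks, and nothing in your formula or elsewhere in $\varphi_E$ excludes such late-born values. Your second $\mathbf{Diff}$ modality does not compensate. Its role (``one copy enumerates, the other verifies disjointness'') is left unspecified, and the worry it addresses is misplaced, since reflexivity never identifies distinct worlds.

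\paragraph{The paper's route.} It keeps all counting inside a symmetric relation and uses time only for an upper bound. The two $\mathbf{Diff}$ modalities play the roles of $\Box$ and $\Box_C$ from Theorem~\ref{thm:nontransfer1}. Within a component of a symmetric relation, expanding domains are constant, so $\textsf{card}$ again yields $|W_w|=|P|_w$.

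Both modalities are forced to be genuine `elsewhere' operators by $\Box^+\forall x\,((x=c)\to\Box(x\neq c))$ for a fresh constant $c$; for $\Box_C$ this conjunct is already part of $\textsf{card}$. Since a non-rigid constant denotes at every world, every world of the component must be irreflexive. Your $J$-based formula, by contrast, only covers worlds containing a $J$-element.

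The only job of $\textbf{LTL}_f$ is finiteness, obtained by requiring $\forall x\,\Diamond_L^+(x=c')$ at a world $w_0\in W$. Choosing, for each element of $\Delta_{w_0}$, a time point where it equals $c'$ gives an injection into the finitely many points $\geq w_0$. This is exactly the direction that late-born elements cannot spoil. As domains are constant on $W$, every $|W_w|=|P|_w\le|\Delta_w|$ is finite, and the rest of the encoding carries over.

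\paragraph{A minor point on the components.} Citing the known decidability of $\Logexd\mathcal{C}_{\textup{ftime}}$ is fine. However, the problem is Ackermann-hard, so it is not settled by a ``finite search'' over flows whose length is unbounded.
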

\begin{proof}
	Decidability of $\Logexd \mathcal{C}_{\text{diff}}$ is shown in~\cite{DBLP:conf/aiml/Hampson18}. For decidability (and Ackermann-hardness) of $\Logexd \mathcal{C}_{\text{ftime}}$, see~\cite{DBLP:journals/tocl/HampsonK15,DBLP:journals/apal/GabelaiaKWZ06}.
	Undecidability of $\Logexd (\mathcal{C}_{\text{diff}} \otimes \mathcal{C}_{\text{diff}} \otimes \mathcal{C}_{\text{ftime}})$ can be shown by adapting the reduction of Diophantine equations: first, observe that one can enforce that a frame 
	in $\mathcal{C}_{\text{diff}}$ is actually in~$\mathcal{D}$ using the formula
	\begin{equation*}
		\Box^{+}\forall x\,\bigl((x=c) \rightarrow \Box (x\ne c)\bigr),
	\end{equation*}
	which can only be satisfied if all worlds in $W$ are irreflexive. The fusion with
	$\mathcal{C}_{\text{ftime}}$ is used to enforce a \emph{finite} first-order domain for worlds in $W$. This can be achieved as follows: for another fresh constant~$c'$ and $\Box_{L}$ the modal operator interpreted by the strict order in $\mathcal{C}_{\text{ftime}}$, if the sentence 
	$\forall x\,\Diamond_{L}^{+}(x=c')$ is true at $w_{0}$, then the domain at $w_{0}$ cannot be larger than the length of the finite strict order seen from $w_{0}$. It is now straightforward to adapt the remaining steps of the undecidability proof. 
\end{proof}

We conjecture that using polymodal logics here is not essential. For instance, the standard reduction of polymodal logics to logics with one modality might be applicable~\cite{DBLP:journals/jsyml/KrachtW99}.

Second, the undecidability proof can be adapted to resolve an open problem concerning monodic fragments of first‑order modal logics~\cite{KuruczWolterZakharyaschevGabbay2003,DBLP:journals/corr/abs-2509-08165}. A key ingredient of the encoding of Diophantine equations is that we can enforce equal cardinality $|P|_w = |Q|_w$ of the extensions of predicates $P$ and~$Q$ at world~$w$. This can also be achieved directly using a bijective function in the extension of one-variable modal logic to, for instance, the monodic fragment of modal first-order logic over the two-variable fragment of $\mathbf{FO}$ with counting. Hence, it can be shown that this monodic fragment is undecidable over the class $\mathcal{D}_{\text{fin}}$. This shows that decidability does not always transfer from one-variable modal logic with equality to monodic fragments based on decidable fragments of first-order logic (recall that the two-variable fragment with counting is decidable~\cite{DBLP:journals/jolli/Pratt-Hartmann05}). Note that sufficient conditions for this kind of preservation from one-variable modal logics to monodic fragments are shown in~\cite{DBLP:journals/tocl/GhilardiNZ08}. 
\begin{theorem}
	$\Logecd \mathcal{D}_{\textup{fin}}$ is decidable, but the monodic fragment of the two-variable fragment with counting over $\mathcal{D}_{\textup{fin}}$ is undecidable.
\end{theorem}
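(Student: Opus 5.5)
The decidability of $\Logecd \mathcal{D}_{\text{fin}}$ was already recalled before Theorem~\ref{thm:nontransfer1}: it reduces in polynomial time to the finite satisfiability problem for the two-variable fragment with counting. So the work is in the undecidability half. For that I will reduce solvability of Diophantine equations in $\mathbb{N}\setminus\{0\}$ again, reusing the encoding from the proof of Theorem~\ref{thm:nontransfer1}. The only change is that the auxiliary $\mathcal{D}_{\text{fin}}$ component, whose sole role there was to supply finite cardinalities through $\textsf{card}(I,P)$, is dropped. Everything will be interpreted over a single frame from $\mathcal{D}_{\text{fin}}$ with modality $\Box$. Equal cardinality will instead come from a first-order bijection available inside the monodic two-variable fragment with counting.

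The first step is a monodic replacement for $\textsf{card}(I,P)$. Fix a fresh binary predicate $F_{P,Q}$. I add the two-variable counting sentence (with no modal operator inside the scope of two free variables, so the formula is monodic)
\[
\forall x\,\bigl(P(x)\to \exists^{=1}y\,(F_{P,Q}(x,y)\land Q(y))\bigr)\land \forall y\,\bigl(Q(y)\to \exists^{=1}x\,(F_{P,Q}(x,y)\land P(x))\bigr)\land \forall x\forall y\,\bigl(F_{P,Q}(x,y)\to P(x)\land Q(y)\bigr).
\]
Placed under $\Box^+$ and guarded by the relevant marker, this forces $|P|_w=|Q|_w$ at each such world. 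Finiteness of the values now comes for free, because the frame in $\mathcal{D}_{\text{fin}}$ is finite. In the multiplication gadget, $|P_{z_2}|$ is realised as the number of worlds carrying $J^e$. This uses only the properties already enforced by~\eqref{eq:diophant:e3-1} under the elsewhere modality, so it transfers verbatim. The markers $I_y$ from~\eqref{eq:diophant:v1} and the constancy conditions~\eqref{eq:diophant:v3} likewise stay unchanged.

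The second step rewrites each gadget. For $y=n$ I keep~\eqref{eq:diophant:e1-1}. For $y=z_1+z_2$ I replace $\textsf{card}(I_{z_i},Q^e_i)$ by the bijection between $P_{z_i}$ and $Q^e_i$ at the $I_{z_i}$-world, keeping disjointness and~\eqref{eq:diophant:e2-3}. For $y=z_1\times z_2$ I likewise use a bijection between $P_{z_1}$ and $Q^e$ at each $J^e$-world. Then I check both directions. Soundness goes as in Theorem~\ref{thm:nontransfer1}: in any model the cardinalities satisfy $E$. For completeness, given a solution I build a model on a finite $\mathcal{D}$-frame. It needs enough worlds to host all markers and $z_2$-many $J^e$-worlds per multiplication, so it must be at least as large as the maximum value of a multiplication's $z_2$. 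I use constant domains of size at least the sum of all values and interpret the bijections explicitly.

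The main obstacle is this completeness direction: checking that the disjointness and "appears once in $W$" constraints stated via the elsewhere operator are simultaneously satisfiable. That depends on how the $Q^e$ copies are placed across distinct worlds with distinct domain elements. I would handle it by assigning fresh disjoint blocks of domain elements to each world's $Q^e$ and $Q^e_i$, and rigidly fixing the $P_y$ extensions.
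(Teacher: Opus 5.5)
Your overall route is the one the paper sketches: drop the $\Box_C$ component and replace the equal-cardinality effect of $\textsf{card}$ by a first-order bijection in the two-variable fragment with counting. There is, however, a genuine gap in the direction ``$\varphi_E$ satisfiable $\Rightarrow$ $E$ solvable in $\mathbb{N}\setminus\{0\}$''.

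Your claim that finiteness ``comes for free, because the frame in $\mathcal{D}_{\text{fin}}$ is finite'' is false. $\mathcal{D}_{\text{fin}}$ bounds the number of worlds, not the domain $\Delta$, and $\Delta$ may be infinite. Indeed, the two-variable fragment with counting can even force an infinite domain.

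In the proof of Theorem~\ref{thm:nontransfer1}, finiteness of \emph{every} value is supplied by the per-variable conjunct $\textsf{card}(I_y,P_y)$, which gives $|P_y|_w=|W_w|<\omega$. Your bijection gadget only replaces the occurrences of $\textsf{card}$ that compare two predicates. The conjunct $\textsf{card}(I_y,P_y)$ is dropped with nothing in its place, so only variables occurring as $z_2$ in a multiplication are bounded, by the number of $J^e$-worlds.

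Here is a concrete counterexample. Take $E=\{a=x+b,\ b=1,\ a=x\times c,\ c=1\}$, which has no solution in $\mathbb{N}\setminus\{0\}$. Your $\varphi_E$ nevertheless holds in a one-world model with the following data:
\begin{itemize}
\item $\Delta=\mathbb{N}$, $P_x=Q^{e_1}_1=\mathbb{N}\setminus\{0\}$, $P_b=Q^{e_1}_2=\{0\}$;
\item $P_a=Q^{e_3}=\mathbb{N}$;
\item $P_c$ a singleton carrying $J^{e_3}$.
\end{itemize}
This works because $\aleph_0+1=\aleph_0\times 1=\aleph_0$. So the reduction as stated is unsound.

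The repair is to tie each $P_y$ to the finite set of worlds using the one modality you still have. For example, for every variable $y$ and a fresh non-rigid constant $c_y$, add
\[
\forall x\,\bigl(P_y(x)\to\Diamond^{+}(x=c_y)\bigr)\ \land\ \Box^{+}\forall x\,\bigl((x=c_y)\to\Box(x\ne c_y)\bigr).
\]
This injects $P_y$ into the finite clique $W$; it is $\textsf{card}$ with $\Box$ in place of $\Box_C$, weakened to an inclusion. Alternatively, realise every value, as you already do for $z_2$, as a number of marked worlds. In the model construction you then need $|W|$ at least the largest value in the solution, not just the largest multiplier.

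A smaller point on that direction: the blocks for $Q^e$ and $Q^e_i$ cannot be ``fresh'' elements. By \eqref{eq:diophant:e2-3} and \eqref{eq:diophant:e3-3} they must partition $P_y$ exactly. When $y$ is the left-hand side of several equations, each equation's blocks must partition the same $P_y$. So fix $P_y$ first and then split it.
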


% !TEX root =  fusions-submission.tex

\section{Fusions of Propositional Modal Logics Sharing an S5 Modality}

In this section, we generalise the fusion construction for one‑variable first‑order modal logics without equality to fusions of propositional modal logics sharing an \textbf{S5} modality. We begin by establishing a transfer theorem for logics admitting models that are homogeneous for the shared modality, and then apply it to (semi)commutators. In contrast to the transfer proof in Lemma~\ref{lemma:global}, in the global case we do not construct cactus models; instead, we merge the respective models of the component logics in a single step to obtain models of the fusion. This approach works provided we can ensure that the extensions of the relevant types in models of the component logics have matching cardinalities. A closely related technique was first used in~\cite{DBLP:conf/aiml/Wolter96} to obtain transfer results via algebraic semantics rather than Kripke semantics, thereby avoiding any reliance on the Kripke completeness of the component logics.

Consider a propositional modal language with  \textbf{S5} operator $\Box_{E}$ interpreted in Kripke models by the equivalence relation $E$. The $E$-equivalence class $[w]_{E}$ of $w\in W$ is defined as usual by setting $[w]_{E} = E(w)$. Let $\kappa$ be an infinite cardinal. 
Then $\mathfrak{M}$ is called \emph{$(E,\kappa)$-homogeneous} if, for every $E$-equivalence class $U$ in $\mathfrak{F}$ and every modal formula $\varphi$,  
$|\{ w\in U \mid \mathfrak{M},w\models \varphi\}|\in \{0,\kappa\}$.
\begin{definition}
	A propositional modal logic $L$ with \textbf{S5} operator $\Box_{E}$ admits $E$-homogeneous models  (for global consequence) if there exists an infinite cardinal $\kappa$ such that, for all $\kappa'\geq \kappa$, whenever $\vp \notin L$ (or, respectively, $\varphi\not\vdash^{\ast}_{L}\psi$), there exists an $(E,\kappa')$-homogeneous model $\mathfrak{M}$ based on an $L$-frame with $\mathfrak{M}\not\models\vp$ (or, respectively, with $\mathfrak{M}\models \varphi$ and $\mathfrak{M}\not\models\psi$).
\end{definition}
Note that modal logics that admit homogeneous models (for global consequence) are (globally) Kripke complete, by definition. To formulate our main results, assume for simplicity that $L_i$ is a bimodal logic in the propositional modal language $\mathcal{L}_{i,E}$ with modal operators $\Box_{i}$ and $\Box_{E}$, for $i=1,2$. Then $L_{1}\otimes_E L_{2}$
denotes the smallest modal logic containing $L_{1}\cup L_{2}$ in the language $\mathcal{L}_{1,2,E}$ with modal operators $\Box_{1},\Box_{2}$, and $\Box_{E}$.
\begin{theorem}\label{thm:globalprop}
	Let $L_{1}$ and $L_{2}$ admit $E$-homogeneous models \textup{(}for global consequence\textup{)}. Then 
	\begin{itemize}
		\item $L_{1} \otimes_E L_{2}$ is \textup{(}globally\textup{)} Kripke complete\textup{;}
		\item $L_{1} \otimes_E L_{2}$ is  \textup{(}globally\textup{)} decidable if $L_1$ and $L_2$ are both \textup{(}globally\textup{)} decidable.
	\end{itemize}
\end{theorem}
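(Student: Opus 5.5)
We mirror the proof of Lemma~\ref{lemma:global}, but replace the first-order types of the one-variable setting by propositional types, and replace the iterative cactus construction by a single merging step. The merging is made possible by $E$-homogeneity. We treat the global case first; the local case follows the same pattern with the modifications described at the end.

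\emph{Setup.} Given $\varphi,\psi$, let $\FL$ be the closure of $\sub\varphi\cup\sub\psi$ under subformulas, and consider $\FL$-types, i.e.\ maximal Boolean-consistent subsets of $\FL\cup\{\neg\chi\mid\chi\in\FL\}$. Define an \emph{$E$-quasistate} to be a non-empty set $\qs$ of types such that, for every $\Box_E\chi\in\FL$ and $\type\in\qs$, we have $\Box_E\chi\in\type$ iff $\chi\in\type'$ for all $\type'\in\qs$. Its realisability formula is
\begin{equation*}
\hat{\qs} \ = \ \Box_E\bigvee\nolimits_{\type\in\qs}\type\ \land\ \bigwedge\nolimits_{\type\in\qs}\Diamond_E\type .
\end{equation*}
Surrogates $\sur{i}{\chi}$ are defined as before: each $\Box_{3-i}$-subformula is replaced by a fresh proposition letter. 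In the global case, exactly as in \textbf{(G1)}$\Rightarrow$\textbf{(G3)}, we take $\QS$ to be the set of quasistates $\qs$ with $\varphi\not\vdash^*\neg\hat{\qs}$. We then obtain the conditions \textbf{(G3.1)}--\textbf{(G3.3)}, with both components now in the $L_i$ (so \textbf{(G3.3)} also carries $\sur{2}{\varphi}$). Soundness, together with closure of $\vdash_{L_i}$ under substitution, gives the forward direction. Decidability follows once we know that these conditions also suffice, since there are only finitely many candidate $\QS$.

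\emph{Merging step.} Conversely, fix a cardinal $\kappa$ that works for both $L_1$ and $L_2$ (take the maximum of the two thresholds). For each $\qs\in\QS$ we use homogeneity to get:
\begin{itemize}
\item an $(E,\kappa)$-homogeneous $L_1$-model $\mathfrak{M}_1$ of $\sur{1}{\varphi}\land\sur{1}{\hat{\QS}}$ refuting $\sur{1}{\psi}$;
\item for each $\qs$, an $(E,\kappa)$-homogeneous $L_1$-model and an $(E,\kappa)$-homogeneous $L_2$-model, each satisfying $\sur{i}{\hat{\QS}}$ globally and realising $\qs$ somewhere.
\end{itemize}
Taking disjoint unions, which is legitimate because $L$-frames are closed under disjoint unions for the frame classes at hand (assume this, or pass to the class of all $L$-frames), we obtain one homogeneous $L_1$-model $\mathfrak{N}_1$ and one $L_2$-model $\mathfrak{N}_2$. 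We arrange that in each of them every quasistate $\qs\in\QS$ is realised by exactly $\kappa$ $E$-classes; this is done by taking $\kappa$ copies. Every $E$-class $U$ of $\mathfrak{N}_i$ realises some $\qs_U\in\QS$: the set of $\FL$-types (read through surrogates) occurring in $U$ is exactly $\qs_U$, and by homogeneity each type in $\qs_U$ occurs in exactly $\kappa$ worlds of $U$. We therefore choose a bijection between the $E$-classes of $\mathfrak{N}_1$ and those of $\mathfrak{N}_2$ that preserves $\qs_U$. Inside matched classes we choose a bijection of worlds that preserves $\FL$-types; this exists because every type class has cardinality $\kappa$. This yields a single set $W$ with $E$, $R_1$ and $R_2$, and the frame $(W,R_1,E)$ is isomorphic to an $L_1$-frame and $(W,R_2,E)$ to an $L_2$-frame, so the fused frame validates $L_1\otimes_E L_2$. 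We interpret the original letters by the common types. A truth lemma, by induction on $\chi\in\FL$, shows that $w\models\chi$ iff $\sur{1}{\chi}$ holds at $w$ in $\mathfrak{N}_1$ iff $\sur{2}{\chi}$ holds at $w$ in $\mathfrak{N}_2$. The $\Box_i$ cases use that the surrogate letters in the other component agree with the types, and the $\Box_E$ case uses matching quasistates. This gives a model of $\varphi$ refuting $\psi$.

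\emph{Main obstacle.} The delicate point is the cardinality bookkeeping: every type inside every $E$-class, and every quasistate across classes, must occur with the same cardinality $\kappa$ on both sides. This is exactly why the definition requires homogeneity for all $\kappa'\ge\kappa$, so that a common cardinal can be chosen, and why we pad with copies of homogeneous models.

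\emph{Local case.} A point-generated or global step cannot simply be assumed here. We follow the alternation-depth recursion of Lemma~\ref{lemma:local}: the quasistates in $\QS_i(\varphi)$ are certified by recursive calls on formulas of smaller alternation depth, and the merging step above is applied layer by layer. A component model refuting $\sur{i}{\varphi}$ is merged with homogeneous models of the other component that satisfy the realised quasistates, using the same bijection argument. This handles $\Box_{3-i}$-subformulas, whose surrogates only need to be correct up to the relevant modal depth. Decidability of the local problem then follows by the same recursion as in Lemma~\ref{lemma:local}.
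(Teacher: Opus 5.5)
Your global argument is essentially the paper's proof of Lemma~\ref{lem:eexpand}: the same criterion \textbf{(G$_E$3)}, and the same one-step merge of $(E,\kappa)$-homogeneous component models along a bijection that maps $E$-classes onto $E$-classes and preserves $\FL$-types. Your local sketch likewise matches the paper's remark that the local case combines this with the recursion of Lemma~\ref{lemma:local}.

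Two small corrections. First, take $\lambda$ copies for some infinite $\lambda$ at least the number of $E$-classes of your models, rather than $\kappa$ copies; `exactly $\kappa$' classes can fail, and you only need each $\qs$ to be realised by equally many classes on both sides. Second, in the local case count depth along $R_i\cup E$-paths, e.g.\ iterate $\Box_E\Box_i$ instead of $\Box_i$. Without $(\textit{lcom}_{\Box})$, an $E$-step followed by an $R_i$-step can reach $E$-classes that $\Box_i^{\leq n}$ does not constrain.
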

We give a detailed proof for the global consequence. The local case can be shown by a rather straightforward combination of the approaches to the local case in Section~\ref{sec:equality-free} and the global case in this section, see~\cite{arxiv} for details. Assume $L_{1}$ and $L_{2}$ admit $E$-homogeneous models for global consequence. 

Modulo replacing $\forall x$ with $\Box_{E}$, we use the same notation as in Section~\ref{sec:equality-free}.  
For instance, if $\FL$ is a finite set of $\mathcal{L}_{1,2,E}$-formulas closed under subformulas, then 
a \emph{$\FL$-type} is a maximal Boolean-consistent subset of $\{\psi,\neg\psi\mid \psi\in\FL\}$, 
and a \emph{$\FL$-quasistate} is a non-empty set of $\FL$-types such that
\begin{equation*}%\label{eq:exists:saturated}
%\tag{\textbf{qs}}
\Box_E\psi\in\type \quad\text{ iff }\quad \psi\in\type', \text{ for all } \type'\in\qs,\qquad\qquad \text{ for each }\Box_E\psi\in\FL \text{ and } \type\in\qs.
\end{equation*}
We set
\begin{equation*}
	\hat{\qs} \ \  = \ \   \Box_{E}\bigvee\nolimits_{\type\in\qs} \type \ \ \land \ \ \bigwedge\nolimits_{\type\in\qs} \Diamond_{E} t,
\end{equation*} 
where, for simplicity, we use $\type$ to denote the conjunction of all formulas in $\FL$-type $\type$. Again, $\hat{\QS}$ denotes $\bigvee \{ \hat{\qs} \mid \qs\in\QS \}$ for a set $\QS$ of $\FL$-quasistates. We also define surrogates in the obvious way by introducing for each formula of the form $\Box_i\psi$, a fresh proposition letter $p_{\Box_i\psi}$ called the \emph{surrogate} of $\Box_i\psi$.
To establish the claimed results in the global case, it suffices to show the following lemma.
\begin{lemma}\label{lem:eexpand}
Let $\varphi, \psi$ be $\mathcal{L}_{1,2,E}$-formulas. Denote $\FL = \sub\varphi\cup\sub\psi$. 
	Then the following are equivalent\textup{:}
	\begin{description}
		\item[(G$_E$1)] $\varphi\not\vdash_{L_1\otimes_{E} L_2}^*\psi$\textup{;}
		\item[(G$_E$2)] there is a model $\mathfrak{M}$ based on an $L_{1}\otimes_{E}L_{2}$-frame such that $\mathfrak{M}\models\varphi$ but $\mathfrak{M}\not\models\psi$\textup{;}
		\item[(G$_E$3)] there is a set $\QS$  of $\FL$-quasistates such that
		\begin{itemize}
			\item[\textup{\textbf{(G$_E$3.1)}}] $\sur{1}{\varphi}\land \sur{1}{\smash{\hat{\QS}}} \not\vdash_{L_1}^* \sur{1}{\psi}$,
			\item[\textup{\textbf{(G$_E$3.2)}}] $\sur{1}{\varphi}\land \sur{1}{\smash{\hat{\QS}}} \not\vdash_{L_1}^* \neg\sur{1}{\hat{\qs}_i}$, for each $\qs_i\in\QS$,
			\item[\textup{\textbf{(G$_E$3.3)}}] $\sur{2}{\smash{\hat{\QS}}} \not\vdash_{L_2}^* \neg\sur{2}{\hat{\qs}_i}$, for each $\qs_i\in\QS$.
		\end{itemize}
	\end{description}
\end{lemma}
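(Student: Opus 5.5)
The proof follows the pattern of Lemma~\ref{lemma:global}. The implications $\textbf{(G$_E$2)}\Rightarrow\textbf{(G$_E$1)}$ and $\textbf{(G$_E$1)}\Rightarrow\textbf{(G$_E$3)}$ go through essentially verbatim. The first is soundness. For the second, take $\QS$ to be the set of $\FL$-quasistates $\qs$ with $\varphi\not\vdash^*_{L_1\otimes_E L_2}\neg\hat{\qs}$. Since $\bigvee_{\qs}\hat{\qs}$ over all quasistates is a theorem (it only uses \textbf{S5} for $\Box_E$), we get $\varphi\vdash^*\hat{\QS}$. Each of \textbf{(G$_E$3.1)}--\textbf{(G$_E$3.3)} then follows by contraposition: a derivation in $L_i$ with surrogates becomes a derivation in $L_1\otimes_E L_2$ after substituting $\Box_{3-i}\chi$ back for the surrogate $p_{\Box_{3-i}\chi}$. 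This substitution is legitimate because $L_i$ is closed under substitution and the shared modality $\Box_E$ is left untouched.

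The real work is $\textbf{(G$_E$3)}\Rightarrow\textbf{(G$_E$2)}$. Here a single merge replaces the cactus construction. Fix an infinite cardinal $\kappa'$ that is at least the thresholds $\kappa$ for both $L_1$ and $L_2$. By \textbf{(G$_E$3.1)}, \textbf{(G$_E$3.2)} and $E$-homogeneity of $L_1$ for global consequence, I take $(E,\kappa')$-homogeneous models $\mathfrak{M}^1_0$ and $\mathfrak{M}^1_{\qs}$, for $\qs\in\QS$, on $L_1$-frames. These globally satisfy $\sur{1}{\varphi}\land\sur{1}{\hat{\QS}}$; the first refutes $\sur{1}{\psi}$, and each $\mathfrak{M}^1_{\qs}$ satisfies $\sur{1}{\hat{\qs}}$ somewhere. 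I then form their disjoint union $\mathfrak{M}^1$. I need closure of $L_1$-frames under disjoint unions, which holds because validity is preserved under disjoint unions, and homogeneity is preserved since it is defined per $E$-class. Symmetrically, \textbf{(G$_E$3.3)} yields a homogeneous $\mathfrak{M}^2$ globally satisfying $\sur{2}{\hat{\QS}}$ and realising every $\qs\in\QS$. I take $\kappa'$ copies of both models, so that each quasistate $\qs\in\QS$ is realised by exactly $\kappa'$ many $E$-classes in each (and I also restrict $\mathfrak{M}^1,\mathfrak{M}^2$ to $E$-classes realising members of $\QS$, which is all of them by the global $\sur{i}{\hat{\QS}}$). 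In each model, every world $w$ gets a $\FL$-type $t_w$, defined via the surrogate translation. Every $E$-class $U$ induces a quasistate $\qs_U=\{t_w\mid w\in U\}\in\QS$, and by homogeneity each type in $\qs_U$ is realised by exactly $\kappa'$ worlds of $U$ (a type is a formula, so this count is in $\{0,\kappa'\}$).

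Now I build a bijection $f$ from the worlds of $\mathfrak{M}^1$ to the worlds of $\mathfrak{M}^2$ with three properties: it preserves $E$, maps $E$-classes onto $E$-classes with the same quasistate, and preserves types, i.e. $t_{f(w)}=t_w$. First match $E$-classes with equal quasistates; the counts are both $\kappa'$, so this is possible. Then, within matched classes, match the type-$t$ worlds for each $t\in\qs_U$; again both sides have $\kappa'$ of them. Transporting $R_2$ along $f^{-1}$ gives a frame $(W,R_1,R_2,E)$. Its $\{\Box_1,\Box_E\}$-reduct is an $L_1$-frame and its $\{\Box_2,\Box_E\}$-reduct is isomorphic to an $L_2$-frame, so it is an $L_1\otimes_E L_2$-frame. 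Let the valuation of the original propositional letters follow the types. An induction on $\chi\in\FL$ then shows that $\chi$ holds at $w$ iff $\chi\in t_w$. The cases $\Box_1\chi$ and $\Box_2\chi$ use the fact that truth of the surrogate in $\mathfrak{M}^i$ matches truth of $\sur{i}{(\Box_i\chi)}$, because $f$ preserves types. The case $\Box_E\chi$ uses the quasistate condition. Hence $\varphi$ holds everywhere and $\psi$ fails at the world refuting $\sur{1}{\psi}$.

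I expect the main obstacle to be the cardinality bookkeeping that makes $f$ exist. Three points need care: that the same $\kappa'$ works for both logics (this is why the definition quantifies over all $\kappa'\geq\kappa$); that taking $\kappa'$ copies and disjoint unions keeps the frames inside the respective logics; and that the number of $E$-classes realising each $\qs\in\QS$ is equalised to exactly $\kappa'$ on both sides. The induction itself should be routine once $f$ is type-preserving. Decidability follows because $\QS$ ranges over finitely many candidates and conditions \textbf{(G$_E$3.1)}--\textbf{(G$_E$3.3)} are decidable in $L_1$ and $L_2$.
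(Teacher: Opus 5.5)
Your proposal is correct and follows essentially the same route as the paper. The easy directions are handled as in Lemma~\ref{lemma:global}, and for \textbf{(G$_E$3)}$\Rightarrow$\textbf{(G$_E$2)} you merge two $(E,\kappa')$-homogeneous component models in a single step: you build a type-preserving bijection that maps $E$-classes onto $E$-classes, pull $R_2$ back along it, and check truth by induction, exactly as the paper does.

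One minor point: taking $\kappa'$ copies gives exactly $\kappa'$ classes per quasistate only if each original count is at most $\kappa'$. The bijection only needs the two sides' counts to agree, so taking $\mu$ copies for some $\mu \geq \kappa'$ exceeding all the original counts fixes this (the paper is equally brief at this step).
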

\begin{proof}
	$\text{\bf(G$_E$2)}\Rightarrow\text{\bf(G$_E$1)}$ and $\text{\bf(G$_E$1)}\Rightarrow\text{\bf(G$_E$3)}$ are shown similar to the proof of %global one-variable case; see 
	Lemma~\ref{lemma:global}.

\smallskip	

\noindent $\text{\bf(G$_E$3)}\Rightarrow\text{\bf(G$_E$2)}$ Let $\QS$ be a set of  $\FL$-quasistates  such that {\bf(G$_E$3.1)}--\text{\bf(G$_E$3.3)} hold. 
	Choose an infinite $\kappa$ such that if $\chi\not\models_{L_{i}}^{\ast}\chi'$, then for every $\kappa'\geq \kappa$ there exists an $(E,\kappa')$-homogeneous model $\mathfrak{M}_{i}$ based on an $L_{i}$-frame with $\mathfrak{M}_{i}\models \chi$ and $\mathfrak{M}_{i}\not\models \chi'$, for $i=1,2$. By choosing $\kappa'\geq \kappa$
	sufficiently large and taking appropriate disjoint unions of models,
	it follows from {\bf(G$_E$3.1)}--{\bf(G$_E$3.3)} that, for each \mbox{$i=1,2$}, we can find an $(E_{i},\kappa')$-homogeneous model $\mathfrak{M}_{i}=(\mathfrak{F}_i,V_{i})$ with $\mathfrak{F}_i = (W_{i},R_{i},E_{i})$ such that $\mathfrak{F}_i\models L_{i}$ and the following conditions hold:
\begin{enumerate}
		\item $\mathfrak{M}_{1}\models \sur{1}{\varphi}\land \sur{1}{\smash{\hat{\QS}}}$, $\mathfrak{M}_{1}\not\models\sur{1}{\psi}$, and $|\{[w]_{E_{1}} \mid \mathfrak{M}_{1},w\models \sur{1}{\hat{\qs}}\}|=\kappa'$ for all $\qs\in\QS$;
		\item $\mathfrak{M}_{2}\models \sur{2}{\smash{\hat{\QS}}}$ and $|\{[w]_{E_{2}} \mid \mathfrak{M}_{2},w\models \sur{2}{\hat{\qs}}\}|=\kappa'$ for all $\qs\in\QS$.
\end{enumerate}
	It follows that we find a bijection $F\colon W_{1} \rightarrow W_{2}$ such that, for all $w\in W_{1}$: 
	\begin{enumerate}
		\item there is a (uniquely determined) $\FL$-type $\type$
		such that $\mathfrak{M}_{1},w\models \sur{1}{\type}$ and $\mathfrak{M}_{2},f(w)\models \sur{2}{\type}$;
		\item $[f(w)]_{E_{2}}= \{f(w') \mid w'\in [w]_{E_{1}}\}$. 
	\end{enumerate}
	Now construct a new model $\mathfrak{M}=(\mathfrak{F},V)$ by taking $\mathfrak{M}_{1}$, interpreting $\Box_{2}$ by $f^{-1}[R_{2}]= \{(w,w')\in W_{1} \mid (f(w),f(w'))\in R_{2}\}$, and setting $V(p)=f^{-1}(V_{2}(p))$ for all variables $p$ introduced as surrogates of subformulas of the form $\Box_2\chi$. Then, for $\mathfrak{F}=(W_{1},R_{1}, f^{-1}[R_{2}],E_{1})$, we have $\mathfrak{F}\models L_{1}\otimes_{E}L_{2}$ by definition, and one can show by induction for all formulas~$\chi\in\FL$ and $w\in W_{1}$:
	\begin{equation*}
		\mathfrak{M},w\models {\chi} \qquad\text{ iff }\qquad \mathfrak{M}_{1},w\models\sur{1}{\chi}\qquad\text{ iff }\qquad \mathfrak{M}_{2},f(w)\models\sur{2}{\chi}.
	\end{equation*}
	It follows that $\mathfrak{M}\models \varphi$ and $\mathfrak{M}\not\models\psi$, as required.
\end{proof}

The following lemma shows that the conditions of Theorem~\ref{thm:globalprop} are satisfied for (globally) Kripke complete (semi)commutators with \textbf{S5}.
\begin{lemma}\label{lem:thmappliesto} 
If a logic of the form $[L,\mathbf{S5}]^{\mathit{xd}}$ or $[L,\mathbf{S5}]^{\textit{cd}}$ is \textup{(}globally\textup{)} Kripke complete, then it admits $E$-homogeneous models \textup{(}for global consequence\textup{)}.
\end{lemma}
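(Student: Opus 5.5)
Starting from an arbitrary model that refutes a formula (or a consequence), I will build an $(E,\kappa')$-homogeneous one by a ``multiplication'' construction that stays within the frame class of the commutator. Since the given logic is (globally) Kripke complete, every non-theorem $\vp$ is refuted in some model $\mathfrak{M}=(W,R,E,V)$ based on a frame for $[L,\mathbf{S5}]^{d}$. In the global case, every failure $\vp\not\vdash^\ast\psi$ likewise has a model $\mathfrak{M}\models\vp$ with $\mathfrak{M}\not\models\psi$. Such frames satisfy left commutativity $E\circ R\subseteq R\circ E$, and for $d=\textit{cd}$ also right commutativity. The plan is to replace $\mathfrak{M}$ by its product with a set $K$ of size $\kappa'$. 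Take $W'=W\times K$ and let $(w,k)E'(w',k')$ iff $wEw'$, so that each $E'$-class is $[w]_E\times K$. Let $(w,k)R'(w',k')$ iff $wRw'$ and $k=k'$, and put $V'(p)=V(p)\times K$. The projection $\pi\colon W'\to W$ is then a p-morphism for $R$ and $E$. Hence truth is preserved: $\mathfrak{M}',(w,k)\models\chi$ iff $\mathfrak{M},w\models\chi$. So $\vp$ (or the consequence) is refuted in $\mathfrak{M}'$. Every formula true somewhere in an $E'$-class is true in at least $\kappa'$ points of it, and in at most $|[w]_E|\cdot\kappa'$ points. Taking $\kappa\geq|W|$ — or, in general, first replacing $\mathfrak{M}$ by a model of size at most $\kappa$ — the count is exactly $\kappa'$, so $\mathfrak{M}'$ is $(E,\kappa')$-homogeneous.

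The key step is to show that the new frame $\mathfrak{F}'=(W',R',E')$ still validates the commutator. The $E'$ part is an equivalence relation. For the commutation axioms I check the frame conditions directly. Suppose $(w,k)E'(u,m)R'(v,m)$. Then $wEuRv$, and by $E\circ R\subseteq R\circ E$ there is $w'$ with $wRw'Ev$. Hence $(w,k)R'(w',k)E'(v,m)$, which gives $(\textit{lcom})$. Right commutativity is handled symmetrically.

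The main obstacle is that $L$ itself need not be preserved: the $R$-reduct of $\mathfrak{F}'$ is a disjoint union of $\kappa'$ copies of $(W,R)$. Disjoint unions preserve validity of pure $L$-formulas, so the $L$-part is safe. But $[L,\mathbf{S5}]^{d}$ is the least logic containing $L$, $\mathbf{S5}$ and the commutation axioms. I therefore need that every formula in it is valid on $\mathfrak{F}'$, which holds once all generating axioms are valid on $\mathfrak{F}'$, since validity on a frame is closed under the rules. This is exactly what the previous paragraph establishes.

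If $\mathfrak{M}$ itself is uncountable and larger than the $\kappa$ we wish to fix, I first apply a downward Löwenheim–Skolem step. I view $\mathfrak{M}$ as a two-sorted first-order structure and take an elementary submodel of size at most $2^{\aleph_0}$ containing witnesses, including a world refuting the target. I then argue that the frame of this submodel validates $L$; this is the delicate point. To avoid it, I would instead use the Kripke completeness of the original logic with respect to the class of its frames only through the p-morphic image direction. That is, take $\kappa$ arbitrary infinite and allow counts $|[w]_E|\cdot\kappa'$. I then adjust the definition so that the product with $K$ of size $\kappa'\geq|W|$ still gives exactly $\kappa'$, choosing $\kappa'$ above $|W|$ in each application; this suffices for the use in Lemma~\ref{lem:eexpand}.
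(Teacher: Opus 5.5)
Your construction is the one the paper uses. You take the product $W\times\kappa'$ with $R'$ acting inside each copy, $E'$ ignoring the second coordinate, and $V'(p)=V(p)\times\kappa'$. You get validity of $L$ from closure under disjoint unions, and you check $(\textit{lcom})$ and $(\textit{rcom})$ directly. Your verification of these, and of truth preservation via the projection p-morphism, is fine.

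\textbf{The gap is in the choice of $\kappa$.} The definition of admitting $E$-homogeneous models requires a single infinite $\kappa$, fixed before $\vp$ and $\kappa'$. For every $\kappa'\geq\kappa$ and every non-theorem (respectively, every pair with $\vp\not\vdash^\ast\psi$), there must then be an $(E,\kappa')$-homogeneous countermodel. Your final fix chooses $\kappa'$ above $|W|$ separately for each countermodel. That only yields the weaker property ``for each non-theorem there are homogeneous countermodels for all sufficiently large $\kappa'$''. This suffices for the finitely many countermodels used in Lemma~\ref{lem:eexpand}, but it is not the statement to be proved. The L\"owenheim--Skolem detour is unnecessary, and, as you note yourself, it is problematic because validity of $L$ on a frame is not first-order.

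\textbf{The missing observation is that the language is countable.} There are only countably many non-theorems, and likewise only countably many pairs $(\vp,\psi)$ with $\vp\not\vdash^\ast\psi$. By (global) Kripke completeness, fix one countermodel for each. Let $\kappa$ be the supremum of $\aleph_0$ and the cardinalities of these countably many models; this is exactly how the paper chooses $\kappa$.

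Now take any $\kappa'\geq\kappa$ and any non-theorem, with its fixed countermodel $\mathfrak{M}$. Then $|[w]_E|\le|W|\le\kappa'$, so every nonempty extension $S\times\kappa'$ inside an $E'$-class has size $|S|\cdot\kappa'=\kappa'$. Your product construction therefore gives the required $(E,\kappa')$-homogeneous model, with no need to shrink $\mathfrak{M}$ or to let $\kappa'$ depend on it.
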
 

\begin{proof}
	We give the proof for the local case, the proof in the global case is the same. Assume first that $[L , \textbf{S5}]^{\textit{xd}}$ is Kripke complete. Let $\kappa$ be any infinite cardinal such that, for every $\varphi\not\in [L , \textbf{S5}]^{\textit{xd}}$, there exists a model refuting $\varphi$ based on an $[L , \textbf{S5}]^{\textit{xd}}$-frame of size at most 
	$\kappa$. We show that $\kappa$ is as required. Assume $\kappa'\geq \kappa$ and $\varphi\not\in [L , \textbf{S5}]^{\textit{xd}}$. 
	By definition of $\kappa$, there exists a model $\mathfrak{M}=(\mathfrak{F},V)$ with $\mathfrak{F}=(W,R,E)$, $\mathfrak{M}\not\models\varphi$, $\mathfrak{F}\models [L , \textbf{S5}]^{\textit{xd}}$, and $W$ of size $\kappa$. 
	Then define $\mathfrak{M}_{\kappa'}=(\mathfrak{F}_{\kappa'},V_{\kappa'})$ with $\mathfrak{F}_{\kappa'}=(W',R',E')$ 
	by setting
	\begin{itemize}
		\item $W'= W \times \kappa'$;
		\item $(w,n)R' (w',n')$ if $wRw'$ and $n=n'$, for all $n,n'\in \kappa'$;
		\item $(w,n)E' (w',n')$ if $wEw'$, for all $n,n'\in \kappa'$;
		\item $V_{\kappa'}(p) = V(p) \times \kappa'$, for all proposition letters $p$.
	\end{itemize} 
	\emph{Claim.} $\mathfrak{F}_{\kappa'}\models [L , \textbf{S5}]^{\textit{xd}}$, $\mathfrak{M}_{\kappa'}\not\models \varphi$, and $\mathfrak{M}_{\kappa'}$ is an $(E,\kappa')$-homogeneous model. 
	
	\medskip
	\noindent 	Proof of Claim. By closure under disjoint unions, $(W',R')$ validates $L$. Also, $E'$ is easily seen to be an equivalence relation. We next show that $(W',R',E')$ satisfies $(\textit{lcom}_{R,E})$. Assume $(w,n)E'(w',m)$ and $(w',m)R'(w'',k)$. By definition, $wEw'$, $m=k$, and $w'R w''$. 
	Then, as $(W,R,E)$ satisfies $(\textit{lcom}_{R,E})$, we have a $v'$ with $wRv'$ and $v'Ew''$. But then $(w,n)R'(v',n)$ and $(v',n)E'(w'',k)$, as required.
	
	Let $\chi$ be any formula. Then one can easily show by induction that $\mathfrak{M},w\models \chi$ iff $\mathfrak{M}_{\kappa'},(w,n)\models \chi$ for all $n\in \kappa'$. It follows that $\mathfrak{M}_{\kappa'}\not\models \varphi$ and 
	$\mathfrak{M}_{\kappa'}$ is an $(E,\kappa')$-homogeneous model. This  finishes the proof of the claim and also the proof for logics of the form $[L , \textbf{S5}]^{\textit{xd}}$.
	
	\medskip
	
	Now assume a logic of the form $[L,\mathbf{S5}]^{\textit{cd}}$ is given and Kripke complete. The construction of $\mathfrak{M}_{\kappa'}$ is the same, using a model $\mathfrak{M}=(\mathfrak{F},V)$ with $\mathfrak{F}$ also satisfying $(\textit{rcom}_{R,E})$. So it suffices to show that $\mathfrak{M}_{\kappa'}$ then also satisfies $(\textit{rcom}_{R,E})$: assume $(w,n)R'(w',m)$ and $(w',m)E'(w'',k)$. By definition, $n=m$, $wRw'$, and $w'Ew''$.
	Then, as $(W,R,E)$ satisfies $(\mathit{rcom}_{R,E})$, we have a $v'$ with $wEv'$ and $v'Rw''$. But then $(w,n)E'(v',k)$ and $(v',k)R'(w'',k)$, as required.
\end{proof}

We obtain the following transfer result for (semi)commutators.
\begin{theorem}\label{thm:proptransfer}
Let $d \in\{\textit{xd}, \textit{cd}\}$ and $L_1$ and $L_2$ be consistent propositional modal logics such that $[L_i,\textbf{S5}]^d$ is \textup{(}globally\textup{)} Kripke complete, for both $i = 1,2$. Then
\begin{itemize}
\item $[L_{1},\textbf{S5}]^{d} \otimes_E [L_{2},\textbf{S5}]^{d}$ is \textup{(}globally\textup{)} Kripke complete\textup{;}
\item $[L_{1},\textbf{S5}]^{d} \otimes_E [L_{2},\textbf{S5}]^{d}$ is \textup{(}globally\textup{)} decidable iff $[L_{1},\textbf{S5}]^{d}$ and $[L_{2},\textbf{S5}]^{d}$ are \textup{(}globally\textup{)} decidable.
\end{itemize}
\end{theorem}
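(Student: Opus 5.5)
The plan is to obtain Theorem~\ref{thm:proptransfer} as a direct combination of Lemma~\ref{lem:thmappliesto} and Theorem~\ref{thm:globalprop}, adding one extra argument for the converse direction of the decidability equivalence.

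\emph{Completeness and the `if' direction of decidability.} Since $[L_i,\mathbf{S5}]^d$ is (globally) Kripke complete for $i=1,2$, Lemma~\ref{lem:thmappliesto} shows that each $[L_i,\mathbf{S5}]^d$ admits $E$-homogeneous models (for global consequence). Both logics are bimodal in the languages $\mathcal{L}_{i,E}$ with the shared operator $\Box_E$, so Theorem~\ref{thm:globalprop} applies directly. It yields (global) Kripke completeness of $[L_{1},\mathbf{S5}]^{d} \otimes_E [L_{2},\mathbf{S5}]^{d}$. It also yields (global) decidability of this logic whenever both components are (globally) decidable. If $L_i$ is polymodal rather than monomodal, I would note that the proofs of Lemma~\ref{lem:eexpand} and Lemma~\ref{lem:thmappliesto} work verbatim with several $\Box$'s, since the lcom/rcom arguments are carried out for each modality separately.

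\emph{The `only if' direction of decidability.} I would show that the fusion is a conservative extension of each component. Then, for $\varphi\in\mathcal{L}_{1,E}$, membership in $[L_1,\mathbf{S5}]^d$ (respectively, $\varphi\vdash^*\psi$) reduces to the same question in the fusion, and decidability transfers back. To prove conservativity, take a (globally) refuting $(E,\kappa')$-homogeneous model $\mathfrak{M}_1$ of $[L_1,\mathbf{S5}]^d$ based on $\mathfrak{F}_1=(W_1,R_1,E_1)$. Next, choose any model of $[L_2,\mathbf{S5}]^d$ based on a frame $(W_2,R_2,E_2)$ whose $E_2$-classes can be matched bijectively with those of $E_1$ with equal cardinalities. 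Such a model exists because $L_2$ is consistent, so $[L_2,\mathbf{S5}]^d$ has some frame. By the product construction in the proof of Lemma~\ref{lem:thmappliesto}, together with disjoint unions, we can inflate that frame so that every $E_2$-class has size $\kappa'$ and there are as many classes as in $\mathfrak{F}_1$. This blow-up step needs $\mathfrak{F}_1$ to have all $E_1$-classes of size $\kappa'$, which holds since homogeneity with $\top$ forces each class to have size $\kappa'$. Transporting $R_2$ along a class-respecting bijection gives a frame for the fusion. Formulas of $\mathcal{L}_{1,E}$ are evaluated exactly as in $\mathfrak{M}_1$, so $\varphi$ is refuted there.

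\emph{Expected main obstacle.} The delicate part is the cardinality matching in the conservativity argument. We need the number of $E$-classes and their sizes to agree, and validity of $[L_2,\mathbf{S5}]^d$ must survive the inflation, which includes preservation of lcom and rcom under the $W\times\kappa'$ construction and under disjoint unions. I would handle this by reusing exactly the Claim in the proof of Lemma~\ref{lem:thmappliesto}, and closure under disjoint unions of frame validity. Consistency of $L_1$ and $L_2$ is used precisely here, to guarantee that the second component has at least one frame to graft.
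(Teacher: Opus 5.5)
Your proposal is correct and follows the paper's route. The paper obtains the theorem by feeding Lemma~\ref{lem:thmappliesto} into Theorem~\ref{thm:globalprop}, and the `only if' half rests on conservativity of the fusion over each component, which is where consistency of the $L_i$ is used; the paper leaves that step implicit, and your merging argument in the style of Lemma~\ref{lem:eexpand} supplies it.

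Two small points in that argument need tightening:
\begin{itemize}
\item $\kappa'$ must be at least the size of the $E$-classes of the auxiliary $[L_2,\mathbf{S5}]^d$-frame, so the $W\times\kappa'$ inflation really gives classes of size $\kappa'$. Alternatively, apply homogeneity directly to $\bot\notin[L_2,\mathbf{S5}]^d$; consistency of $[L_2,\mathbf{S5}]^d$ follows from that of $L_2$ by erasing $\Box_E$.
\item The numbers of $E$-classes must be matched by taking disjoint unions of copies on \emph{both} sides, since inflating only the second frame cannot reduce its class count.
\end{itemize}
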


% !TEX root =  fusions-submission.tex

\section{Discussion}
We have established preservation and non‑preservation results for one‑variable first‑order modal logics and for related propositional modal logics. Many questions remain open:
\begin{itemize}
	\item The behaviour of fusions of full first‑order modal logic is still largely 
unexplored. A transfer theorem for canonicity is given in~\cite{shehtman2024fusions}, but, for example, the transfer of 
completeness is open.
    \item Monodic fragments extend one‑variable modal logics by allowing more expressive first‑order fragments while restricting modal operators to formulas with at most one free variable
\cite{KuruczWolterZakharyaschevGabbay2003,DBLP:journals/corr/abs-2509-08165}. 
We have shown that decidability does not transfer from the one‑variable 
fragment with equality to the monodic fragment over the two‑variable fragment 
of FO with counting. Positive preservation results are known~\cite{DBLP:journals/tocl/GhilardiNZ08},
and it is straightforward to adapt our proofs to, for instance, the monodic fragments based on the two-variable fragment of $\mathbf{FO}$ without equality or description logics such as \textit{ALC}, \textit{ALCI}, and \textit{SHI} with the universal role~\cite{DBLP:books/daglib/0041477,KuruczWolterZakharyaschevGabbay2003}. However, a systematic study of fusions of monodic fragments is still missing.
\item It would be valuable to identify sufficient conditions under which Kripke completeness and decidability transfer to fusions of one‑variable modal logics with equality. We note that many standard polymodal fusions with equality such 
as polymodal \textbf{K} and \textbf{S5} are in fact decidable~\cite{DBLP:journals/corr/abs-2509-08165}.
\item We have seen that, in many cases, completeness and decidability are preserved under fusions of propositional modal logics sharing an \textbf{S5} operator. Can this be extended to other modal logics?
\item Our analysis focused on Kripke complete modal logics. Since first‑order 
modal logics are often incomplete under standard Kripke semantics, it would be interesting to obtain preservation results that do not presuppose Kripke 
completeness. Techniques from~\cite{DBLP:conf/aiml/Wolter96,DBLP:journals/iandc/BaaderGT06} may be helpful here.
\end{itemize}

\bibliographystyle{eptcs}
\bibliography{fusions}

\begin{thebibliography}{10}
\providecommand{\bibitemdeclare}[2]{}
\providecommand{\surnamestart}{}
\providecommand{\surnameend}{}
\providecommand{\urlprefix}{Available at }
\providecommand{\url}[1]{\texttt{#1}}
\providecommand{\href}[2]{\texttt{#2}}
\providecommand{\urlalt}[2]{\href{#1}{#2}}
\providecommand{\doi}[1]{doi:\urlalt{https://doi.org/#1}{#1}}
\providecommand{\eprint}[1]{arXiv:\urlalt{https://arxiv.org/abs/#1}{#1}}
\providecommand{\bibinfo}[2]{#2}

\bibitemdeclare{article}{DBLP:journals/corr/abs-2509-08165}
\bibitem{DBLP:journals/corr/abs-2509-08165}
\bibinfo{author}{Alessandro \surnamestart Artale\surnameend},
  \bibinfo{author}{Christopher \surnamestart Hampson\surnameend},
  \bibinfo{author}{Roman \surnamestart Kontchakov\surnameend},
  \bibinfo{author}{Andrea \surnamestart Mazzullo\surnameend} \&
  \bibinfo{author}{Frank \surnamestart Wolter\surnameend}
  (\bibinfo{year}{2025}): \emph{\bibinfo{title}{Decidability in First-Order
  Modal Logic with Non-Rigid Constants and Definite Descriptions}}.
\newblock {\slshape \bibinfo{journal}{CoRR}} \bibinfo{volume}{abs/2509.08165},
  \doi{10.48550/arXiv.2509.08165}.
\newblock \eprint{2509.08165}.

\bibitemdeclare{article}{DBLP:journals/iandc/BaaderGT06}
\bibitem{DBLP:journals/iandc/BaaderGT06}
\bibinfo{author}{Franz \surnamestart Baader\surnameend},
  \bibinfo{author}{Silvio \surnamestart Ghilardi\surnameend} \&
  \bibinfo{author}{Cesare \surnamestart Tinelli\surnameend}
  (\bibinfo{year}{2006}): \emph{\bibinfo{title}{A new combination procedure for
  the word problem that generalizes fusion decidability results in modal
  logics}}.
\newblock {\slshape \bibinfo{journal}{Inf. Comput.}}
  \bibinfo{volume}{204}(\bibinfo{number}{10}), pp. \bibinfo{pages}{1413--1452},
  \doi{10.1016/J.IC.2005.05.009}.

\bibitemdeclare{book}{DBLP:books/daglib/0041477}
\bibitem{DBLP:books/daglib/0041477}
\bibinfo{author}{Franz \surnamestart Baader\surnameend}, \bibinfo{author}{Ian
  \surnamestart Horrocks\surnameend}, \bibinfo{author}{Carsten \surnamestart
  Lutz\surnameend} \& \bibinfo{author}{Ulrike \surnamestart Sattler\surnameend}
  (\bibinfo{year}{2017}): \emph{\bibinfo{title}{An Introduction to Description
  Logic}}.
\newblock \bibinfo{publisher}{Cambridge University Press},
  \doi{10.1017/9781139025355}.

\bibitemdeclare{article}{DBLP:journals/jair/BaaderLSW02}
\bibitem{DBLP:journals/jair/BaaderLSW02}
\bibinfo{author}{Franz \surnamestart Baader\surnameend},
  \bibinfo{author}{Carsten \surnamestart Lutz\surnameend},
  \bibinfo{author}{Holger \surnamestart Sturm\surnameend} \&
  \bibinfo{author}{Frank \surnamestart Wolter\surnameend}
  (\bibinfo{year}{2002}): \emph{\bibinfo{title}{Fusions of Description Logics
  and Abstract Description Systems}}.
\newblock {\slshape \bibinfo{journal}{J. Artif. Intell. Res.}}
  \bibinfo{volume}{16}, pp. \bibinfo{pages}{1--58}, \doi{10.1613/JAIR.919}.

\bibitemdeclare{article}{BezhanishviliKhan2026}
\bibitem{BezhanishviliKhan2026}
\bibinfo{author}{Guram \surnamestart Bezhanishvili\surnameend} \&
  \bibinfo{author}{Mashiath \surnamestart Khan\surnameend}
  (\bibinfo{year}{2024}): \emph{\bibinfo{title}{The Monadic Grzegorczyk
  Logic}}.
\newblock {\slshape \bibinfo{journal}{CoRR}} \bibinfo{volume}{abs/2412.10854},
  \doi{10.48550/arXiv.2412.10854}.
\newblock \eprint{2412.10854}.

\bibitemdeclare{inproceedings}{DBLP:conf/calco/DahlqvistP11}
\bibitem{DBLP:conf/calco/DahlqvistP11}
\bibinfo{author}{Fredrik \surnamestart Dahlqvist\surnameend} \&
  \bibinfo{author}{Dirk \surnamestart Pattinson\surnameend}
  (\bibinfo{year}{2011}): \emph{\bibinfo{title}{On the Fusion of Coalgebraic
  Logics}}.
\newblock In: {\slshape \bibinfo{booktitle}{Proc.\ of the 4th Int.\ Conf. on
  Algebra and Coalgebra in Computer Science ({CALCO} 2011)}}, {\slshape
  \bibinfo{series}{Lecture Notes in Computer Science}} \bibinfo{volume}{6859},
  \bibinfo{publisher}{Springer}, pp. \bibinfo{pages}{161--175},
  \doi{10.1007/978-3-642-22944-2\_12}.

\bibitemdeclare{incollection}{Davis1977}
\bibitem{Davis1977}
\bibinfo{author}{Martin \surnamestart Davis\surnameend} (\bibinfo{year}{1977}):
  \emph{\bibinfo{title}{Unsolvable Problems}}.
\newblock In \bibinfo{editor}{Jon \surnamestart Barwise\surnameend}, editor:
  {\slshape \bibinfo{booktitle}{Handbook of Mathematical Logic}},
  \bibinfo{publisher}{North-Holland}, \bibinfo{address}{Amsterdam}, pp.
  \bibinfo{pages}{567--594}, \doi{10.1016/S0049-237X(08)71115-7}.

\bibitemdeclare{article}{Degtayrev2002}
\bibitem{Degtayrev2002}
\bibinfo{author}{Anatoli \surnamestart Degtyarev\surnameend},
  \bibinfo{author}{Michael \surnamestart Fisher\surnameend} \&
  \bibinfo{author}{Alexei \surnamestart Lisitsa\surnameend}
  (\bibinfo{year}{2002}): \emph{\bibinfo{title}{Equality and Monodic
  First-Order Temporal Logic}}.
\newblock {\slshape \bibinfo{journal}{Stud Logica}}
  \bibinfo{volume}{72}(\bibinfo{number}{2}), pp. \bibinfo{pages}{147--156},
  \doi{10.1023/A:1021352309671}.

\bibitemdeclare{incollection}{FineSchurz1996}
\bibitem{FineSchurz1996}
\bibinfo{author}{Kit \surnamestart Fine\surnameend} \& \bibinfo{author}{Gerhard
  \surnamestart Schurz\surnameend} (\bibinfo{year}{1996}):
  \emph{\bibinfo{title}{Transfer Theorems for Multimodal Logics}}.
\newblock In \bibinfo{editor}{B.~Jack \surnamestart Copeland\surnameend},
  editor: {\slshape \bibinfo{booktitle}{Logic and Reality: Essays on the Legacy
  of Arthur Prior}}, \bibinfo{publisher}{Oxford University Press}, pp.
  \bibinfo{pages}{169--213}.

\bibitemdeclare{article}{Gabbay2003Fibred}
\bibitem{Gabbay2003Fibred}
\bibinfo{author}{Dov~M. \surnamestart Gabbay\surnameend}
  (\bibinfo{year}{1996}): \emph{\bibinfo{title}{Fibred Semantics and the
  Weaving of Logics, Part 1: Modal and Intuitionistic Logics}}.
\newblock {\slshape \bibinfo{journal}{J. Symb. Log.}}
  \bibinfo{volume}{61}(\bibinfo{number}{4}), pp. \bibinfo{pages}{1057--1120},
  \doi{10.2307/2275807}.

\bibitemdeclare{article}{DBLP:journals/igpl/GabbayS98}
\bibitem{DBLP:journals/igpl/GabbayS98}
\bibinfo{author}{Dov~M. \surnamestart Gabbay\surnameend} \&
  \bibinfo{author}{Valentin~B. \surnamestart Shehtman\surnameend}
  (\bibinfo{year}{1998}): \emph{\bibinfo{title}{Products of Modal Logics, Part
  1}}.
\newblock {\slshape \bibinfo{journal}{Log. J. {IGPL}}}
  \bibinfo{volume}{6}(\bibinfo{number}{1}), pp. \bibinfo{pages}{73--146},
  \doi{10.1093/JIGPAL/6.1.73}.

\bibitemdeclare{article}{DBLP:journals/apal/GabelaiaKWZ06}
\bibitem{DBLP:journals/apal/GabelaiaKWZ06}
\bibinfo{author}{David \surnamestart Gabelaia\surnameend}, \bibinfo{author}{Agi
  \surnamestart Kurucz\surnameend}, \bibinfo{author}{Frank \surnamestart
  Wolter\surnameend} \& \bibinfo{author}{Michael \surnamestart
  Zakharyaschev\surnameend} (\bibinfo{year}{2006}):
  \emph{\bibinfo{title}{Non-primitive recursive decidability of products of
  modal logics with expanding domains}}.
\newblock {\slshape \bibinfo{journal}{Ann. Pure Appl. Log.}}
  \bibinfo{volume}{142}(\bibinfo{number}{1-3}), pp. \bibinfo{pages}{245--268},
  \doi{10.1016/J.APAL.2006.01.001}.

\bibitemdeclare{article}{DBLP:journals/tocl/GhilardiNZ08}
\bibitem{DBLP:journals/tocl/GhilardiNZ08}
\bibinfo{author}{Silvio \surnamestart Ghilardi\surnameend},
  \bibinfo{author}{Enrica \surnamestart Nicolini\surnameend} \&
  \bibinfo{author}{Daniele \surnamestart Zucchelli\surnameend}
  (\bibinfo{year}{2008}): \emph{\bibinfo{title}{A comprehensive combination
  framework}}.
\newblock {\slshape \bibinfo{journal}{{ACM} Trans. Comput. Log.}}
  \bibinfo{volume}{9}(\bibinfo{number}{2}), pp. \bibinfo{pages}{8:1--8:54},
  \doi{10.1145/1342991.1342992}.

\bibitemdeclare{inproceedings}{DBLP:conf/lpar/GhilardiS03}
\bibitem{DBLP:conf/lpar/GhilardiS03}
\bibinfo{author}{Silvio \surnamestart Ghilardi\surnameend} \&
  \bibinfo{author}{Luigi \surnamestart Santocanale\surnameend}
  (\bibinfo{year}{2003}): \emph{\bibinfo{title}{Algebraic and Model Theoretic
  Techniques for Fusion Decidability in Modal Logics}}.
\newblock In: {\slshape \bibinfo{booktitle}{Proc.\ of the 10th Int.\ Conf.\ on
  Logic for Programming, Artificial Intelligence, and Reasoning (LPAR 2003)}},
  {\slshape \bibinfo{series}{Lecture Notes in Computer Science}}
  \bibinfo{volume}{2850}, \bibinfo{publisher}{Springer}, pp.
  \bibinfo{pages}{152--166}, \doi{10.1007/978-3-540-39813-4\_10}.

\bibitemdeclare{book}{goldblatt1987logics}
\bibitem{goldblatt1987logics}
\bibinfo{author}{Robert \surnamestart Goldblatt\surnameend}
  (\bibinfo{year}{1987}): \emph{\bibinfo{title}{Logics of time and
  computation}}.
\newblock \bibinfo{publisher}{Center for the Study of Language and
  Information}.

\bibitemdeclare{article}{DBLP:journals/logcom/GorankoP92}
\bibitem{DBLP:journals/logcom/GorankoP92}
\bibinfo{author}{Valentin \surnamestart Goranko\surnameend} \&
  \bibinfo{author}{Solomon \surnamestart Passy\surnameend}
  (\bibinfo{year}{1992}): \emph{\bibinfo{title}{Using the Universal Modality:
  Gains and Questions}}.
\newblock {\slshape \bibinfo{journal}{J. Log. Comput.}}
  \bibinfo{volume}{2}(\bibinfo{number}{1}), pp. \bibinfo{pages}{5--30},
  \doi{10.1093/LOGCOM/2.1.5}.

\bibitemdeclare{phdthesis}{DBLP:phd/ethos/Hampson16}
\bibitem{DBLP:phd/ethos/Hampson16}
\bibinfo{author}{Christopher \surnamestart Hampson\surnameend}
  (\bibinfo{year}{2016}): \emph{\bibinfo{title}{Two-dimensional modal logics
  with difference relations}}.
\newblock Ph.D. thesis, \bibinfo{school}{King's College London, {UK}}.
\newblock
  \urlprefix\url{https://kclpure.kcl.ac.uk/portal/en/studentTheses/two-dimensional-modal-logics-with-difference-relations/}.

\bibitemdeclare{inproceedings}{DBLP:conf/aiml/Hampson18}
\bibitem{DBLP:conf/aiml/Hampson18}
\bibinfo{author}{Christopher \surnamestart Hampson\surnameend}
  (\bibinfo{year}{2018}): \emph{\bibinfo{title}{The Bimodal Logic of Commuting
  Difference Operators Is Decidable}}.
\newblock In: {\slshape \bibinfo{booktitle}{Proc.\ of the 12th Conf.\ on
  Advances in Modal Logic (AiML 2018)}}, \bibinfo{publisher}{College
  Publications}, pp. \bibinfo{pages}{311--326}.
\newblock \urlprefix\url{http://www.aiml.net/volumes/volume12/Hampson.pdf}.

\bibitemdeclare{article}{DBLP:journals/tocl/HampsonK15}
\bibitem{DBLP:journals/tocl/HampsonK15}
\bibinfo{author}{Christopher \surnamestart Hampson\surnameend} \&
  \bibinfo{author}{Agi \surnamestart Kurucz\surnameend} (\bibinfo{year}{2015}):
  \emph{\bibinfo{title}{Undecidable Propositional Bimodal Logics and
  One-Variable First-Order Linear Temporal Logics with Counting}}.
\newblock {\slshape \bibinfo{journal}{{ACM} Trans. Comput. Log.}}
  \bibinfo{volume}{16}(\bibinfo{number}{3}), pp. \bibinfo{pages}{27:1--27:36},
  \doi{10.1145/2757285}.

\bibitemdeclare{article}{arxiv}
\bibitem{arxiv}
\bibinfo{author}{Roman \surnamestart Kontchakov\surnameend},
  \bibinfo{author}{Dmitry \surnamestart Shkatov\surnameend} \&
  \bibinfo{author}{Frank \surnamestart Wolter\surnameend}
  (\bibinfo{year}{2026}): \emph{\bibinfo{title}{Fusions of One-Variable
  First-Order Modal Logics (Extended Version)}}.
\newblock {\slshape \bibinfo{journal}{CoRR}} \bibinfo{volume}{abs/2603.04512},
  \doi{10.48550/arXiv.2603.04512}.
\newblock \eprint{2603.04512}.

\bibitemdeclare{article}{DBLP:journals/jsyml/KrachtW91}
\bibitem{DBLP:journals/jsyml/KrachtW91}
\bibinfo{author}{Marcus \surnamestart Kracht\surnameend} \&
  \bibinfo{author}{Frank \surnamestart Wolter\surnameend}
  (\bibinfo{year}{1991}): \emph{\bibinfo{title}{Properties of Independently
  Axiomatizable Bimodal Logics}}.
\newblock {\slshape \bibinfo{journal}{J. Symb. Log.}}
  \bibinfo{volume}{56}(\bibinfo{number}{4}), pp. \bibinfo{pages}{1469--1485},
  \doi{10.2307/2275487}.

\bibitemdeclare{article}{DBLP:journals/jsyml/KrachtW99}
\bibitem{DBLP:journals/jsyml/KrachtW99}
\bibinfo{author}{Marcus \surnamestart Kracht\surnameend} \&
  \bibinfo{author}{Frank \surnamestart Wolter\surnameend}
  (\bibinfo{year}{1999}): \emph{\bibinfo{title}{Normal Monomodal Logics Can
  Simulate All Others}}.
\newblock {\slshape \bibinfo{journal}{J. Symb. Log.}}
  \bibinfo{volume}{64}(\bibinfo{number}{1}), pp. \bibinfo{pages}{99--138},
  \doi{10.2307/2586754}.

\bibitemdeclare{incollection}{Kurucz2007Combining}
\bibitem{Kurucz2007Combining}
\bibinfo{author}{Agi \surnamestart Kurucz\surnameend} (\bibinfo{year}{2007}):
  \emph{\bibinfo{title}{Combining Modal Logics}}.
\newblock In \bibinfo{editor}{Patrick \surnamestart Blackburn\surnameend},
  \bibinfo{editor}{Johan \surnamestart van Benthem\surnameend} \&
  \bibinfo{editor}{Frank \surnamestart Wolter\surnameend}, editors: {\slshape
  \bibinfo{booktitle}{Handbook of Modal Logic}}, \bibinfo{publisher}{Elsevier},
  pp. \bibinfo{pages}{869--924}, \doi{10.1016/S1570-2464(07)80018-8}.

\bibitemdeclare{article}{DBLP:journals/lmcs/KuruczWZ25}
\bibitem{DBLP:journals/lmcs/KuruczWZ25}
\bibinfo{author}{Agi \surnamestart Kurucz\surnameend}, \bibinfo{author}{Frank
  \surnamestart Wolter\surnameend} \& \bibinfo{author}{Michael \surnamestart
  Zakharyaschev\surnameend} (\bibinfo{year}{2025}):
  \emph{\bibinfo{title}{Deciding the Existence of Interpolants and Definitions
  in First-Order Modal Logic}}.
\newblock {\slshape \bibinfo{journal}{Log. Methods Comput. Sci.}}
  \bibinfo{volume}{21}(\bibinfo{number}{4}), \doi{10.46298/LMCS-21(4:6)2025}.

\bibitemdeclare{book}{KuruczWolterZakharyaschevGabbay2003}
\bibitem{KuruczWolterZakharyaschevGabbay2003}
\bibinfo{author}{Agi \surnamestart Kurucz\surnameend}, \bibinfo{author}{Frank
  \surnamestart Wolter\surnameend}, \bibinfo{author}{Michael \surnamestart
  Zakharyaschev\surnameend} \& \bibinfo{author}{Dov~M. \surnamestart
  Gabbay\surnameend} (\bibinfo{year}{2003}):
  \emph{\bibinfo{title}{Many-Dimensional Modal Logics: Theory and
  Applications}}.
\newblock {\slshape \bibinfo{series}{Studies in Logic and the Foundations of
  Mathematics}} \bibinfo{volume}{148}, \bibinfo{publisher}{North Holland},
  \bibinfo{address}{Amsterdam}.

\bibitemdeclare{article}{DBLP:journals/jolli/Pratt-Hartmann05}
\bibitem{DBLP:journals/jolli/Pratt-Hartmann05}
\bibinfo{author}{Ian \surnamestart Pratt{-}Hartmann\surnameend}
  (\bibinfo{year}{2005}): \emph{\bibinfo{title}{Complexity of the Two-Variable
  Fragment with Counting Quantifiers}}.
\newblock {\slshape \bibinfo{journal}{J. Log. Lang. Inf.}}
  \bibinfo{volume}{14}(\bibinfo{number}{3}), pp. \bibinfo{pages}{369--395},
  \doi{10.1007/S10849-005-5791-1}.

\bibitemdeclare{article}{DBLP:journals/jsyml/Rijke92}
\bibitem{DBLP:journals/jsyml/Rijke92}
\bibinfo{author}{Maarten \surnamestart de~Rijke\surnameend}
  (\bibinfo{year}{1992}): \emph{\bibinfo{title}{The Modal Logic of
  Inequality}}.
\newblock {\slshape \bibinfo{journal}{J. Symb. Log.}}
  \bibinfo{volume}{57}(\bibinfo{number}{2}), pp. \bibinfo{pages}{566--584},
  \doi{10.2307/2275293}.

\bibitemdeclare{inproceedings}{ShehtmanShkatov2019}
\bibitem{ShehtmanShkatov2019}
\bibinfo{author}{Valentin \surnamestart Shehtman\surnameend} \&
  \bibinfo{author}{Dmitry \surnamestart Shkatov\surnameend}
  (\bibinfo{year}{2019}): \emph{\bibinfo{title}{On one-variable fragments of
  modal predicate logics}}.
\newblock In: {\slshape \bibinfo{booktitle}{Proc.\ of SYSMICS 2019}},
  \bibinfo{publisher}{ILLC, University of Amsterdam}, pp.
  \bibinfo{pages}{129--132}.

\bibitemdeclare{article}{Shehtman2023}
\bibitem{Shehtman2023}
\bibinfo{author}{Valentin \surnamestart Shehtman\surnameend} \&
  \bibinfo{author}{Dmitry \surnamestart Shkatov\surnameend}
  (\bibinfo{year}{2023}): \emph{\bibinfo{title}{Semiproducts, Products, and
  Modal Predicate Logics: Some Examples}}.
\newblock {\slshape \bibinfo{journal}{Doklady. Mathematics}}
  \bibinfo{volume}{108}(\bibinfo{number}{2}), pp. \bibinfo{pages}{411--418},
  \doi{10.1134/S1064562423701296}.

\bibitemdeclare{article}{shehtman2024fusions}
\bibitem{shehtman2024fusions}
\bibinfo{author}{Valentin \surnamestart Shehtman\surnameend} \&
  \bibinfo{author}{Dmitry \surnamestart Shkatov\surnameend}
  (\bibinfo{year}{2024}): \emph{\bibinfo{title}{Fusions of Canonical Predicate
  Modal Logics Are Canonical}}.
\newblock {\slshape \bibinfo{journal}{Advances in Modal Logic 2024, Short
  Papers}}, pp. \bibinfo{pages}{51--56}.

\bibitemdeclare{article}{Thomason1980}
\bibitem{Thomason1980}
\bibinfo{author}{Stephen~K. \surnamestart Thomason\surnameend}
  (\bibinfo{year}{1980}): \emph{\bibinfo{title}{Independent Propositional Modal
  Logics}}.
\newblock {\slshape \bibinfo{journal}{Studia Logica}} \bibinfo{volume}{39}, pp.
  \bibinfo{pages}{143--144}, \doi{10.1007/BF00370317}.

\bibitemdeclare{article}{Wajsberg33}
\bibitem{Wajsberg33}
\bibinfo{author}{M.~\surnamestart Wajsberg\surnameend} (\bibinfo{year}{1933}):
  \emph{\bibinfo{title}{{E}in erweiterter {K}lassenkalk\"{u}l}}.
\newblock {\slshape \bibinfo{journal}{Monatshefte f\"{u}r {M}athematik und
  {P}hysik}} \bibinfo{volume}{40}, pp. \bibinfo{pages}{113--126},
  \doi{10.1007/BF01708856}.

\bibitemdeclare{inproceedings}{DBLP:conf/aiml/Wolter96}
\bibitem{DBLP:conf/aiml/Wolter96}
\bibinfo{author}{Frank \surnamestart Wolter\surnameend} (\bibinfo{year}{1996}):
  \emph{\bibinfo{title}{Fusions of Modal Logics Revisited}}.
\newblock In: {\slshape \bibinfo{booktitle}{Proc.\ of the 1st Workshop on
  Advances in Modal Logic (AiML 1996)}}, \bibinfo{publisher}{{CSLI}
  Publications}, pp. \bibinfo{pages}{361--379}.

\bibitemdeclare{incollection}{vonWright1979Place}
\bibitem{vonWright1979Place}
\bibinfo{author}{Georg~Henrik \surnamestart von Wright\surnameend}
  (\bibinfo{year}{1979}): \emph{\bibinfo{title}{A Modal Logic of Place}}.
\newblock In \bibinfo{editor}{Ernest \surnamestart Sosa\surnameend}, editor:
  {\slshape \bibinfo{booktitle}{The Philosophy of Nicholas Rescher}},
  \bibinfo{publisher}{D. Reidel}, \bibinfo{address}{Dordrecht}, pp.
  \bibinfo{pages}{65--73}, \doi{10.1007/978-94-009-9407-2_9}.

\bibitemdeclare{article}{DBLP:journals/jsyml/ZanardoSS01}
\bibitem{DBLP:journals/jsyml/ZanardoSS01}
\bibinfo{author}{Alberto \surnamestart Zanardo\surnameend},
  \bibinfo{author}{Am{\'{\i}}lcar \surnamestart Sernadas\surnameend} \&
  \bibinfo{author}{Cristina \surnamestart Sernadas\surnameend}
  (\bibinfo{year}{2001}): \emph{\bibinfo{title}{Fibring: Completeness
  Preservation}}.
\newblock {\slshape \bibinfo{journal}{J. Symb. Log.}}
  \bibinfo{volume}{66}(\bibinfo{number}{1}), pp. \bibinfo{pages}{414--439},
  \doi{10.2307/2694931}.

\end{thebibliography}
\end{document}